\def\dOi{13(1:14)2017}
\newdimen\proofrulebreadth \proofrulebreadth=.05em
\newdimen\proofdotseparation \proofdotseparation=1.25ex
\newdimen\proofrulebaseline \proofrulebaseline=2ex
\let\then\relax
\def\hfi{\hskip0pt plus.0001fil}
\mathchardef\squigto="3A3B
\newif\ifinsideprooftree\insideprooftreefalse
\newif\ifonleftofproofrule\onleftofproofrulefalse
\newif\ifproofdots\proofdotsfalse
\newif\ifdoubleproof\doubleprooffalse
\let\wereinproofbit\relax
\newdimen\shortenproofleft
\newdimen\shortenproofright
\newdimen\proofbelowshift
\newbox\proofabove
\newbox\proofbelow
\newbox\proofrulename
\def\shiftproofbelow{\let\next\relax\afterassignment\setshiftproofbelow\dimen0 }
\def\shiftproofbelowneg{\def\next{\multiply\dimen0 by-1 }%
\afterassignment\setshiftproofbelow\dimen0 }
\def\setshiftproofbelow{\next\proofbelowshift=\dimen0 }
\def\setproofrulebreadth{\proofrulebreadth}
\def\prooftree{
%
\ifnum  \lastpenalty=1
\then   \unpenalty
\else   \onleftofproofrulefalse
\fi
%
\ifonleftofproofrule
\else   \ifinsideprooftree
        \then   \hskip.5em plus1fil
        \fi
\fi
%
\bgroup
\setbox\proofbelow=\hbox{}\setbox\proofrulename=\hbox{}%
\let\justifies\proofover\let\leadsto\proofoverdots\let\Justifies\proofoverdbl
\let\using\proofusing\let\[\prooftree
\ifinsideprooftree\let\]\endprooftree\fi
\proofdotsfalse\doubleprooffalse
\let\thickness\setproofrulebreadth
\let\shiftright\shiftproofbelow \let\shift\shiftproofbelow
\let\shiftleft\shiftproofbelowneg
\let\ifwasinsideprooftree\ifinsideprooftree
\insideprooftreetrue
%
\setbox\proofabove=\hbox\bgroup$\displaystyle 
\let\wereinproofbit\prooftree
%
\shortenproofleft=0pt \shortenproofright=0pt \proofbelowshift=0pt
%
\onleftofproofruletrue\penalty1
}
\def\eproofbit{
%
\ifx    \wereinproofbit\prooftree
\then   \ifcase \lastpenalty
        \then   \shortenproofright=0pt  
        \or     \unpenalty\hfil         
        \or     \unpenalty\unskip       
        \else   \shortenproofright=0pt  
        \fi
\fi
%
\global\dimen0=\shortenproofleft
\global\dimen1=\shortenproofright
\global\dimen2=\proofrulebreadth
\global\dimen3=\proofbelowshift
\global\dimen4=\proofdotseparation
\global\count255=\proofdotnumber
%
$\egroup  
%
\shortenproofleft=\dimen0
\shortenproofright=\dimen1
\proofrulebreadth=\dimen2
\proofbelowshift=\dimen3
\proofdotseparation=\dimen4
\proofdotnumber=\count255
}
\def\proofover{
\eproofbit 
\setbox\proofbelow=\hbox\bgroup 
\let\wereinproofbit\proofover
$\displaystyle
}%
\def\proofoverdbl{
\eproofbit 
\doubleprooftrue
\setbox\proofbelow=\hbox\bgroup 
\let\wereinproofbit\proofoverdbl
$\displaystyle
}%
\def\proofoverdots{
\eproofbit 
\proofdotstrue
\setbox\proofbelow=\hbox\bgroup 
\let\wereinproofbit\proofoverdots
$\displaystyle
}%
\def\proofusing{
\eproofbit 
\setbox\proofrulename=\hbox\bgroup 
\let\wereinproofbit\proofusing
\kern0.3em$
}
\def\endprooftree{
\eproofbit 
  \dimen5 =0pt
%
\dimen0=\wd\proofabove \advance\dimen0-\shortenproofleft
\advance\dimen0-\shortenproofright
%
\dimen1=.5\dimen0 \advance\dimen1-.5\wd\proofbelow
\dimen4=\dimen1
\advance\dimen1\proofbelowshift \advance\dimen4-\proofbelowshift
%
\ifdim  \dimen1<0pt
\then   \advance\shortenproofleft\dimen1
        \advance\dimen0-\dimen1
        \dimen1=0pt
        \ifdim  \shortenproofleft<0pt
        \then   \setbox\proofabove=\hbox{%
                        \kern-\shortenproofleft\unhbox\proofabove}%
                \shortenproofleft=0pt
        \fi
\fi
%
\ifdim  \dimen4<0pt
\then   \advance\shortenproofright\dimen4
        \advance\dimen0-\dimen4
        \dimen4=0pt
\fi
%
\ifdim  \shortenproofright<\wd\proofrulename
\then   \shortenproofright=\wd\proofrulename
\fi
%
\dimen2=\shortenproofleft \advance\dimen2 by\dimen1
\dimen3=\shortenproofright\advance\dimen3 by\dimen4
%
\ifproofdots
\then
        \dimen6=\shortenproofleft \advance\dimen6 .5\dimen0
        \setbox1=\vbox to\proofdotseparation{\vss\hbox{$\cdot$}\vss}%
        \setbox0=\hbox{%
                \advance\dimen6-.5\wd1
                \kern\dimen6
                $\vcenter to\proofdotnumber\proofdotseparation
                        {\leaders\box1\vfill}$%
                \unhbox\proofrulename}%
\else   \dimen6=\fontdimen22\the\textfont2 
        \dimen7=\dimen6
        \advance\dimen6by.5\proofrulebreadth
        \advance\dimen7by-.5\proofrulebreadth
        \setbox0=\hbox{%
                \kern\shortenproofleft
                \ifdoubleproof
                \then   \hbox to\dimen0{%
                        $\mathsurround0pt\mathord=\mkern-6mu%
                        \cleaders\hbox{$\mkern-2mu=\mkern-2mu$}\hfill
                        \mkern-6mu\mathord=$}%
                \else   \vrule height\dimen6 depth-\dimen7 width\dimen0
                \fi
                \unhbox\proofrulename}%
        \ht0=\dimen6 \dp0=-\dimen7
\fi
%
\let\doll\relax
\ifwasinsideprooftree
\then   \let\VBOX\vbox
\else   \ifmmode\else$\let\doll=$\fi
        \let\VBOX\vcenter
\fi
\VBOX   {\baselineskip\proofrulebaseline \lineskip.2ex
        \expandafter\lineskiplimit\ifproofdots0ex\else-0.6ex\fi
        \hbox   spread\dimen5   {\hfi\unhbox\proofabove\hfi}%
        \hbox{\box0}%
        \hbox   {\kern\dimen2 \box\proofbelow}}\doll%
%
\global\dimen2=\dimen2
\global\dimen3=\dimen3
\egroup 
\ifonleftofproofrule
\then   \shortenproofleft=\dimen2
\fi
\shortenproofright=\dimen3
%
\onleftofproofrulefalse
\ifinsideprooftree
\then   \hskip.5em plus 1fil \penalty2
\fi
}
\newcommand{\Fct}{\mathit{Fct}}
\newcommand{\val}{\upsilon}
\newcommand{\order}{\mathit{order}}
\newcommand{\fix}{\mathrm{fix}}%
\newcommand{\types}{\mathrm{types}}%
\newcommand{\Types}{\mathrm{Types}}%
\newcommand{\mon}{\mathrm{mon}}%
\newcommand{\monf}[2]{\mon\lbrack #1 \mapsto #2 \rbrack}%
\newcommand{\step}{\rightharpoondown}%
\newcommand{\costep}{\rightharpoonup}%
\newcommand{\Ql}[1]{Q_{\leq #1}}%
\newcommand{\Qlk}{\Ql{k}}%
\newcommand{\upsup}{^{\uparrow_{\lor}}}%
\newcommand{\upinf}{^{\uparrow_{\land}}}%
\newcommand{\down}{^{\downarrow}}%
\newcommand{\tint}{\sem}%
\newcommand{\dint}[1]{\langle\!\langle #1\rangle\!\rangle}%
\newcommand{\tleq}{\sqsubseteq}%
\newcommand{\tgeq}{\sqsupseteq}%
\begin{document}
\pagestyle{myheadings}

\title[Typing weak MSOL properties]{Typing weak MSOL properties}
\author[S.~Salvati]{Sylvain Salvati\rsuper a}	
\address{{\lsuper a}INRIA, LaBRI, 351, cours de la Libération F-33405 Talence France}	
\email{sylvain.salvati@labri.fr}  
\author[I.~Walukiewicz]{Igor  Walukiewicz\rsuper b}
\address{{\lsuper b}CNRS, LaBRI, 351, cours de la Libération F-33405 Talence France}
\email{igw@labri.fr}

\begin{abstract}
  We consider $\lambda Y$-calculus as a non-interpreted functional
  programming language: the result of the execution of a program is
  its normal form that can be seen as the tree of calls to built-in
  operations. Weak monadic second-order logic (wMSOL) is well suited to
  express properties of such trees.  We give a type system for
  ensuring that the result of the execution of a $\lambda Y$-program
  satisfies a given wMSOL property. In order to prove soundness and completeness
  of the system we construct a denotational semantics of $\lambda
  Y$-calculus that is capable of computing properties expressed in
  wMSOL.
\end{abstract}

\keywords{higher-order model checking, weak monadic second order
logic, simply typed lambda-Y-calculus, denotational semantics,
recognizability, finite state methods}

\maketitle

\section{Introduction}
\label{sec:introduction}

Higher-order functional programs are more and more frequently used to write
interactive applications. In this context it is important to reason about
behavioral properties of programs.  We present a kind of type and
effect discipline~\cite{NN99}
where a well-typed program will satisfy behavioral properties
expressed in weak monadic second-order logic (wMSOL).

We consider the class of programs written in the simply-typed calculus
with recursion and finite base types: the $\lambda Y$-calculus.  This calculus
offers an abstraction of higher-order programs that faithfully
represents higher-order control. The dynamics of an interaction of a
program with its environment is represented by the B\"ohm tree of a
$\lambda Y$-term that is a tree reflecting the control flow of the
program. For example, the B\"ohm tree of the term $Y x. ax$ is the
infinite sequence of $a$'s, representing that the program does an
infinite sequence of $a$ actions without ever terminating.
Another example is presented in Figure~\ref{fig:fact}. A functional
program for the factorial function is written as a $\lambda Y$-term
$\Fct$ and the value of $\Fct$ applied to a constant $c$ is calculated.
Observe that all constants in $\Fct$ are non-interpreted.
The B\"ohm tree semantics reflects the call-by-name
evaluation strategy. Nevertheless, the call-by-value evaluation can be encoded, so
can be finite data domains and conditionals over
them~\cite{hillebrand94:_finit_model_theor_in_simpl,KobACM,Haddad13}. The approach
is then to translate a functional program to a $\lambda Y$-term and to
examine the B\"ohm tree it generates.
\begin{figure}[tbh]
\includegraphics[scale=.35]{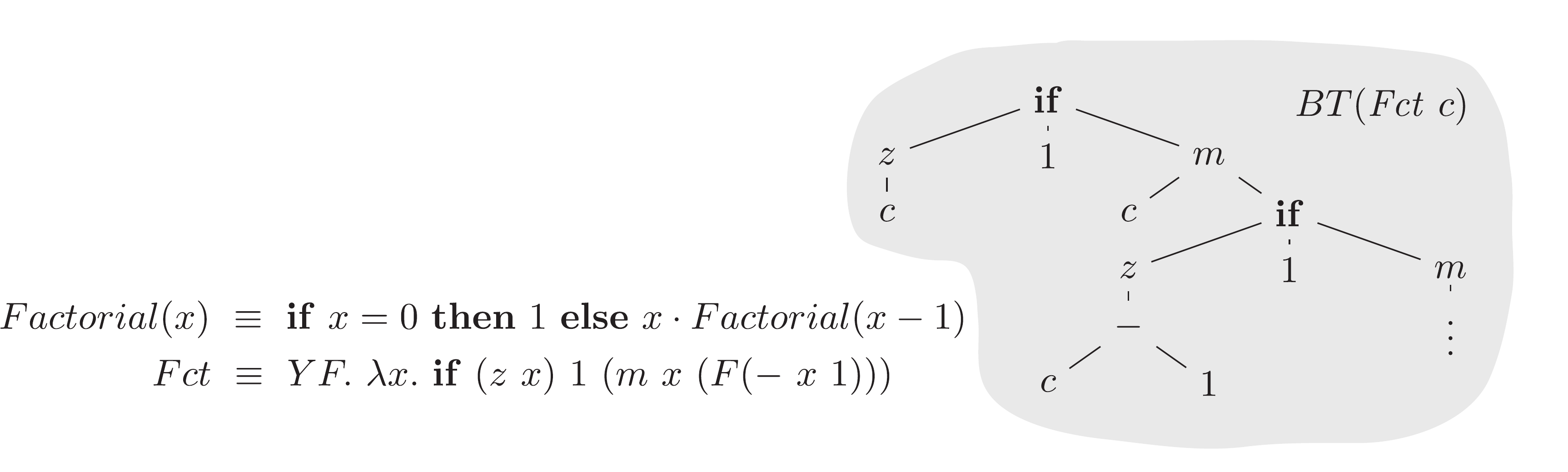}
  \caption{B\"ohm tree of the factorial function}
  \label{fig:fact}
\end{figure}

Since the dynamics of the program is represented by a potentially
infinite tree, monadic second-order logic (MSOL) is a natural
candidate for the language to formulate properties in. This logic is
an extension of first-order logic with quantification over sets.  MSOL
captures precisely regular properties of trees~\cite{rabin69}, and it is
decidable if the B\"ohm tree generated by a given $\lambda Y$-term
satisfies a given property~\cite{Ong06}. In this paper we will restrict
to weak monadic second-order logic (wMSOL). The difference is that in
wMSOL quantification is restricted to range over finite sets.  While
wMSOL is a proper fragment of MSOL, it is sufficiently strong to express
safety, reachability, and many liveness properties. Over sequences, that is,
degenerated trees where every node has one successor, wMSOL is
equivalent to full MSOL.

The basic judgments we are interested in are of the form $BT(M)\sat\a$
meaning that the result of the evaluation of $M$, i.e.\ the B\"ohm
tree of $M$, has the property $\a$ formulated in wMSOL. Going back to
the example of the factorial function from Figure~\ref{fig:fact}, we
can consider a property: all computations that eventually take the
middle branch in a node labeled by ``if'' are finite. This property
holds in
$BT(\Fct\ c)$. Observe by the way that $BT(\Fct\
c)$ is not regular -- it has infinitely many non-isomorphic subtrees
as the number of subtractions occurring in the left branches of ``if''
nodes is growing with the depth of those nodes. In general, the interest of
judgments of the form $BT(M)\sat\a$ is due to their ability to express liveness
and fairness properties of executions, like: ``every \emph{open}
action is eventually followed by a \emph{close} action'', or that
``there are infinitely many read actions''.  Various other
verification problems for functional programs can be reduced to this
problem~\cite{KobACM,KobTabUnn10,OngRam11,TobitaTK12,GrabowskiHL11}.

Technically, the judgment $BT(M)\sat\a$ is equivalent to determining
whether a B\"ohm tree of a given $\lambda Y$-term is accepted by a given
weak alternating automaton. This problem is known to be decidable
thanks to the result of Ong~\cite{Ong06}, but we hope that the denotational
approach we are pursuing here brings additional benefits. Our two main
contributions are:
\begin{itemize}
\item A construction of a finitary model for a given weak
  alternating automaton. 
  The ranking condition on the automaton is lifted to the denotational
  model and reflected in the alternation of the least and greatest
  fixpoints. 
  The value of a term in this model determines
  if the B\"ohm tree of the term is accepted by the automaton. So
  verification is reduced to evaluation in the model.
\item Two type systems. A typing system deriving statements of the
  form ``the value of a term $M$ is bigger than an element $d$ of the
  model''; and a typing system for dual properties. These typing
  systems use standard fixpoint rules and  follow the methodology
  coined as \emph{Domains in Logical Form}
  \cite{abramsky91:_domain_theor_logic_form}. Thanks to the first
  item, these typing systems can directly talk about
  acceptance/rejection of the B\"ohm tree of a term by an
  automaton. These type systems are decidable, and every term has a
  ``best'' type that simply represents its value in the model.
\end{itemize}

\noindent Having a model and a type system has several advantages over having
just a decision procedure. First, it makes verification compositional:
the result for a term is calculated from the results for its subterms.
In particular, it opens possibilities for a modular approach to the
verification of large programs.  Next, it enables semantic based
program transformations as for example reflection of a given property
in a given
term~\cite{Broadbent:2010:RSL:1906484.1906730,SalWalTLCA,Haddad13}. It
also implies the transfer theorem for wMSOL~\cite{SalWalTransfer} with
a number of consequences offered by this theorem.  Finally, models
open a way to novel verification algorithms be it through evaluation,
type system, or through hybrid algorithms using typing and evaluation
at the same time~\cite{terui_semantic_evaluation}. We come back to
these points in the conclusions.
\medskip

\emph{Related work.} Historically, Ong~\cite{Ong06} has shown the
decidability of the MSOL theory of B\"ohm trees for all $\lambda Y$-terms.
This result has been revisited in several different ways. Some
approaches take a term of the base type, and unroll it to some
infinite object: tree with pointers~\cite{Ong06}, computation of a
higher-order pushdown automaton with
collapse~\cite{Hague08:collapsible_pushdown_automata_and_recursion_schemes},
a collection of typing judgments that are used to define a
game~\cite{kobayashi09:_type_system_equiv_to_modal}, a computation of
a Krivine machine~\cite{SalWalKrivine_machine}. Recently, Tsukada
and Ong~\cite{TsuOng14} have presented a compositional approach: they
give a typing system where the notion of a derivation is standard,
their types are extended with annotations, and 
the fixpoint combinator is defined via game on these types with
annotations.  We will comment more on the relation with this work
after we introduce our type system, as well as in the conclusions. 
Another recent advance is given by Hofmann
and Chen~\cite{CheHof14} who provide a type system for verifying path
properties of trees generated by first-order $\lambda Y$-terms.  In
other words, this last result gives a typing system for verifying path
properties of trees generated by deterministic pushdown
automata. Compared to this last work, we consider the whole
$\lambda Y$-calculus and an incomparable set of properties.

Already some time ago, Aehlig~\cite{aehlig07} has discovered an easy
way to prove Ong's theorem restricted to properties expressed by tree
automata with trivial acceptance conditions (TAC automata).  The core
of his approach can be formulated by saying that the verification
problem for such properties can be reduced to evaluation in a
specially constructed and simple model.  Later, Kobayashi proposed a
type system for such properties and constructed a tool based on
it~\cite{KobACM}. This in turn opened a way to an active ongoing
research resulting in the steady improvement of the capacities of the
verification
tools~\cite{Kob09,BroadbentCHS13,BroadbentK13,RamsayNO14}.  TAC
automata can express only safety properties.  Our models consist of
layers of models used by Aehlig, and our type system is a layered
version of Kobayashi's system.  
This close relation to the model and type system for trivial
properties, makes us hope that our model and
typing system can be useful for practical verification of wMSOL
properties. 

The model approach to verification of $\lambda Y$-calculus is quite
recent. In~\cite{SalWalTLCA} it is shown that simple models with
greatest fixpoints capture exactly properties expressed with TAC
automata. An extension is then proposed to allow one to detect
divergence. The simplicity offered by models is exemplified by
Haddad's recent work~\cite{Haddad13} giving simple
semantic based transformations of $\lambda Y$-terms.

We would also like to mention two other quite different approaches to
integrate properties of infinite behaviors into typing. Naik and
Palsberg~\cite{NaikP08} make a connection between model-checking and
typing. They consider only safety properties, and since their setting
is much more general than ours, their type system is more
complex too. Jeffrey~\cite{Jef12,Jef14} has shown how to incorporate
Linear Temporal Logic into types using a much richer dependent types
paradigm. The calculus is intended to talk about control and data in
functional reactive programming framework, and aims at using SMT
solvers.  \medskip

\emph{Organization of the paper.} In the next section we introduce
the main objects of our study: $\lambda Y$-calculus, and weak
alternating automata.  Section~\ref{sec:type-system} presents the type
system. Its soundness and completeness can be straightforwardly
formulated for closed terms of atomic type. For the proof though, we
need a statement about all terms. This is where the model based
approach helps.  Section~\ref{sec:model-wmso} describes how to
construct models for wMSOL properties.
In Section~\ref{sec:from-model-type} we come back to our type systems.
The general soundness
and completeness property we prove says that types can denote every element
of the model, and the type systems can derive precisely the
judgments that hold in the model
(Theorem~\ref{thm:types_correctness_completeness}).  In the conclusion
section we mention other applications of our model.




\section{Preliminaries}
\label{sec:preliminaries}

We quickly fix notations related to the simply typed $\lambda Y$-calculus
and to B\"ohm trees.  We then recall the definition of weak
alternating automata on ranked trees. These will be used to specify
properties of B\"ohm trees.  Finally, we introduce the notion of the
greatest fixpoint models for the $\lambda Y$-calculus. This notion allows
us to adapt the definition of recognizability from language theory, so
models can be used to define sets of terms. These sets of terms are
closed under the reduction rules of the $\lambda Y$-calculus. We recall the
characterization, in terms of automata, of the sets of terms
recognizable by the greatest fixpoint models.  \medskip

The \emph{set $\Tt$ of types  of $\lambda Y$-calculus}  is constructed from a unique \emph{basic
  type} $o$ using a binary operation $\to$ that associates to the
right\footnote{We use a unique atomic type, but our approach generalizes
without problems to any number of atomic types.}. Thus $o$ is a type and if
$A$, $B$ are types, so is $(A\to B)$. The order of a type is defined
by: $\order(o)=0$, and $\order(A\to B)=max(1+\order(A),\order(B))$.
We work with \emph{tree signatures} that are finite sets of
\emph{typed constants of order at most $1$}.  
Types of order $1$ are of the form $o\to\dots\to o\to o$ that we
abbreviate $o^i\to o$ when they contain $i+1$ occurrences of $o$. For convenience we assume that $o^0\to o$ is
just $o$. If $\S$ is a signature, we write $\S^{(i)}$ for the set of constants of
type $o^i\to o$.
In examples we will often use constants of type
$o\to o$ as this makes the examples more succinct.
At certain times, we will restrict to the types $o$ and
$o^2\to o$ that are representative for all the cases.
\medskip

\looseness=-1
\emph{Simply typed $\lambda Y$-terms} are built from the
constants in the signature, and constants $Y^{A}$, $\W^A$ for every
type $A$. These stand for the \emph{fixpoint combinator} and
\emph{undefined term}, respectively. The fixpoint combinators
allows to have have computations with a recursion. 
The undefined terms represent
diverging computation, but also, at a technical level are used to
construct finite approximations of infinite computations.
Apart from constants, for each
type $A$ there is a countable set of variables $x^A,y^A,\dots$. Terms
are built from these constants and variables using typed application:
if $M$ has type $A\to B$ and $N$ has type $A$, then $(MN)$ has type
$B$;
and $\lambda$-abstraction: if $M$ has type $B$ then $(\lambda x^A. M)$ has type
$A\to B$. We shall remove unnecessary parentheses, in particular, we
write sequences of applications $((N_0 N_1) \dots N_ n) $ as $N_0
\dots N_n$ and we write sequences of $\lambda$-abstractions
$\lambda x_1.\dots \lambda x_n.\ M$ with only one $\lambda$: either $\lambda x_1\dots
x_n.\ M$, or even shorter $\lambda \vec x.\ M$. We will often write $Y x.M$
instead of $Y(\lambda x.M)$. Every $\lambda Y$-term can be written in this
notation since $YN$ has the same B\"ohm tree as $Y(\lambda x. N x)$,
and the latter term is $Yx. (Nx)$\label{notation:Yx}. We write
$M[x_1:=N_1,\dots,x_n:=N_n]$ for the term obtained from $M$ by the
simultaneous capture-avoiding substitution of $N_1$, \dots, $N_n$ for
the variables $x_1$, \dots, $x_n$. All the substitutions we shall
consider map variables to terms of the same type. When working with an
abstract substitution $\s$, we write $M.\s$ for the term obtained by
applying $\s$ to $M$. We use the usual operational semantics of
the calculus, $\beta\delta$-reduction ($\stackrel{\ast}{\to}_{\beta\delta}$)
which is the reflexive transitive closure of the union of the
relations of $\beta$-contraction ($\to_\beta$) and $\delta$-contraction
($\to_\delta$) which are the following rewriting relations:
$$(\lambda x.M)N \to_\beta M[x:=N] \quad Y M \to_\delta M(Y M)$$

\begin{defi}
  The \emph{B\"ohm tree} of a term $M$ is a possibly infinite labeled
  tree that is defined co-inductively.  If $M$ can be reduced so as to
  obtain a term of the form $\lambda \vec x.N_0N_1\dots N_k$ with $N_0$ a
  variable or a constant, then $BT(M)$ is a tree whose root is labeled
  by $\lambda \vec x.N_0$ and the immediate successors of its root are
  $BT(N_1)$, \dots, $BT(N_k)$. Otherwise $BT(M)$ is a single node tree
  whose root is labeled $\W^A$, where $A$ is the type of $M$.
\end{defi}
B\"ohm trees are infinite normal forms of $\lambda Y$-terms. A B\"ohm tree
of a closed term of type $o$ over a tree signature is a potentially
infinite ranked tree: a node labeled by a constant $a$ of type $o^i\to
o$ has $i$ successors (c.f.\ Figure~\ref{fig:fact}).

\begin{exa}\label{ex:one}

As an example take $(YF.\ N) a$
 where $N=\lambda g.g(b(F(\lambda x. g(g\, x))))$. Both $a$ and $b$ have the
 type $o\to o$; while $F$ has type $(o\to o)\to o$, and so does $N$.
 Observe that we are using a more convenient notation $YF$ here.
 The B\"ohm tree of $(Y F. N)a$ is
 $BT((Y F.N)a) = ab a^2 b a^4 b\dots a^{2^n}b \dots$
after every consecutive occurrence of $b$ the number of occurrences of
$a$ doubles because of the double application of $g$ inside $N$.
\end{exa}

\paragraph{\bf wMSOL and weak alternating automata}
We will be interested in properties of trees expressed in weak monadic
second-order logic. This is an extension of first-order logic with
quantification over finite sets of elements. The interplay of negation
and quantification allows the logic to express many infinitary
properties. The logic is closed for example under constructs: ``for
infinitely many vertices a given property holds'', ``every path consisting of
vertices having a given property is finite''. From the automata
point of view, the expressive power of the logic is captured by weak
alternating automata.

A \emph{weak alternating automaton} works on trees over a fixed tree
signature~$\S$. It is a tuple:
\begin{equation*}
  \Aa=\struct{Q,\S,q^0\in Q,\set{\delta_i}_{i\in\Nat},\rho:Q\to\Nat}
\end{equation*}
where $Q$ is a finite set of states, $q^0\in Q$ is the initial state,
$\rho$ is the \emph{rank function}, and $\delta_i:Q\times \S^{(i)} \to
\Pp(\Pp(Q)^i)$ is the transition function.  For $q$ in $Q$, we call
$\rho(q)$ \emph{its rank}.  The automaton is \emph{weak} in the sense
that when $(S_1,\dots,S_i)$ is in $\delta_i(q,a)$, then the rank of every
$q'$ in $\bigcup_{1\leq j\leq i} S_j$ is not bigger than the rank of
$q$, i.e. $\rho(q')\leq \rho(q)$.

Observe that since $\S$ is finite, only finitely many $\delta_i$ are
non-empty functions.
From the definition it follows that $\delta_2: Q\times \S^{(2)} \to
\Pp(\Pp(Q)\times\Pp(Q))$ and
$\delta_0:Q\times\S^{(0)}\to\set{\es,\set{()}}$. 
We will simply write $\delta$ without a subscript when this causes no ambiguity.

Automata will work on $\S$-labeled  trees that are partial
functions $t:\Nat^*\stackrel{\cdot}{\to}\S\cup\set{\W}$ whose domain
of definition satisfies the usual requirements, and such that the
number successors of a node is determined by the label of the node. In
particular, if $t(u) \in \S^{(0)}\cup\set{\W}$ then $u$ is a leaf.

\looseness=-1
\label{df:acceptance-game}The acceptance of a tree is defined in
terms of \emph{games} between two players that we call Eve and Adam.  A
\emph{play} between Eve and Adam from some node $v$ of a tree $t$ and
some state $q\in Q$ proceeds as follows.  If $v$ is a leaf and is
labeled by some $c\in \S^{(0)}$ then Eve wins iff $\delta_0(q,c)$ holds
(i.e. $\delta_0(q,c)$ is not empty). If
$v$ is labeled by $\W$ then Eve wins iff the rank of $q$ is
even.  Otherwise, $v$ is an internal node: Eve
chooses a tuple of sets of states $(S_1,\dots,S_i)\in\delta(q,t(v))$; then Adam
chooses $S_j$ (for $j=1,\dots,i$) and a state $q'\in S_j$. The play
continues from the $j$-th son of $v$ and state $q'$.
When a player is not able to play any move, he/she looses.
If the play is
infinite then the winner is decided by looking at ranks of states
appearing on the play.  Due to the weakness of $\Aa$, the rank of
states in a play can never increase, so it eventually stabilizes at
some value. Eve wins if this value is even. A tree $t$ is
\emph{accepted} by $\Aa$ from a state $q\in Q$ if Eve has a winning
strategy in the game started from the root of $t$ and from $q$.

Automata with \emph{trivial acceptance conditions}, as considered by
Kobayashi~\cite{Kob09}, are obtained by requiring that all states have
rank $0$. Automata with co-trivial conditions are just the ones where
all states have rank $1$.

Observe that without a loss of generality we can assume that $\delta$ is
monotone, i.e. if $(S_1,\dots,S_i)\in\delta(q,a)$ then for every
$(S'_1,\dots,S'_i)$ such that $S_j\incl S'_j\incl \set{q' : \rho(q')\leq
  \rho(q)}$ we have $(S'_1,\dots,S'_i)\in\delta(q,a)$. Indeed, adding the
transitions needed to satisfy the monotonicity condition does not give
Eve more winning possibilities.

An automaton defines a language of closed terms of type $o$, it
consists of terms whose B\"ohm
trees are accepted by the automaton from its initial state $q^0$:
$$L(\Aa)=\set{M\,:\, \text{$M$ is closed term of type $o$, $BT(M)$
 is accepted by $\Aa$ from $q^0$}}\ .$$

Observe that $L(\Aa)$ is closed under $\beta\delta$-conversion since the
Church-Rosser property of the calculus implies that two
$\beta\delta$-convertible terms have the same B\"ohm tree.

\begin{exa}\label{ex:automaton} Consider a weak alternating automaton $\Aa$
defining the property ``action $b$ appears infinitely often''. The
automaton has
states $Q=\set{q_1,q_2}$, and the signature $\S=\set{a,b}$ consisting
of two constants of type $o\to o$. Over this signature, the B\"ohm trees are
just sequences. The transitions of $\Aa$ are:
\begin{align*}
  \delta(q_1, a) &= \set{q_1} &  \delta(q_2, a) &= \set{q_1,q_2}\\
  \delta(q_1, b) &  =\es &
  \delta(q_2, b) &= \set{q_2}
\end{align*}
The ranks of states are indicated by their subscripts. When started in $q_2$
the automaton spawns a run from $q_1$ each time it sees letter $a$. The
spawned runs must stop in order to accept, and they stop when they see
letter $b$ (cf. Figure~\ref{fig:run}). So every $a$ must be eventually
followed by $b$.
\begin{figure}[tbh]
  \centering
  \includegraphics[scale=.4]{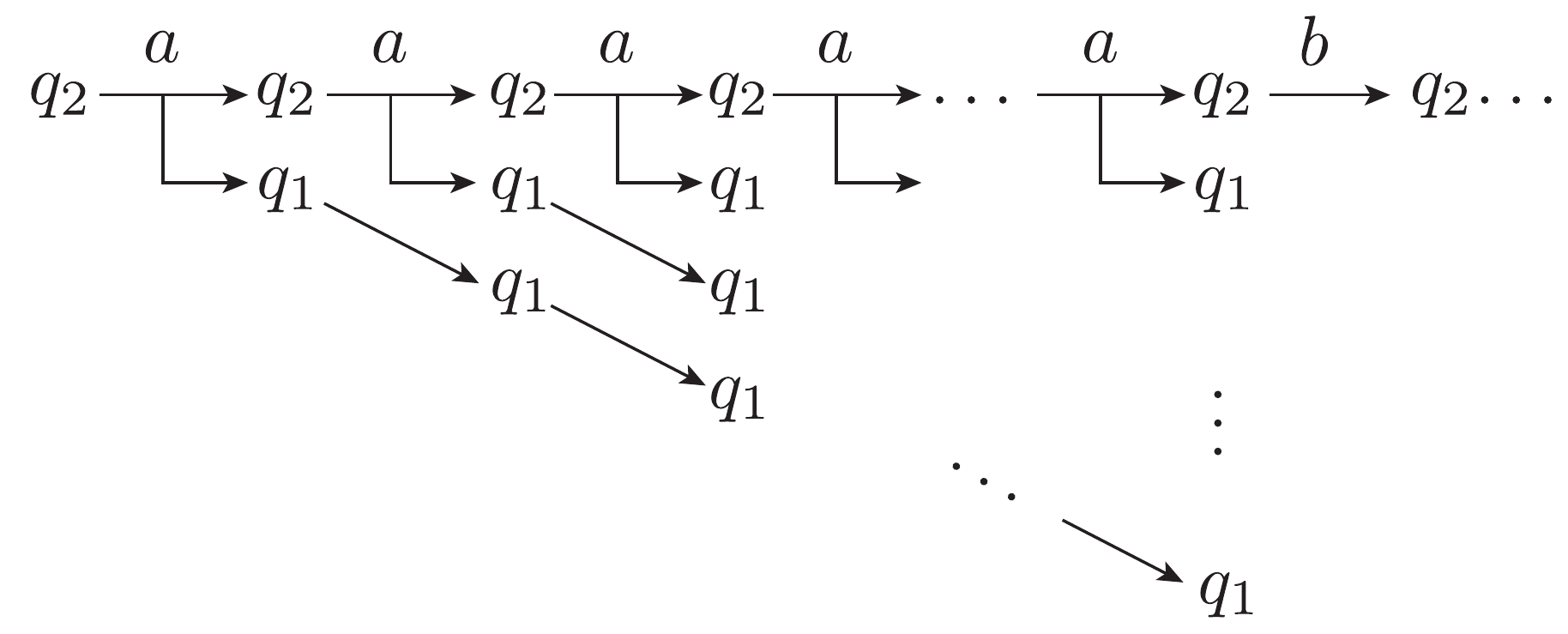}
  \caption{A run of an alternating automaton on $a\dots ab\dots$}
  \label{fig:run}
\end{figure}
\end{exa}

\paragraph{\bf Models.}
We use standard notions and notations for models for $\lambda Y$-calculus,
in particular for \emph{valuation/variable assignment} and of
\emph{interpretation of a term}
(see~\cite{hindley08:_lambd_calcul_combin}). We shall write
$\sem{M}^\Ss_\nu$ for the interpretation of a term $M$ in a model
$\Ss$ with the valuation $\nu$.  As usual, we will omit subscripts or
superscripts in the notation of the semantic function if they are
clear from the context.

The simplest models of $\lambda Y$-calculus are based on monotone
functions.  A \emph{GFP-model} of a signature $\S$ is a tuple
$\Ss=\struct{\set{\Ss_A}_{A\in\Tt},\rho}$ where $\Ss_o$ is a finite
lattice, called the \emph{base set} of the model, and for every type
$A\to B\in \Tt$, $\Ss_{A\to B}$ is the lattice $\monf{\Ss_A}{\Ss_B}$
of monotone functions from $\Ss_A$ to $\Ss_B$ ordered
coordinate-wise. The valuation function $\rho$ is required to interpret
$\W^A$ as the greatest element of $\Ss_A$, and $Y^{A}$ as
the greatest fixpoint operator of functions in $\Ss_{A\to A}$. Observe that
every $\Ss_A$ is
finite, hence all the greatest fixpoints exist without any additional
assumptions on the lattice.



We can now adapt the definition of recognizability by semigroups taken
from language theory to our richer models.
\begin{defi}\label{df:recognizability}
A GFP model $\Ss$ over the
base set $\Ss_o$ \emph{recognizes a language} $L$ of closed $\lambda
Y$-terms of type $o$ if
there is a subset $F\incl \Ss_o$ such that $L=\set{M \mid
  \sem{M}^\Ss\in F}$.
\end{defi}
A direct consequence of Statman's finite completeness
theorem~\cite{Statman82} is that such models can characterize a term
up to equality:
$BT(M)=BT(N)$ iff the values of $M$ and $N$ are the
same in every monotone models. This property is sufficient for our
purposes. The celebrated result of Loader~\cite{Loader01} implies that
we cannot hope for a much stronger completeness property, and have good
algorithmic qualities at the same time.

The following theorem characterizes the recognizing power of GFP
models.
\begin{thm}[\cite{SalWalTLCA}]\label{thm:GFP-model}
  A language $L$ of $\lambda Y$-terms is recognized by a GFP-model iff
  it is a boolean combination of languages recognized by weak
  automata whose all states have rank $0$.
\end{thm}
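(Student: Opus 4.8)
The plan is to prove the two implications separately, in each case reducing the statement to a tight correspondence between the greatest fixpoint used to interpret $Y$ and the fact that in an automaton all of whose states have rank $0$ every infinite play is won by Eve.

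For the right-to-left direction I would first treat a single automaton $\Aa$ whose states all have rank $0$, and then close under the boolean operations. For the single automaton I would build, following Aehlig, the GFP-model with base set $\Ss_o=\Pp(Q)$ ordered by inclusion, so that its top element is $Q$. A constant $a\in\S^{(i)}$ is interpreted by
\[
 a^\Ss(T_1,\dots,T_i)=\set{q : \exists (S_1,\dots,S_i)\in\delta(q,a),\ \forall j\ S_j\incl T_j},
\]
which is monotone; $\W$ is sent to the top $Q$ (consistent with Eve winning at every $\W$-leaf, since all ranks are even); and $Y$ to the greatest fixpoint, which exists because every $\Ss_A$ is finite. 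Taking $F=\set{S\incl Q : q^0\in S}$, the correctness statement $\sem{M}^\Ss=\set{q : BT(M)\text{ accepted from }q}$ yields $L(\Aa)=\set{M : \sem{M}^\Ss\in F}$. Closure under the boolean operations is then cheap: complementation keeps the same model and replaces $F$ by $\Ss_o\setminus F$, while for intersection (hence, together with complement, for all boolean combinations) I would pass to the product GFP-model whose base is $\Ss^1_o\times\Ss^2_o$ and whose constants act coordinate-wise.

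For the left-to-right direction I would start from a GFP-model $\Ss$ and accepting set $F\incl\Ss_o$ with $L=\set{M:\sem{M}^\Ss\in F}$. Since $\Ss_o$ is a finite lattice, every subset of $\Ss_o$ is a boolean combination of principal up-sets $\mathord\uparrow d=\set{e : e\tgeq d}$ (indeed $\set{d}=\mathord\uparrow d\setminus\bigcup_{e>d}\mathord\uparrow e$), so it suffices to recognise each $L_{\tgeq d}=\set{M:\sem{M}^\Ss\tgeq d}$ by a rank-$0$ automaton. Using that $\sem{M}\tgeq d$ iff $\sem{M}\tgeq p$ for every join-irreducible $p\tleq d$, I would take the automaton whose states are the join-irreducibles of $\Ss_o$, all of rank $0$ (so the automaton is weak and trivially of the TAC kind), and whose transition from $p$ at a letter $a\in\S^{(i)}$ lets Eve pick $e_1,\dots,e_i\in\Ss_o$ with $a^\Ss(e_1,\dots,e_i)\tgeq p$ and then demands, on the $j$-th son, the state set $S_j=\set{p' : p'\tleq e_j}$. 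Reading off $L_{\tgeq d}$ as the intersection of the languages of this automaton started from the join-irreducibles below $d$, and assembling the $L_{\tgeq d}$ by the boolean combination above, exhibits $L$ in the required form.

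The main obstacle in both directions is the same correctness lemma relating the model to the acceptance game: that $\sem{M}^\Ss\tgeq p$ (equivalently, that $q$ accepts $BT(M)$) is exactly Eve's winning condition. The delicate point is the behaviour on infinite B\"ohm trees, where one must match the greatest fixpoint interpreting $Y$ with the convention that infinite rank-$0$ plays are won by Eve. I would prove this by an approximation argument: the greatest fixpoint is the stable value of the decreasing chain $f^n(\top)$ (which stabilises because the lattice is finite), and the $n$-th term corresponds both to the $n$-fold $\delta$-unfolding of $Y$ and to truncating $BT(M)$ at depth $n$ with $\W$-leaves of value $\top$; passing to the limit identifies the model value with acceptance. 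A secondary technical nuisance is the product model for boolean closure, where the higher types are full monotone function spaces and so are not literally products; here I would run a logical-relations argument showing that the two projections commute with the interpretation of all constants and, crucially, with the greatest fixpoint, the latter because a lattice homomorphism between finite lattices preserves the meets $\bigsqcap_n f^n(\top)$.
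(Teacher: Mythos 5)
You should know at the outset that the paper contains no proof of Theorem~\ref{thm:GFP-model} to compare against: it is imported wholesale from~\cite{SalWalTLCA} (note the citation in the theorem header). Judged on its own merits, your proposal is sound and reconstructs what is essentially the known argument. Right-to-left, the Aehlig-style GFP model over $\Pp(Q)$ with constants interpreted through $\delta$, with $\W\mapsto Q$ (consistent, as you note, with rank $0$ being even) and $F=\set{S : q^0\in S}$; complement by replacing $F$ with $\Ss_o\setminus F$; intersection via a product base lattice. Left-to-right, reading a rank-$0$ automaton off the finite lattice $\Ss_o$, with join-irreducibles as states and Eve guessing lattice values $e_j$ for the subtrees; the game invariant closes because every element of a finite lattice is the join of the join-irreducibles below it, so Adam's choices $p'\leq e_j$ jointly certify $\val(t_j)\geq e_j$. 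Your two flagged difficulties are indeed where all the content lives, and your treatments are right in shape: in particular the logical-relations argument for the product model works because both decreasing chains $f^n(\top)$ stabilize in finite lattices, so the relation only needs to be preserved at each finite stage before being read off at a common stabilization index.

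One repair is needed in the central approximation lemma. The $n$-fold $\delta$-unfolding $M_n$ of $Y$ does \emph{not} produce the depth-$n$ truncation of $BT(M)$: $BT(M_n)$ is merely some $\W$-prefix of $BT(M)$, and with nested fixpoints the single chain $f^n(\top)$ does not directly index the approximants. What you actually need are two statements glued by safety: (i) in a finite GFP model, $\sem{M}=\bigwedge_n \sem{M_n}$, proved per fixpoint by induction on the term using monotonicity and stabilization of all chains; and (ii) since all states have rank $0$, acceptance is a safety property, so Eve wins on $BT(M)$ from $q$ iff she wins on every finite $\W$-prefix --- the nontrivial implication (from all prefixes to the full tree) requires assembling strategies via a K\"onig-style argument, available because the game is finitely branching with finitely many states. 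Combined with the easy fact that on a finite $\W$-labelled tree the bottom-up value (with $\W\mapsto\top$) is exactly the set of states from which Eve wins, this closes both directions of your correctness lemma, and the same lemma then also serves the automaton you build in the left-to-right direction. This is a refinement of your sketch rather than a change of course; modulo it, the proposal is correct.
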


\section{Type systems for wMSOL}
\label{sec:type-system}

In this section we describe the main result of the paper.  We present
a type system to reason about wMSOL properties of B\"ohm trees of
terms. We will rely on the equivalence of wMSOL and weak alternating
automata, and construct a type system for an automaton.  For a fixed
weak alternating automaton $\Aa$ we want to characterize the terms
whose B\"ohm trees are accepted by $\Aa$, i.e.\ the set $L(\Aa)$. The
characterization will be purely type theoretic
(cf.~Theorem~\ref{thm:main}).

Fix an automaton
$\Aa=\struct{Q,\S,q^0,\set{\delta_i}_{i\in\Nat},\rho}$. Let $m$ be the
maximal rank, i.e., the maximal value $\rho$ takes on $Q$.  For every
$0\leq k\leq m$ we write $Q_k=\set{q\in Q : \rho(q)=k}$ and
$\Qlk=\set{q\in Q: \rho(q)\leq k}$.

The type system we propose is obtained by allowing the use of
intersections inside simple types. This idea has been used by
Kobayashi~\cite{KobACM} to give a typing characterization for
languages of automata with trivial acceptance conditions.  We work
with, more general, weak acceptance conditions, and this will be
reflected in the stratification of types, and two fixpoint rules: greatest
fixpoint rule for even strata, and the least fixpoint rule for odd
strata.

First, we define the sets of intersection types. They are indexed by a
rank of the automaton and by a simple type. Note that every
intersection type will have a corresponding simple type; this is a
crucial difference with intersection types characterizing strongly
normalizing terms~\cite{BCD83}. Letting
$\Types^k_{A} = \bigcup_{0\leq l\leq k}\types^l_A$ we define\label{def:types}:
 $$\resizebox{\columnwidth}{!}{$\types_{o}^{k}= \set{q\in Q: \rho(q)= k},\,
   \types^k_{A\to B} = \set{T \to s : T \subseteq \Types^k_{A}\
     \text{and}\ s\in\types^k_{B}}\ .$}$$ The difference with simple
 types is that now we have a set constructor that will be interpreted
 as the intersection of its elements. This technical choice amounts to
 quotient types with respect to the associativity, the commutativity
 and the idempotency of the intersection operator. A nice consequence
 in our context is that the sets $\types^k_A$ and $\Types^k_A$ are
 finite. Notice also that the application of the intersection type
 operator to two sets
 of types is then represented by the union of those two sets.

\begin{exa} Suppose that in $Q$ we have states $q_0,r_0,q_1$ with
ranks given by their subscripts. A type $\set{q_0,r_0}\incl
\Types^0_o$ will type terms whose B\"ohm tree is accepted both from
$q_0$ and from $r_0$.  A type $\set{\set{q_0,r_0}\to\set{q_1}}
\incl\types^1_{o\to o}$ will type terms that when given a term of type
$\set{q_0,r_0}$ produce a B\"ohm tree accepted from $q_1$. Observe
that, for example, $\set{\set{q_0,r_0}\to\set{q_1},\set{q_1}\to \set{q_1}}
\incl\Types^1_{o\to o}$ while $\set{\set{q_1},\set{q_1}\to \set{q_1}}$
is not an intersection type in our sense since the two types in the set
have different underlying simple types.
\end{exa}

When we write $\types_{A}$ or $\Types_A$ we mean $\types^m_{A}$ and
$\Types^m_{A}$ respectively; where $m$ is the maximal rank used by the
automaton $\Aa$.

For $S\subseteq\Types^k_A$ and $T\subseteq\types^k_B$ we write $S\to
T$ for $\set{S\to t: t\in T}$. Notice that $S\to T$ is included in
$\types^{k}_{A\to B}$.

We now give subsumption rules that express the intuitive dependence
between types. So as to make the connection with the model
construction later, we have adopted an ordering of intersection types
that is dual to the usual one (the first rule can be derived from the
second and third one, but we find it more intuitive to explicitly
spell it out).  Usually in intersection types, the lower a type is in
the subsumption order, the less terms it can type. Here we take the
information order instead, the lower a type is in the subsumption
order, the less precise it is. As we said earlier, this choice is
motivated by the connection of types with models, 
 the information order gives us an isomorphism between the two, while
the usual choice would give us a duality between types and
models.
\begin{center}\small
  \begin{prooftree}
    S\subseteq T\subseteq Q%
    \justifies%
    S \tleq_o T
  \end{prooftree}
  \qquad
  \begin{prooftree}
    \forall s \in S, \exists t\in T,
    s\tleq_A t%
    \justifies%
    S\tleq_A T
  \end{prooftree}\smallskip
  \qquad
  \begin{prooftree}
    s=t%
    \justifies%
    s \tleq_o t
  \end{prooftree}
\qquad
  \begin{prooftree}
    T\tleq_A S\quad s\tleq_B t%
    \justifies%
    S\to s\tleq_{A\to B} T\to t%
  \end{prooftree}
\end{center}
 Given $S\incl\Types_{A\to B}$ and $T\incl
\Types_{A}$ we write $S(T)$ for the set $\set{t : (U\to t)\in S\text{
    and }  U\tleq T}$.\label{df:type-applicaiton}

The typing system presented in Figure~\ref{fig:type_system} derives
judgments of the form $\G\vdash M\geq S$ where $\G$ is an environment
containing all the free variables of the term $M$, and $S\incl \Types_A$
with $A$ the type of $M$.  As usual, an environment $\G$ is a finite
list $x_1\geq S_1,\dots, x_n\geq S_n$ where  $x_1,\dots,x_n$ are pairwise
distinct  variables of
type $A_i$, and $S_i\incl \Types_{A_i}$. We will use a functional
notation and write $\G(x_i)$ for $S_i$.  We shall also write $\G,x\geq
S$ for an extension of the environment $\G$ with the declaration
$x\geq S$.

The rules in the first row of Figure~\ref{fig:type_system} express
standard intersection types dependencies: the axiom, the intersection
rule and the subsumption rule.
The rules in the second line are specific to our
fixed automaton.  The third line
contains the usual rules for application and abstraction with the
caveat that the abstraction rule incorporates the stratification of
types with respect to ranks so that the types used in the judgment are
always well-formed.
The least
fixpoint rule in the next line is standard, it expresses that the
least fixpoint can be approximated by iterations started in the least
element: if we derive $\G\vdash \lambda x.M\geq \es\to T$ then  we
obtain that $\G\vdash Y x.M\geq T$, that can allow us to derive
$\G\vdash Y x.M\geq T'$ provided $\G\vdash \lambda x.M \geq T\to T'$ etc.
The greatest
fixpoint rule in the last line is more intricate. It is allowed only
on even strata.  If taken for $k=0$ the rule becomes the standard rule
for the greatest fixpoint as the set $T$ must be the empty set.
For $k>0$ the rule permits to incorporate
$T$ that is the result of the fixpoint computation on the lower
stratum. 

\begin{figure*}\small
  \centering
  \begin{prooftree}
    \justifies%
    \G, x\geq S \vdash x\geq S
  \end{prooftree}
\quad
\begin{prooftree}
  \G\vdash M\geq S%
  \quad
  \G\vdash M\geq T%
  \justifies%
  \G\vdash M\geq S\cup T%
\end{prooftree}
\quad
\begin{prooftree}
  \G\vdash M\geq S%
  \quad%
  T\tleq S%
  \justifies
  \G\vdash M\geq T%
\end{prooftree}

\bigskip

\begin{prooftree}
\ %
  \justifies%
  \G \vdash c \geq \set{q: \delta_0(q,c) \neq \es}
\end{prooftree}
\quad
\begin{prooftree}
  (S_1,\dots,S_i) \in \delta_i(a,q)%
  \justifies%
  \G \vdash a \geq \set{S_1 \to \dots \to S_i \to q}
\end{prooftree}

\bigskip

\begin{prooftree}
  \G \vdash M \geq S %
  \quad%
  \G \vdash N \geq T
  \justifies%
  \G\vdash MN \geq S(T)%
\end{prooftree}
\quad
\begin{prooftree}
  S\subseteq\Types^{k},\, T\subseteq \types^k%
  \quad%
  \G, x\geq S \vdash M\geq T%
  \justifies%
  \G \vdash  \lambda x. M \geq S \to T%
\end{prooftree}

\bigskip

\begin{prooftree}
    \G \vdash (\lambda x. M)\geq S%
  \quad%
  \G\vdash (Y x .M) \geq T%
  \justifies%
  \G \vdash Y x.M\geq S(T)%
  \using Y\,\mathit{odd}%
\end{prooftree}

\bigskip

\begin{prooftree}
  S \subseteq \types^{2k}_{A},\quad%
  T\subseteq \Types^{2k-1}_{A},\quad%
  \G\vdash \lambda x.M\geq (S\cup T) \to S%
  \quad%
  \G\vdash Y x.M\geq T%
  \justifies%
  \G \vdash Y x.M\geq S\cup T%
  \using Y\,\mathit{even}%
\end{prooftree}
\caption{Type system}\label{fig:type_system}
\end{figure*}

Our main result says that the typing in
this system is equivalent to accepting with our fixed weak
alternating automaton.
\begin{thm}\label{thm:main}
  For every closed term $M$ of type $o$ and every state $q$ of $\Aa$:
  the judgment
  $\vdash M\geq q$ is derivable iff $\Aa$ accepts $BT(M)$ from $q$.
\end{thm}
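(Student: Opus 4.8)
The plan is to route the equivalence through the denotational model announced for Section~\ref{sec:model-wmso}, following the \emph{Domains in Logical Form} methodology, rather than reasoning about the acceptance game directly. I would factor Theorem~\ref{thm:main} into two independent ingredients and then specialize to singletons. The first ingredient is a computational-adequacy statement for the finitary model $\Ss$ built from $\Aa$: for every closed term $M$ of type $o$, reading the base value $\sem{M}^\Ss\subseteq Q$ as a set of states, one has $q\in\sem{M}^\Ss$ iff $\Aa$ accepts $BT(M)$ from $q$. The second ingredient is the type--model correspondence (Theorem~\ref{thm:types_correctness_completeness}): each intersection type $S\subseteq\Types_A$ denotes an element $d_S\in\Ss_A$, the subsumption order $\tleq$ coincides with the lattice order, and $\G\vdash M\geq S$ is derivable iff $\sem{M}^\Ss\geq d_S$ under the valuation read from $\G$. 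Combining the two at $S=\{q\}$, where $d_{\{q\}}$ is the join-irreducible element attached to $q$ and $\sem{M}^\Ss\geq d_{\{q\}}$ is exactly $q\in\sem{M}^\Ss$, yields the theorem.

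For the model I would take the base lattice $\Ss_o$ to encode the state set with its ranks, ordered so that $\tleq_o$ is the lattice order (so that singletons $\{q\}$ are the join-irreducibles and acceptance is read off componentwise), interpret each constant $a\in\S^{(i)}$ by the monotone function read directly from the transition function $\delta_i$, interpret $\W^A$ as the top element, and --- crucially --- interpret $Y^A$ not as a single greatest fixpoint, as in the GFP models of Theorem~\ref{thm:GFP-model}, but as an \emph{alternating} fixpoint: greatest fixpoints on the even strata and least fixpoints on the odd strata, computed from the lowest rank upward and made well-defined by the weakness of $\Aa$, i.e.\ by the fact that ranks never increase along a play. The adequacy statement would then be proved by approximation: $BT(M)$ is the limit of its finite $\delta$-unfoldings $M_n$ (with $\W$ for the parts not yet produced), $\sem{M}^\Ss$ is the corresponding limit of the $\sem{M_n}^\Ss$, and for each finite approximant the base value in $\Ss$ is matched against the finite plays of the acceptance game by induction on the approximant. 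The alignment to aim for is that a play stabilizing at an even rank, a win for Eve, corresponds to the greatest-fixpoint (co-inductive) reading, whereas the requirement that the runs spawned at an odd rank must \emph{terminate} corresponds to the least-fixpoint (inductive) reading.

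For the type--model correspondence I would argue soundness by induction on typing rules and completeness by an expressibility argument. Soundness amounts to checking that every rule of Figure~\ref{fig:type_system} denotes a valid inequality in $\Ss$: the axiom, intersection and subsumption rules are immediate from the lattice structure and the definition of $d_S$; the constant rules merely reproduce the interpretations of $a$ and $c$; application and abstraction are the usual adjunction between $\sem{MN}$ and $\sem{\lambda x.M}$; and the two fixpoint rules must be shown to under-approximate the alternating fixpoint, the odd rule being plain Kleene iteration of the least fixpoint from $\es$, and the even rule certifying that $S$ is a post-fixpoint of the rank-$2k$ component sitting on top of the value $T$ already secured on the strata below. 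Completeness requires that the types can name every compact element of each $\Ss_A$ and that the fixpoint rules capture the fixpoints exactly, not merely soundly; here one uses the finiteness of every $\Ss_A$, so that all fixpoints are reached by finite iteration, and reads a derivation back off the iteration stages.

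The step I expect to be the main obstacle is exactly this even fixpoint rule together with its matching in the model. Making the stratified alternating fixpoint coincide with the value delivered by the acceptance game is delicate precisely because it interleaves a greatest fixpoint on the current even stratum with an already-computed least fixpoint imported as $T$ from the odd stratum below, and one must verify that weakness makes this stratified computation equal to the naive game value; both directions of adequacy and the exact completeness of the even rule hinge on this single point. I would therefore isolate it as a separate lemma about the alternating fixpoint operator on $\Ss_{A\to A}$ and prove that lemma by induction on the number of remaining strata, so that the outer inductions on terms can treat the fixpoint case as a black box.
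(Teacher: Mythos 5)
Your overall factorization coincides with the paper's: adequacy of the stratified model (Theorem~\ref{thm:model-correct}, $\sem{M}^k=\Aa(M)\cap\Qlk$) plus the type--model correspondence (Theorem~\ref{thm:types_correctness_completeness}), specialized to $S=\set{q}$; your model, with greatest fixpoints on even strata and least on odd ones, is exactly the paper's $\Dd^k$, and your sketch of the type--model half (soundness by rule induction, completeness via expressibility of every model element by a type, with the fixpoint rules matched to finite iteration) is the paper's Lemmas~\ref{lem:sem_representations_with_types}--\ref{thm:completeness}. The genuine gap is in your adequacy argument. First, $\W^A$ cannot be interpreted as the top element: in the acceptance game Eve wins at an $\W$-labelled leaf iff the current rank is even, so at base type the value of a term without head normal form must be the set of \emph{even-rank} states in $\Qlk$, whereas $\top^k_o=\Qlk$ also contains the odd-rank ones. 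Second, and more seriously, the claim that $\sem{M}^\Ss$ is the limit of the values $\sem{M_n}^\Ss$ of the finite $\delta$-unfoldings is false for this model, precisely because its fixpoints are non-extremal. Take the automaton of the paper's first typing example, a single state $q$ of rank $1$ with $\delta(q,a)=\set{q}$, and $M=Yx.\,ax$. With $\W=\top$ every approximant $a^n\W$ has value $\set{q}$ (since $\sem{a}(\set{q})=\set{q}$), so your limit is $\set{q}$; but $\fix^1$ is a \emph{least} fixpoint on this odd stratum, so $\sem{M}=\es$ --- which is the correct game value, as $a^\omega$ is rejected from $q$. Even after repairing $\sem{\W}$, the sequence $(\sem{M_n})_n$ need not be monotone, and the alternating fixpoint admits no Kleene-style characterization by iterating from a single seed, so ``the corresponding limit'' is not even well-defined; avoiding exactly this syntactic-continuity failure is why the stratified construction exists.

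At this point the paper argues differently, and you would need to adopt something like its strategy. The inclusion $\sem{M}^k\incl\Aa(M)$ is proved by building Eve's winning strategy directly from head normal forms, by induction on the rank, with no approximants at all. The converse inclusion is proved by a logical relation $R^k$ between $\Dd^k$ and \emph{closed terms} (Lemma~\ref{lemma:accepted_is_in_semantics}): the case of $Y$ uses the outer induction on strata --- relate $\fix^{k-1}$ to $Y$ at level $k-1$, transport along the Galois connection via $(\cdot)\upsup$ (Lemma~\ref{lemma:between-two-R}), then run a \emph{finite} decreasing iteration $f^i((\fix^{k-1}(f\down))\upsup)$ inside stratum $k$, which reaches the fixpoint because the lattices are finite. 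This is how weakness is threaded through without any limit of syntactic approximants; your isolated ``alternating fixpoint lemma'' is sound as a plan (it corresponds to Lemma~\ref{lemma:fix-exists} and Corollary~\ref{cor:fixpoints}), but it must be proved and consumed by such a relation or game argument, not by unfolding with $\W$. With that replacement, the rest of your proposal goes through as in the paper.
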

Since there are finitely many types, this typing system is decidable.
As we will see in the following example, this type system allows us to
prove in a rather simple manner properties of B\"ohm trees that are
beyond the reach of trivial automata.  Compared to Kobayashi and Ong
type system~\cite{kobayashi09:_type_system_equiv_to_modal} and to
Tsukada and Ong type system~\cite{TsuOng14}, the
fixpoint typing rules we propose do not refer to an external parity
game.
Our type system does not require \emph{flagged
  types}, and is
on the contrary based on a standard treatment of free variables. In
the example below we use fixpoint rules on terms of order $2$.

In order to prove Theorem~\ref{thm:main} we will need to formulate and
prove a more general statement that concerns terms of all types
(Theorem~\ref{thm:types_correctness_completeness}). To describe the
properties of the type system in higher types, we will construct a
model from our fixed automaton $\Aa$, and show
(Theorem~\ref{thm:model-correct}) that the model recognizes $L(\Aa)$
in the sense of Definition~\ref{df:recognizability}. Then
Theorem~\ref{thm:types_correctness_completeness} will say that the
type system reflects the values of the terms in the model.

Due to the symmetries in weak alternating automata, and in the model
we are going to construct, we will obtain also a dual type system. This
system can be used to show that the B\"ohm tree of a term is not
accepted by the automaton.

\begin{figure*}
  \centering
   \begin{prooftree}
    S\subseteq T\subseteq Q%
    \justifies%
    S \tgeq_o T
  \end{prooftree}
  \qquad
  \begin{prooftree}
    \forall s \in S, \exists t\in T,
    s\tgeq_A t%
    \justifies%
    S\tgeq_A T
  \end{prooftree}
  \qquad
  \begin{prooftree}
    s=t%
    \justifies%
    s \tgeq_o t
  \end{prooftree}
  \qquad
  \begin{prooftree}
    T\tgeq_A S\quad s\tgeq_B t%
    \justifies%
    S\to s\tgeq_{A\to B} T\to t%
  \end{prooftree}

  \medskip



\begin{prooftree}
\ %
  \justifies%
  \G \vdash c \geq \set{q: \delta_o(q,c)=\es}
\end{prooftree}
\quad
\begin{prooftree}
  \forall (S_1,S_2)\in\delta(a,q), (T_1\cap S_1)\cup (T_2\cap S_2)\neq \es%
  \justifies%
  \G \vdash a \ngeq T_1 \to T_2 \to q%
\end{prooftree}

\medskip

\begin{prooftree}
  \G \vdash M \ngeq S %
  \quad%
  \G \vdash N \ngeq T
  \justifies%
  \G\vdash MN \ngeq S(T)%
\end{prooftree}
\quad
\begin{prooftree}
  S\in\Types^{k},\, T\subseteq \types^k%
  \quad%
  \G, x\ngeq S \vdash M\ngeq T%
  \justifies%
  \G \vdash  \lambda x. M \ngeq S \to T%
\end{prooftree}

\medskip

\begin{prooftree}
  S \subseteq \types^{2k+1}_{A},\quad%
  T\in \Types^{2k}_{A},\quad%
  \G\vdash \lambda x.M\ngeq (S\cup T) \to S%
  \quad%
  \G\vdash Y x.M\ngeq T%
  \justifies%
  \G \vdash Y x.M\ngeq S\cup T%
  \using Y\,\mathit{odd}%
\end{prooftree}

\medskip

\begin{prooftree}
  \G \vdash (\lambda x. M)\ngeq S%
  \quad%
  \G\vdash (Y x .M) \ngeq T%
  \justifies%
  \G \vdash Y x.M \ngeq S(T)%
  \using Y\,\mathit{even}%
\end{prooftree}

\caption{Dual type system}\label{fig:dual_type_system}

\end{figure*}

The dual type system is presented in Figure~\ref{fig:dual_type_system}
on page~\pageref{fig:dual_type_system}. The notation is as before but
we now define $S(T)$ to be $\set{s : U\to s\in S\land U\tgeq T}$. The
rules for application, abstraction and variable do not change.  By duality we
obtain:
\begin{cor}
  For every closed term $M$ of type $o$ and every state $q$ of $\Aa$:
  judgment $\vdash M\ngeq q$ is derivable iff $\Aa$ does not accept $BT(M)$ from $q$.
\end{cor}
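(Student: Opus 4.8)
The plan is to derive the corollary from Theorem~\ref{thm:main} by a single duality reduction, rather than re-running the entire soundness/completeness argument with the roles of Eve and Adam exchanged. Concretely, I would exhibit a \emph{dual automaton} $\bar\Aa$ whose \emph{primal} type system (that of Figure~\ref{fig:type_system}) coincides, symbol for symbol, with the dual type system of Figure~\ref{fig:dual_type_system} for $\Aa$, and which accepts exactly the complement of what $\Aa$ accepts.

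First I would define $\bar\Aa=\struct{Q,\S,q^0,\set{\bar\delta_i}_{i\in\Nat},\bar\rho}$ over the same states, with shifted ranks $\bar\rho(q)=\rho(q)+1$ and with $\bar\delta_i(q,a)$ the transversal (De Morgan) dual of $\delta_i(q,a)$: the set of tuples $(T_1,\dots,T_i)$ such that for every $(S_1,\dots,S_i)\in\delta_i(q,a)$ one has $\bigcup_{1\le j\le i}(T_j\cap S_j)\neq\es$; for $i=0$ one sets $\bar\delta_0(q,c)\neq\es$ iff $\delta_0(q,c)=\es$. Shifting all ranks by $1$ preserves $\rho(q')\le\rho(q)$, so $\bar\Aa$ is again weak. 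The semantic core is the complementation lemma: for every tree $t$ and state $q$, $\bar\Aa$ accepts $t$ from $q$ iff $\Aa$ does not accept $t$ from $q$. This follows from determinacy: the acceptance games of Section~\ref{sec:preliminaries} are weak parity games (ranks are non-increasing along any play, hence stabilize), which are positionally determined, so exactly one player wins from $(\mathrm{root},q)$. Since the transversal dual is the De Morgan dual of the (monotone) transition, the acceptance game of $\bar\Aa$ is the dual game of $\Aa$'s game, and the shift $\bar\rho=\rho+1$ flips the parity of the eventual stable rank; hence a winning strategy for Adam in $\Aa$'s game is a winning strategy for Eve in $\bar\Aa$'s game and conversely. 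The leaf and $\W$ clauses are handled by the definition of $\bar\delta_0$ and by the rank shift, respectively.

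Second, I would verify that the dual system of Figure~\ref{fig:dual_type_system} applied to $\Aa$ is literally the primal system of Figure~\ref{fig:type_system} applied to $\bar\Aa$. The intersection types of the two automata are the same syntactic objects, since $\bar\types^k_o=\set{q:\bar\rho(q)=k}=\types^{k-1}_o$, so the only change is a uniform shift of the rank index. Under this shift the dual subsumption $\tgeq$ is the primal subsumption $\tleq$ of $\bar\Aa$ (the base order of $\bar\Aa$ being the order-dual of that of $\Aa$), and the dual type application $S(T)=\set{s:U\to s\in S\land U\tgeq T}$ becomes the primal application for $\bar\Aa$. Rule by rule: the dual leaf axiom records exactly $\set{q:\bar\delta_0(q,c)\neq\es}$; the dual node rule $a\ngeq T_1\to T_2\to q$ fires exactly when $(T_1,T_2)\in\bar\delta(q,a)$; the application, abstraction and variable rules are unchanged; the dual rule $Y\,\mathit{even}$ (plain iteration) is the $\bar\Aa$-instance of the primal $Y\,\mathit{odd}$ rule, and the dual rule $Y\,\mathit{odd}$ (strata $\types^{2k+1}$, $\Types^{2k}$) is, after the shift $\rho\mapsto\rho+1$, exactly the $\bar\Aa$-instance of the primal $Y\,\mathit{even}$ rule (strata $\types^{2k}$, $\Types^{2k-1}$).

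Putting these together, for a closed term $M$ of type $o$ and a state $q$: the judgment $\vdash M\ngeq q$ in the dual system for $\Aa$ is derivable iff $\vdash M\geq q$ is derivable in the primal system for $\bar\Aa$ (syntactic identification), iff $\bar\Aa$ accepts $BT(M)$ from $q$ (Theorem~\ref{thm:main} applied to $\bar\Aa$), iff $\Aa$ does not accept $BT(M)$ from $q$ (complementation lemma); this is exactly the corollary. I expect the main obstacle to be the bookkeeping in the second step: checking that the two differently-presented fixpoint rules really coincide after the rank shift, that the order reversal $\tleq\leftrightarrow\tgeq$ stays consistent throughout (including inside the redefinition of $S(T)$), and that the stratum side-conditions translate correctly. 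The complementation lemma is routine given determinacy of weak games, but one must confirm that the transversal dual $\bar\delta$ satisfies the monotonicity convention of page~\pageref{df:acceptance-game} and that $\bar\Aa$ indeed inherits weakness from the rank shift.
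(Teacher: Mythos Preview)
Your approach is correct and genuinely different from the paper's. The paper obtains the corollary as a consequence of Theorem~\ref{thm:dual_type_system_correct_and_complete}: it stays inside the \emph{same} model $\Dd^m$, defines a dual interpretation $\dint{\cdot}^k$ of types via meets and co-step functions (so that $\dint{q}^m=Q\setminus\{q\}$), and proves that the dual system derives exactly the upper bounds $\sem{M}\le\dint{S}$; together with Theorem~\ref{thm:model-correct} this yields $\vdash M\ngeq q$ iff $q\notin\Aa(M)$. You instead externalize the duality to the automaton: build the complement $\bar\Aa$ by rank shift and transversal-dual transitions, identify the dual system for $\Aa$ syntactically with the primal system for $\bar\Aa$, and invoke Theorem~\ref{thm:main} for $\bar\Aa$. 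Your route is more elementary in that it uses Theorem~\ref{thm:main} as a black box and standard weak-automaton complementation, avoiding any second pass through the model; the paper's route, in exchange, keeps both type systems tied to a single semantic object, which is what yields the joint characterization of Corollary~\ref{coro:value_characterization} and fits the ``domains in logical form'' picture. One small remark on your write-up: the parenthetical ``the base order of $\bar\Aa$ being the order-dual of that of $\Aa$'' is misleading, since the primal subsumption $\tleq$ is defined purely from the type structure and does not depend on any order carried by the automaton; the point is simply that the defining rules of $\tleq$ and $\tgeq$ are formally identical, so the two relations coincide, and hence the two versions of $S(T)$ coincide as well.
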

So the two type systems together allow us to derive both positive and
negative information about a program.

We finish this section with two moderate size examples of typing
derivations.

\begin{exa}
Take a signature consisting of two constants $c:o$ and $a:o\to o$. We
consider an extremely simple weak alternating automaton with just
one state $q$ of rank $1$ and transitions:
\begin{equation*}
  \delta(q,a)=\set{q}\qquad \delta(q,c)=\es\ .
\end{equation*}
This automaton accepts the finite sequences of $a$'s ending in $c$.
Observe that these transition rules give us typing axioms
\begin{center}
  \begin{prooftree}
    \
    \justifies
    \vdash a\geq \set{q}\to q
  \end{prooftree}
\qquad
  \begin{prooftree}
    \
    \justifies
    \vdash c\geq q
  \end{prooftree}
\end{center}
Notice that we omit some set
parenthesis over singletons; so for example we write $c\geq q$ instead
of $c\geq \set{q}$. In this example we will still keep the parenthesis
to the left of the arrow to emphasize that we are in our type system, and not in
simple types. In the next example we will omit them too.

First, let us look at a term $G_1\equiv \lambda f^{o\to
  o}\lambda x^o.\ f(f x)$. It has a simple type $\tau_1\equiv (o\to o)\to
o \to o$. The judgment $\vdash G_1\geq \g_1$ is derivable in our
system; where $\g_1\equiv\set{\set{q}\to q}\to\set{q}\to q$.
\begin{center}
  \begin{prooftree}
  \begin{prooftree}
    \G\vdash f\geq\set{q}\to q
    \quad
  \begin{prooftree}
    \G\vdash f\geq\set{q}\to q
    \quad
    \G\vdash x\geq q
    \justifies
    \G\vdash fx\geq q
  \end{prooftree}
  \justifies
    \G\vdash f(fx)\geq q
  \end{prooftree}
\justifies
\vdash \lambda f\lambda x.\ f(f x) \geq \g_1
\end{prooftree}
\end{center}
here $\G\equiv f\geq \set{q}\to q,\ x\geq q$.

Consider now $G_2\equiv \lambda f^{\tau_1}\lambda x^{o\to o}.\ f(f x)$. We have
that $G_2$ is of type $\tau_2\equiv \tau_1\to\tau_1$. A derivation very
similar to the above
will show $\vdash G_2\geq \g_2$ where $\g_2\equiv \set{\g_1}\to\g_1$.

Then by induction we can define $G_i\equiv \lambda f^{\tau_{i-1}}\lambda
x^{\tau_{i-2}}.\ f(f x)$ that is of a type $\tau_i\equiv
  \tau_{i-1}\to\tau_{i-1}$. Still a similar derivation as above will show $\vdash
  G_i\geq \g_i$ where $\g_i\equiv \set{\g_{i-1}}\to\g_{i-1}$.

One use of application rule then shows that $G_i G_{i-1}\geq \g_{i-1}$:
\begin{center}
  \begin{prooftree}
    \vdash G_i\geq \set{\g_{i-1}}\to\g_{i-1}
    \quad
    \vdash G_{i-1}\geq \g_{i-1}
    \justifies
    \vdash G_i G_{i-1}\geq \g_{i-1}
  \end{prooftree}
\end{center}

In consequence, we can construct by induction a derivation of
\begin{equation*}
  \vdash G_iG_{i-1}\dots G_1 a c\geq q
\end{equation*}
This derivation proves that the B\"ohm tree of $G_iG_{i-1}\dots G_1 a
c$ is a sequence of $a$'s ending in $c$. While the length of this sequence
is a tower of exponentials in the height $i$, the typing derivation we have
constructed is linear in $i$ (if types are represented
succinctly). This simple example, already analyzed in~\cite{KobACM}, shows the power of modular
reasoning provided by the typing approach.
 We should note though that if
the initial automaton had two states, the number of
potential types would also be  roughly the tower of exponentials
in $i$. Due to the complexity bounds~\cite{terui_semantic_evaluation},
there are terms and automata for which there is no small
derivation. Yet one can hope that in many cases a small derivation
exists. For example, if we wanted to show that the length of the
sequence is even then the automaton would have two states but the
derivation would be essentially the same.
\end{exa}

\begin{exa} Consider the term $M=(YF.N) a$ where $N=\lambda
g.g(b(F(\lambda x. g(g\, x))))$. As we have seen on page~\pageref{ex:one},
$BT(M) = ab a^2 b a^4 b\dots a^{2^n}b \dots$. We show with typing that
there are infinitely many occurrences of $b$ in $BT(M)$. To this end
we take an automaton having states $Q=\set{q_1,q_2}$, and working over the
signature containing $a$ and $b$.  The transitions of the automaton are:
  $$\delta(q_1, a)=\set{q_1}, \quad  \delta(q_2, a) = \set{q_1,q_2},
  \quad \delta(q_1, b)   =\es, \quad
  \delta(q_2, b) = q_2\ .$$
The ranks of states are indicated by their subscripts. Starting with
state $q_2$, the automaton only accepts sequences that contain
infinitely many $b$'s.
 So our goal is to derive $\vdash
(YF.N)a\geq q_2$.  First observe that from the definition of the
transitions of the automaton we get axioms:
\begin{center}\small
\begin{prooftree}
\
\justifies
\vdash a\geq q_1\to q_1
\end{prooftree}
\quad
\begin{prooftree}
\
\justifies
\vdash a\geq \set{q_1,q_2}\to q_2
\end{prooftree}
\quad
\begin{prooftree}
\
\justifies
\vdash b\geq \es\to q_1
\end{prooftree}
\quad
\begin{prooftree}
\
\justifies
\vdash b\geq q_2\to q_2
\end{prooftree}
\end{center}
Looking at the typings of $a$, we can see that we will get our desired
judgment from the application rule if we prove:
\begin{equation*}\small
  \vdash YF.N\geq S\qquad\text{where $S$ is $\set{q_1\to
      q_1,\set{q_1,q_2}\to q_2}\to  q_2$.}
\end{equation*}
To this end, we apply the subsumption rule and the greatest fixpoint rule:
\begin{center}\small
    \begin{prooftree}
      \begin{prooftree}
          \vdash \lambda F.N\geq (S\cup T)\to S
        \quad
          \vdash YF.N\geq T
        \justifies%
        \vdash YF.N \geq S\cup T\using Y\,\textit{even}
      \end{prooftree}
      \justifies%
      \vdash YF.N \geq S
    \end{prooftree}
\qquad
\text{where $T=\set{(q_1\to q_1)\to q_1}$}
\end{center}
The derivation of the top right judgment uses the least fixpoint
rule:

\begin{center}
  \small
  \begin{prooftree}
    \begin{prooftree}
      \begin{prooftree}
        \begin{prooftree}
          \justifies
          g\geq q_1\to q_1 \vdash g\geq q_1\to q_1
        \end{prooftree}
        \quad
          g\geq q_1\to q_1 \vdash b(F(\lambda x.g(g\,x)))\geq q_1
        \justifies%
        g\geq q_1\to q_1 \vdash g(b(F(\lambda x.g(g\,x))))\geq q_1
      \end{prooftree}
      \justifies%
      \vdash \lambda F\lambda g.g(b(F(\lambda x.g(g\,x)))) \geq \es \to(q_1\to q_1 )\to
      q_1
    \end{prooftree}
    \justifies%
    \vdash YF. N \geq (q_1\to q_1)\to q_1
    \using Y\,\mathit{odd}%
  \end{prooftree}
\end{center}
We have displayed only one of the two premises of the $Y odd$ rule
since the other is of the form $\geq \es$ so it is vacuously true.
The top right judgment is derivable directly from the axiom on $b$.
The derivation of the remaining judgment $\vdash \lambda F.N\geq (S\cup T)\to S$ is as
follows.
\begin{center}\small
  \begin{prooftree}
    \begin{prooftree}
      \begin{prooftree}
        \justifies
        \G \vdash g\geq \set{q_1,q_2}\to q_2
      \end{prooftree}

      \quad
      \G \vdash b(F(\lambda x. g(g\,x))) \geq q_1,q_2
      \justifies%
      \G \vdash  g(b(F(\lambda x. g(g\,x)))) \geq q_2
    \end{prooftree}
    \justifies%
    \vdash \lambda F \lambda g. g(b(F(\lambda x. g(g\,x)))) \geq (S\cup T)\to S
  \end{prooftree}
\end{center}
where $\G$ is $F\geq S\cup T, g\geq \set{q_1\to q_1,\set{q_1,q_2}\to
  q_2}$. So the upper left judgment is an axiom. The other judgment
on the top is an abbreviation of two judgments: one to show $\geq
q_1$ and the other one to show $\geq q_2$. These two judgments are proved
directly using application and intersection rules.
\end{exa}



\section{Models for weak automata}
\label{sec:model-wmso}

We describe a construction of a model that recognizes, in a sense of
Definition~\ref{df:recognizability}, the language defined by a weak
automaton. The model depends only on the states of the automaton and
their ranks. The transitions of the automaton will be encoded in the
interpretation of constants. We shall work with (finite) complete
lattices and with monotone functions between complete lattices.
In the first subsection we define the basic structure of the model.
Its properties will allow us later to define fixpoints and show that indeed
we can interpret the $\lambda Y$-calculus in the model. In the last subsection
we will show that with an appropriate interpretation of constants the
model can recognize the language of a given weak automaton.


The challenge in this construction comes from the fact that using only
the least or only the greatest fixpoints is not sufficient. Indeed, we
have shown in~\cite{SalWalTLCA} that extremal fixpoints in finitary
models of $\lambda Y$-calculus capture precisely boolean combinations of
properties expressed by automata with trivial acceptance conditions.
The structure of a weak automaton will help us here. For the sake of
the discussion let us fix an automaton $\Aa$, and let $\Aa_{\leq k}$ stand
for $\Aa$ restricted to states of rank at most $k$.
Ranks stratify the automaton:
transitions for states of rank $k$ depend only on states of rank at most
$k$. We will find this stratification in our model too. The
interpretation of a term at stratum $k$
will give us the complete information about the behaviour of the term
with respect to $\Aa_{\leq k}$.
Stratum $k+1$ will refine this information.
Since in a run the ranks cannot increase, the information calculated
at stratum $k+1$ does not change what we already know about $\Aa_{\leq
  k}$.
Abstract interpretation tells us that refinements of models are
obtained via Galois connections which are instrumental in our
construction.
In our model, every element in the stratum $k$ is refined into a
complete lattice in the stratum $k+1$ (cf.~Figure~\ref{fig:model}).
Therefore we will be able to define the interpretations of fixpoints
by taking at stratum $k$ the least or the greatest fixpoint depending
on the parity of $k$.
In the whole model, the fixpoint computation
will perform a sort of zig-zag as represented in
Figure~\ref{fig:fixpoint}.

\subsection{A stratified model}
We fix a finite set of states $Q$ and a ranking function $\rho:Q\to
\Nat$. Let $m$ be the maximal rank, i.e., the maximal value $\rho$ takes
on $Q$.  Recall that for every $0\leq k\leq m$ we let $Q_k=\set{q\in Q
  : \rho(q)=k}$ and $\Qlk=\set{q\in Q: \rho(q)\leq k}$.

  Given two complete lattices
$\Ll_1$ and $\Ll_2$ we write $\monf{\Ll_1}{\Ll_2}$ for the complete
lattice of monotone functions between $\Ll_1$ and $\Ll_2$.

We define by induction on $k\leq m$ an applicative structure $\Dd^k =
(\Dd^k_A)_{A\in \types}$ and a logical relation $\Ll^k$ (for $0<k$)
between $\Dd^{k-1}$ and $\Dd^{k}$. 
For $k=0$, the model $\Dd^0$ is
just the model of monotone functions over the powerset of $Q_0$ with
\begin{equation*}
  \Dd^0_o = \Pp(Q_0)\quad\text{ and }\quad \Dd^0_{A\to B}=\monf{\Dd^0_A}{\Dd^0_B}.
\end{equation*}
For $k> 0$, we define $\Dd^k$ by means of $\Dd^{k-1}$ and a logical
relation~$\Ll^k$:

\begin{align*}
  \Dd^{k}_{o} =& \Pp(\Qlk)\quad
 \Ll^{k}_o = \set{(R,P)\in\Dd^{k-1}_o\times\Dd^{k}_o\ :\ R =
    P\cap Q_{\leq (k-1)}},\\
\Ll^{k}_{A\to B} =& \{(f_1,f_2) \in \Dd^{k-1}_{A\to B} \times
    \monf{\Dd^{k}_{A}}{\Dd^{k}_{B}}:\\
    &\hspace{5cm}
    \forall (g_1,g_2) \in
    \Ll^k_{A}.\ (f_1(g_1), f_2(g_2)) \in \Ll^{k}_{B}\}\\
\Dd^{k}_{A\to B} =& \set{f_2\ :\ \exists f_1\in\Dd^{k-1}_{A\to B}.\ (f_1,f_2)\in\Ll^{k}_{A\to B}}
\end{align*}

Observe that  $\Dd^k_A$ is defined by a double induction:
the outermost on $k$ and the auxiliary induction on the structure of the
type.  Since $\Ll^k$ is a logical relation between $\Dd^{k-1}$ and
$\Dd^k$, each $\Dd^k$ is an applicative structure: for all $f$ in
$\Dd_{A\to B}^k$ and $g$ in $\Dd_{A}^k$, $f(g)$ is in $\Dd_B^k$. As $\Dd^{k-1}_o =
\Pp(Q_{\leq (k-1)})$, the refinements of elements $R$ in $\Dd^{k-1}_o$
are simply the sets  $P$ in $\Pp(Q_{\leq k})$ so that $R = P\cap
Q_{\leq (k-1)}$.  This explains the definition of $\Ll^k_o$. For higher
types, $\Ll^k$ is defined as it is usual for logical relations. Notice
that $\Dd^k_{A\to B}$ is the subset of the monotone functions $e$ from
$\Dd^k_A$ to $\Dd^k_{B}$ for which there exist an element $d$ in
$\Dd^{k-1}_{A\to B}$ so that $(d,e)$ is in $\Ll_{A\to B}^k$; that
is we only keep those monotone functions that correspond to
refinements of elements in $\Dd^{k-1}_{A\to B}$.

Remarkably this construction puts a lot of structure on
$\Dd^k_A$. We review this structure here, and provide necessary
justification in lemmas that follow.
The first thing to notice is that for each type $A$, the set $\Dd^k_A$ is a
complete lattice.  Given $d$ in $\Dd^{k-1}_A$, we write $\Ll^k_A(d)$
for the set $\set{e\in \Dd^k_A: (d,e)\in\Ll^k_A}$.  For each $d$, we
have that $\Ll^k_A(d)$ is a complete lattice and that moreover, for
$d_1$ and $d_2$ in $\Dd^{k-1}_A$, $\Ll^k_A(d_1)$ and $\Ll^k_A(d_2)$
are isomorphic complete lattices. We write $d\upsup$ and $d\upinf$
respectively for the greatest and the least elements of $\Ll^k_A(d)$.
Finally, for each element $e$ in $\Dd^k_A$, there is a unique $d$ so
that $(d,e)$ is in $\Ll^k_A$, we write $e\down$ for that
element.  Figure~\ref{fig:model} represents schematically these
essential properties of $\Dd^k_A$. Notice that, in
Figure~\ref{fig:model}, for every
element $e'$ in $\Ll^k(e)$ (or $d'$ in $\Ll^k(d)$) we have
$e'^\downarrow = e$ (or $d'^\downarrow = d$). All these properties are
consequences of the following lemma and of
Lemma~\ref{lem:Nn_k_decomposition}.

\begin{figure}[tbh]
\medskip

  \centering
  \includegraphics[scale=.75]{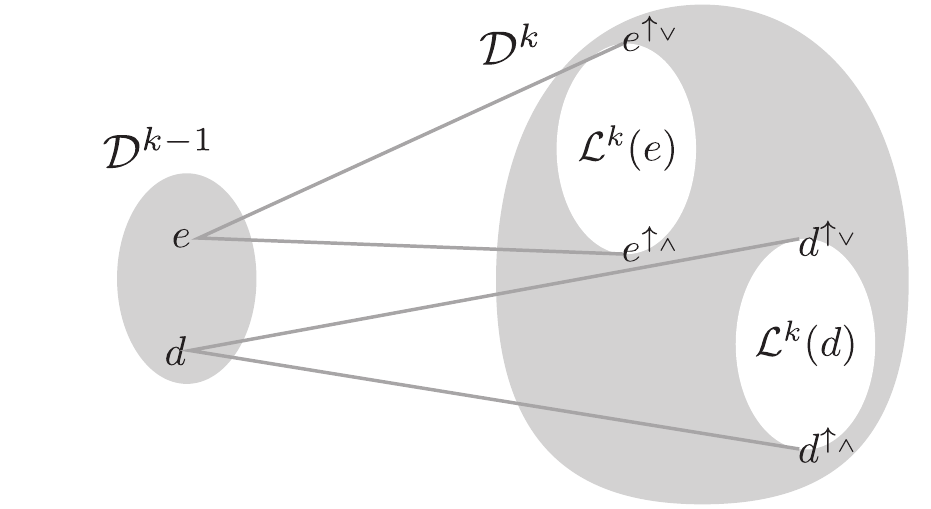}
\medskip
  \caption{Relation between models $\Dd^{k-1}$ and $\Dd^k$. Every
    element in $\Dd^{k-1}$ is related to a sub-lattice of elements in $\Dd^k$. }
  \label{fig:model}
\end{figure}

\begin{lem}
  \label{lem:galloi_connections}
  For every $0< k\leq m$,  and every type $A$, we have:
  \begin{enumerate}
    \setcounter{enumi}{-1}
  \item $\Ll^k_A$ is a lattice: if $(d_1,d_2)$, $(e_1,e_2)$ are in
    $\Ll^k_A$, then so are $(d_1\vee e_1, d_2\vee e_2)$ and $(d_1\land
    e_1, d_2\land e_2)$.
  \item Given $(d_1,d_2)$ and $(e_1,e_2)$ in $\Ll^k_A$, if $d_2\leq
    e_2$ then $d_1\leq e_1$.
  \item Given $d_2$ in $\Dd^k_A$, there is a unique $d_1$ in
    $\Dd^{k-1}_{A}$, so that $(d_1,d_2)$ in $\Ll^k_A$. Let us denote
    this unique element $d_2\down$.
  \item If $d_1\leq e_1$ in $\Dd^{k-1}_A$, then:
    \begin{enumerate}
    \item there is $e_1\upsup$ in $\Ll^k_A(e_1)$ so that for every
      $d_2$ in $\Ll^k_A(d_1)$ we have $d_2\leq e_1\upsup$,
    \item there is $d_1\upinf$ in $\Ll^k_A(d_1)$ so that for every
      $e_2$ in $\Ll^k_A(e_1)$ we have $d_2\upinf\leq e_2$.
    \end{enumerate}
  \end{enumerate}
\end{lem}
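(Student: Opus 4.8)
The plan is to prove items (0)--(3) simultaneously by induction on the structure of the type $A$, since each clause at a function type $B\to C$ reduces to the corresponding clauses at the strictly smaller types $B$ and $C$. The base case $A=o$ is a direct computation from $\Ll^k_o=\set{(R,P):R=P\cap \Ql{k-1}}$: for (0), closure under $\cup$ and $\cap$ holds because intersecting with the fixed set $\Ql{k-1}$ commutes with both operations; for (1), $P\subseteq P'$ yields $P\cap\Ql{k-1}\subseteq P'\cap\Ql{k-1}$; for (2), the unique preimage of $P\in\Pp(\Qlk)$ is $P\cap\Ql{k-1}$; and for (3) the fibre of $R$ consists of the sets $P\subseteq\Qlk$ with $P\cap\Ql{k-1}=R$, whose greatest element is $R\cup Q_k$ and whose least element is $R$ itself, the cross-fibre domination being immediate.

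For the inductive step, with $A=B\to C$, the crux is to build the extremal elements of each fibre explicitly. Given $f_1\in\Dd^{k-1}_{B\to C}$ I would define $f_1\upsup(g_2)=(f_1(g_2\down))\upsup$ and $f_1\upinf(g_2)=(f_1(g_2\down))\upinf$, where $(\cdot)\down$ is the down-projection at type $B$ supplied by IH~(2) and $(\cdot)\upsup,(\cdot)\upinf$ are the extremal maps at type $C$ supplied by IH~(3). Monotonicity of these functions follows from monotonicity of $\down$ (IH~(1)) and of $\upsup,\upinf$ (IH~(3)), and one checks directly that $(f_1,f_1\upsup),(f_1,f_1\upinf)\in\Ll^k_{B\to C}$, using that every $(g_1,g_2)\in\Ll^k_B$ satisfies $g_1=g_2\down$ by IH~(2). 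This shows at once that each fibre $\Ll^k_{B\to C}(f_1)$ is nonempty and that $f_1\upsup$ (resp.\ $f_1\upinf$) is its greatest (resp.\ least) element: for any $h_2\in\Ll^k_{B\to C}(f_1)$ and any $g_2$, the pair $(f_1(g_2\down),h_2(g_2))$ lies in $\Ll^k_C$, so IH~(3) pins $h_2(g_2)$ between $(f_1(g_2\down))\upinf$ and $(f_1(g_2\down))\upsup$.

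The remaining clauses at $B\to C$ then follow. For (3) with $f_1\leq h_1$, I combine the extremality just established with monotonicity of $\upsup,\upinf$ at $C$, chaining $f_2(g_2)\leq(f_1(g_2\down))\upsup\leq(h_1(g_2\down))\upsup=h_1\upsup(g_2)$ and dually. For (2), existence is immediate from the definition of $\Dd^k_{B\to C}$, while uniqueness follows by evaluating any two witnesses $(f_1,f_2),(f_1',f_2)$ at a refinement $g_2$ of each $g_1\in\Dd^{k-1}_B$ (nonemptiness of fibres at $B$) and applying IH~(2) at $C$ to get $f_1(g_1)=(f_2(g_2))\down=f_1'(g_1)$. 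Clause (1) is analogous, evaluating at a refinement of each argument and invoking IH~(1) at $C$. Finally (0) reduces to IH~(0) at $C$ applied argumentwise, since joins and meets are computed pointwise; this simultaneously certifies that the result is a refinement and hence lands in $\Dd^k_{B\to C}$.

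The main obstacle is sequencing the induction so as to avoid a spurious circularity between (2) and (3): the construction of $f_1\upsup$ at $B\to C$ needs the down-projection at $B$ and the extremal maps at $C$, but these are furnished by the induction hypothesis at the strictly smaller types rather than by clause (3) at $B\to C$ itself, so no circularity actually arises. Once this order is fixed, the rest is routine logical-relation bookkeeping, the only recurring care being to verify that each function constructed is monotone and is a genuine refinement of an element of $\Dd^{k-1}$, so that it belongs to $\Dd^k$.
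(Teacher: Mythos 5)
Your proof follows essentially the same route as the paper's: a simultaneous induction on the type, with the fibre extrema at $B\to C$ built explicitly as $f_1\upsup(g_2)=(f_1(g_2\down))\upsup$ and $f_1\upinf(g_2)=(f_1(g_2\down))\upinf$, their monotonicity and membership in $\Ll^k_{B\to C}$ checked through the logical-relation clauses, and extremality and item (1) obtained by evaluating related pairs and invoking the induction hypothesis at $C$. The only cosmetic deviation is that you prove uniqueness in item (2) directly by evaluating two witnesses at refinements of each argument, whereas the paper derives it from item (1) by antisymmetry; both are sound.
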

\begin{proof}
  Item $0$ can be proved by a straightforward induction on the size of
  types.

  The proof of items $1,2,3$, is by simultaneous induction on the size
  of $A$.  Notice that item $2$ is an immediate consequence of item
  $1$.  We shall therefore not prove it, but we feel free to use it as
  an induction hypothesis.

  We start with the case when $A$ is the base type $o$:

  \noindent\textbf{Ad 1.} In that case
  $\Dd^k_A = \Qlk$, $e_1 = e_2\cap \Ql{k-1}$ and $d_1
  = d_2 \cap \Ql{k-1}$.  Thus, we indeed have that $d_2\leq e_2$ implies
  that $d_1\leq e_1$.

  \noindent\textbf{Ad 3.} Here, we have $\Dd^{k-1}_{A} = \Ql{k-1}$ and then
  letting $e_1\upsup = e_1\cup Q_{k}$ and $d_2\upinf = d_2$ is enough
  to conclude.

  Let us now suppose that $A = B\to C$:

  \noindent\textbf{Ad 1.} Given $f_1$ in $\Dd^{k-1}_B$,
  by induction hypothesis, using item 3, we know that there exist
  $f_2$ in $\Dd^{k}_B$ so that $(f_1,f_2)$ is in $\Ll^k_B$. Thus,
  we have $(d_1(f_1), d_2(f_2))$ and $(e_1(f_1), e_2(f_2))$ in
  $\Ll_{k,C}$. With the assumption that $d_2\leq e_2$, we obtain
  $d_2(f_2)\leq e_2(f_2)$. By induction hypothesis we get
  $e_1(f_1)\leq d_1(f_1)$. As $f_1$ is arbitrary, we can conclude that
  $e_1\leq d_1$.

  \noindent\textbf{Ad 3.}  By induction hypothesis, using item 2, for
  $f_2$ in $\Dd^k_B$, there is a unique element $f_2\down$ of
  $\Dd^{k-1}_B$ so that $(f_2\down,f_2)$ is in $\Ll^k_B$.  Given $h_1$
  in $\Dd^{k-1}_A$ we define for every element $f_2$ in $\Dd^k_B$:
  \begin{equation*}
    h_1\upsup(f_2)=(h_1(f_2\down))\upsup\qquad
    h_1\upinf(f_2)=(h_1(f_2\down))\upinf\ .
  \end{equation*}
  We will verify only item 3(a), the case of $h_1\upinf$ being analogous.

  We need to check that $h_1\upsup$ is in
  $\Dd^k_A$.  First of all we need to check that it is in
  $\monf{\Dd^k_B}{\Dd^{k}_C}$.  Take $g_2$ and $f_2$ in $\Dd^{k}_B$ so
  that $g_2\leq f_2$. By induction hypothesis, using item 1, we have
  that $g_2\down\leq f_2\down$. Then $h_1(g_2\down)\leq h_1(f_2\down)$
  by monotonicity of $h_1$. From item 3 of induction hypothesis we
  obtain $(h_1(g_2\down))\upsup\leq (h_1(f_2\down))\upsup$; proving
  that $h_1\upsup$ is monotone.

  Next, we show that
  $(h_1,h_1\upsup)$ is in $\Ll^k_A$. If
  we take $(f_2\down,f_2)$ in $\Ll_{k,B}$, we obtain by the induction
  hypothesis, item 3, that
  $(h_1(f_2\down),(h_1(f_2\down))\upsup)$ is in $\Ll^k_C$.  But, by
  our definition, this implies that $(h_1(f_2\down),
  h_1\upsup(f_2))$ is
  in $\Ll^k_C$. As $f_2$ is arbitrary we obtain that
  $(h_1,h_1\upsup)$ is indeed in $\Ll^k_A$
  proving then that $h_1\upsup$ is in $\Dd^k_A$.

  It remains to prove that, given $d_1\leq e_1$ in $\Dd^k_A$, for
  all $d_2$ in $\Ll^k_A(d_1)$ we have
  $d_2\leq e_1\upsup$.  Given $(f_2\down,
  f_2)$ in $\Ll^k_B$, we have $(d_1(f_2\down),d_2(f_2))$ in
  $\Ll^k_C$. Using induction hypothesis, item 3, we get $d_2(f_2)\leq
  (d_1(f_2\down))\upsup$. Since $d_1\leq e_1$ we obtain
  $d_2(f_2)\leq (e_1(f_2\down))\upsup$ that is the desired
  $d_2(f_2)\leq e_1\upsup(f_2)$.  As $f_2$ is arbitrary, this shows
  that $d_2\leq e_1\upsup$.
\end{proof}

Notice that the fact that $\Dd^k_A$ is a complete lattice is an
immediate consequence from the fact that $\Ll^k_A$ is a complete
lattice.

The formalization of the intuition that $\Dd^k_A$ is a refinement of
$\Dd^{k-1}_A$ is given by the fact that the mappings $(\cdot)\down$
and $(\cdot)\upsup$ form a Galois connection between $\Dd^{k}_A$ and
$\Dd^{k-1}_A$ and that $(\cdot)\down$ and $(\cdot)\upinf$ form a
Galois connection between $\Dd^{k-1}_A$ and $\Dd^{k}_A$.

\begin{cor}\label{coro:galois_connections}  \label{coro:Nn_is_lattice}
  For each $k\leq m$, and each type $A$, $\Dd^k_A$ is a non-empty
  lattice.

  The mappings $(\cdot)\down$ and $(\cdot)\upsup$
  form a Galois connection between $\Dd^{k}_A$ and $\Dd^{k-1}_A$.
  For every $d_2\in \Dd^k_A$ and $e_1\in \Dd^{k-1}_A$ we have:
$d_2\down \leq e_1\quad \text{ iff }\quad d_2\leq e_1\upsup$.

Similarly $(\cdot)\down$ and $(\cdot)\upinf$ form a Galois connection
between $\Dd^{k-1}_A$ and $\Dd^{k}_A$. For every $d_1\in
\Dd^{k-1}_A$ and $e_2\in \Dd^{k}_A$ we have:
  $d_1\upinf\leq e_2\quad \text{ iff }\quad d_1\leq e_2\down$.
\end{cor}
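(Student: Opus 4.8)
The plan is to derive all three claims directly from Lemma~\ref{lem:galloi_connections}, since that lemma already packages the delicate type-induction; what remains is elementary order-theoretic bookkeeping. All the lattices in sight are finite, so throughout I will use that a non-empty finite lattice is automatically complete, and I will not distinguish ``lattice'' from ``complete lattice''.

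First, the lattice claim. For $k=0$ there is nothing to refine: $\Dd^0_o=\Pp(Q_0)$ is a powerset lattice and, by the usual induction on types, each $\Dd^0_{A\to B}=\monf{\Dd^0_A}{\Dd^0_B}$ is a non-empty (finite) lattice. For $0<k\le m$ I would show that the second-coordinate projection $\pi_2\colon\Ll^k_A\to\Dd^k_A$, $(d_1,d_2)\mapsto d_2$, is an order isomorphism. It is surjective by the very definition of $\Dd^k_A$, injective by the uniqueness in item~2 (which is exactly what lets us write $d_2\down$), and monotone because the order on $\Ll^k_A$ is the coordinatewise order; its inverse $d_2\mapsto(d_2\down,d_2)$ is monotone by item~1 (if $d_2\le e_2$ then $d_2\down\le e_2\down$). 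Since $\Ll^k_A$ is a lattice by item~0, $\Dd^k_A$ is an isomorphic, hence non-empty, lattice.

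Next I would establish the Galois connection between $\Dd^k_A$ and $\Dd^{k-1}_A$. I first record that both maps are monotone: $(\cdot)\down$ by item~1, and $(\cdot)\upsup$ because item~3(a) applied to $d_1\le e_1$ gives $d_1\upsup\le e_1\upsup$ (take the witness $d_2=d_1\upsup\in\Ll^k_A(d_1)$). For the adjunction $d_2\down\le e_1\iff d_2\le e_1\upsup$ I prove the two implications separately. If $d_2\down\le e_1$, then with $d_1=d_2\down$ we have $d_2\in\Ll^k_A(d_1)$ and $d_1\le e_1$, so item~3(a) yields $d_2\le e_1\upsup$. Conversely, if $d_2\le e_1\upsup$, applying the monotone map $(\cdot)\down$ gives $d_2\down\le(e_1\upsup)\down$, and since $(e_1,e_1\upsup)\in\Ll^k_A$ the uniqueness in item~2 forces $(e_1\upsup)\down=e_1$, whence $d_2\down\le e_1$.

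The second connection is entirely symmetric, using least refinements: $(\cdot)\upinf$ is monotone by item~3(b), and for $d_1\in\Dd^{k-1}_A$, $e_2\in\Dd^k_A$ one gets $d_1\upinf\le e_2\iff d_1\le e_2\down$ by applying $(\cdot)\down$ together with $(d_1\upinf)\down=d_1$ for the forward direction, and item~3(b) with $e_1=e_2\down$ for the reverse. The only points needing care — and hence the main, if modest, obstacle — are the monotonicity of $(\cdot)\upsup$ and $(\cdot)\upinf$, which are not literally stated in the lemma and must be extracted from the universal properties in item~3, together with the two cancellation identities $(e_1\upsup)\down=e_1$ and $(d_1\upinf)\down=d_1$, both resting on the uniqueness clause of item~2.
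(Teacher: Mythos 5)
Your proposal is correct and follows essentially the same route as the paper, which states this corollary as a direct consequence of Lemma~\ref{lem:galloi_connections}: you use item~0 for the lattice structure (via the order isomorphism $(d_1,d_2)\mapsto d_2$, whose inverse is well-defined by item~2 and monotone by item~1), and items~1--3 plus the cancellation identities $(e_1\upsup)\down=e_1$ and $(d_1\upinf)\down=d_1$ for the two adjunctions, exactly the intended reading. The only hairline point is non-emptiness: it does not follow from ``$\Ll^k_A$ is a lattice'' alone, but needs the one-line induction on $k$ — $\Dd^0_A$ is non-empty, and for $k>0$ any $d_1\in\Dd^{k-1}_A$ yields $d_1\upsup\in\Dd^k_A$ by item~3.
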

From now on, $\bot^k_{A}$ and $\top^k_A$ will denote the least and the
greatest element of $\Dd^{k}_A$.

\begin{cor}\label{cor:up-down}
 Given $d_2$ in $\Dd^k_{B\to C}$ and $e_2$ in
 $\Dd^k_B$, we have
  $(d_2(e_2))\down =  d_2\down(e_2\down)$.
  Given $d_1$ in $\Dd^{k-1}_{B\to C}$ and $e_2$
  in $\Dd^{k}_B$, we have:
$d_1\upinf(e_2) = (d_1(e_2\down))\upinf$ and $d_1\upsup(e_2) =
(d_1(e_2\down))\upsup$.
\end{cor}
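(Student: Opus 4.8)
The plan is to prove Corollary~\ref{cor:up-down} by exploiting the Galois connections established in Corollary~\ref{coro:galois_connections}, together with the explicit formulas for $(\cdot)\upsup$ and $(\cdot)\upinf$ extracted from the proof of Lemma~\ref{lem:galloi_connections}. Recall that at higher types the proof of item~3 defined, for $h_1\in\Dd^{k-1}_{B\to C}$, the pointwise formulas $h_1\upsup(f_2)=(h_1(f_2\down))\upsup$ and $h_1\upinf(f_2)=(h_1(f_2\down))\upinf$. The second and third equalities of the corollary are therefore nothing but these defining equations instantiated at $e_2$ (so that $f_2=e_2$ and $f_2\down=e_2\down$); I would simply point to that construction and note that the equalities hold by definition of $d_1\upinf$ and $d_1\upsup$. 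This disposes of two of the three claims almost immediately.

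The remaining and only genuinely substantive claim is the first one, $(d_2(e_2))\down = d_2\down(e_2\down)$. First I would observe that the right-hand side is well-typed: by Lemma~\ref{lem:galloi_connections}(3) we have $d_2\down\in\Dd^{k-1}_{B\to C}$ and $e_2\down\in\Dd^{k-1}_B$, so $d_2\down(e_2\down)\in\Dd^{k-1}_C$, matching the type of the left-hand side. The natural strategy is to verify that $d_2\down(e_2\down)$ satisfies the defining property of $(d_2(e_2))\down$, namely that it is the unique element of $\Dd^{k-1}_C$ related to $d_2(e_2)$ by $\Ll^k_C$. Since $(d_2\down,d_2)\in\Ll^k_{B\to C}$ and $(e_2\down,e_2)\in\Ll^k_B$, the very definition of the logical relation at arrow type gives $(d_2\down(e_2\down),\,d_2(e_2))\in\Ll^k_C$. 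By the uniqueness part of item~3, the first component of any pair in $\Ll^k_C$ whose second component is $d_2(e_2)$ must equal $(d_2(e_2))\down$; hence $d_2\down(e_2\down)=(d_2(e_2))\down$, as desired.

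I expect the main obstacle to be purely bookkeeping rather than conceptual: making sure the logical-relation definition is invoked with the correct pair of related arguments, and that the uniqueness clause of item~3 is applied at the right type $C$ rather than at $B\to C$. There is no fixpoint manipulation or zig-zag reasoning needed here — everything follows directly from the logical relation being preserved under application (which is exactly why each $\Dd^k$ is an applicative structure) and from the uniqueness of the $(\cdot)\down$ operation. So the proof is short, and the only care required is to keep the indices and types straight when quoting Lemma~\ref{lem:galloi_connections}.
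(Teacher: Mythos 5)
Your proof is correct and coincides with the paper's intended justification: the corollary is stated without an explicit proof precisely because the last two equalities are the defining clauses $h_1\upsup(f_2)=(h_1(f_2\down))\upsup$ and $h_1\upinf(f_2)=(h_1(f_2\down))\upinf$ from the proof of Lemma~\ref{lem:galloi_connections}, and the first equality follows, exactly as you argue, by feeding the pairs $(d_2\down,d_2)\in\Ll^k_{B\to C}$ and $(e_2\down,e_2)\in\Ll^k_B$ into the arrow-type clause of the logical relation to get $(d_2\down(e_2\down),d_2(e_2))\in\Ll^k_C$ and then invoking the uniqueness of the $(\cdot)\down$-witness at type $C$. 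One bookkeeping remark: since the lemma's enumeration starts at $0$, the existence-and-uniqueness clause you cite is item~2, not item~3 (item~3 is the one constructing $\upsup$ and $\upinf$).
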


Finally, we show one more decomposition property
of our models that we will use later to show a correspondence between types and
elements of the model (Lemma~\ref{lem:sem_representations_with_types}).

\begin{defi}\label{df:bar}
We define an operation $\overline{d}$\label{def:bar-operation} on the
elements $d$ of $\Dd^k_A$ by induction on $A$:
\begin{itemize}
\item if $A = o$, then $\overline{d} = d\cap Q_k$,
\item if $A = B\to C$, then for $e$ in $\Dd^{k}_B$, $\overline{d}(e) =
  \overline{d(e)}$.
\end{itemize}
\end{defi}

A simple induction shows that the operation $\overline{(\cdot)}$ is an
involution and that $\overline{\overline{d}} = \overline{d}$. A
similar induction shows that $\overline{(\cdot)}$ is the identity
function on $\Ll^k_A(\bot^{k-1}_A)$.

\begin{lem}\label{lem:Nn_k_decomposition}
  For every $0<k\leq m$, and every $d$ in $\Dd^{k}_A$; let $\overline
  d$ be as defined above, and let $\widetilde{d} =
  (\top^{k-1}_A)\upinf \vee \overline{d}$. We have  $d =
  ((d\down)\upinf)\vee \overline{d}$ and $d =
  ((d\down)\upsup)\wedge\widetilde{d}$.
\end{lem}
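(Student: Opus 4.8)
The plan is to prove both decompositions by induction on the type $A$, exploiting the definitions of $\overline{(\cdot)}$, of $(\cdot)\down$, $(\cdot)\upinf$, $(\cdot)\upsup$, and the Galois connections established in Corollary~\ref{coro:galois_connections}. I treat the two identities in parallel since they are order-dual to each other. At the base type $A=o$ everything is explicit: here $d\subseteq\Qlk$, the operation $\overline{d}=d\cap Q_k$, and $(d\down)=d\cap\Ql{k-1}$, while $(R)\upinf=R$ and $(R)\upsup=R\cup Q_k$ for $R\in\Dd^{k-1}_o$ (these are exactly the witnesses exhibited in the proof of Lemma~\ref{lem:galloi_connections}, Ad~3 at base type). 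So $((d\down)\upinf)\vee\overline d=(d\cap\Ql{k-1})\cup(d\cap Q_k)=d$ since $\Qlk=\Ql{k-1}\cup Q_k$ disjointly, giving the first identity. For the second, $\widetilde d=(\top^{k-1}_o)\upinf\vee\overline d=\Ql{k-1}\cup(d\cap Q_k)$, and $((d\down)\upsup)\wedge\widetilde d=((d\cap\Ql{k-1})\cup Q_k)\cap(\Ql{k-1}\cup(d\cap Q_k))$; a short set computation over the disjoint partition $\Qlk=\Ql{k-1}\sqcup Q_k$ collapses this to $d$.

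For the inductive step $A=B\to C$, I would evaluate each claimed identity at an arbitrary argument $e\in\Dd^k_B$ and reduce to the induction hypothesis at type $C$ applied to the element $d(e)\in\Dd^k_C$. The key computational inputs are Corollary~\ref{cor:up-down}, which tells us how $\down$, $\upinf$, $\upsup$ interact with application, together with Definition~\ref{df:bar}, which gives $\overline d(e)=\overline{d(e)}$. Concretely, for the first identity I would compute
\begin{equation*}
  (((d\down)\upinf)\vee\overline d)(e)=((d\down)\upinf)(e)\vee\overline{d(e)}
  =(d\down(e\down))\upinf\vee\overline{d(e)},
\end{equation*}
where the first equality uses that joins in the function lattice are pointwise (item~0 of Lemma~\ref{lem:galloi_connections}) and the second uses Corollary~\ref{cor:up-down}. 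Since $(d(e))\down=d\down(e\down)$ by Corollary~\ref{cor:up-down}, the right-hand side equals $((d(e))\down)\upinf\vee\overline{d(e)}$, which by the induction hypothesis at type $C$ is exactly $d(e)$. As $e$ was arbitrary, the functional identity $d=((d\down)\upinf)\vee\overline d$ follows by extensionality in the lattice $\Dd^k_A$.

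The second identity is handled the same way, with meets in place of joins. The only genuine subtlety, and the step I expect to be the main obstacle, is verifying that $\widetilde d$ is applied correctly: I must check that $\widetilde d(e)=(\top^{k-1}_C)\upinf\vee\overline{d(e)}$ so that the induction hypothesis at $C$ applies verbatim. This requires knowing that $(\top^{k-1}_A)\upinf(e)=(\top^{k-1}_C)\upinf$ for every $e\in\Dd^k_B$, which follows from Corollary~\ref{cor:up-down} (using $\top^{k-1}_A(e\down)=\top^{k-1}_C$ since $\top^{k-1}_A$ is the constant-top function) combined with the pointwise description of joins. Once this bookkeeping about $\widetilde d$ and the top element is settled, the evaluation-at-$e$ reduction and the induction hypothesis close the argument cleanly; the remainder is the routine base-type set manipulation already sketched above.
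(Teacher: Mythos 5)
Your proposal is correct and follows essentially the same route as the paper's own proof: induction on the type $A$, the explicit set computation over the disjoint partition $\Qlk=\Ql{k-1}\cup Q_k$ at the base type, and at $A=B\to C$ the evaluation-at-$e$ reduction using $\overline d(e)=\overline{d(e)}$, Corollary~\ref{cor:up-down}, and pointwise lattice operations, closed by the induction hypothesis at $C$. The only difference is that you work out the second (dual) identity in full --- including the correct observation that $\top^{k-1}_{B\to C}$ is the constant-top function, so that $\widetilde d(e)=(\top^{k-1}_C)\upinf\vee\overline{d(e)}$ --- whereas the paper proves only the first identity and declares the second ``similar''.
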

\begin{proof}
  We prove only the first identity, the second being similar. The
  proof is by induction on the size of $A$.

  In case $A = o$, from definitions we get $\overline{d} = d\cap Q_{k}$ while
  $(d\down)\upinf = d\cap \Ql{k-1}$. Thus we indeed have $d =
  ((d\down)\upinf)\vee \overline{d}$.

  In case $A=B\to C$. Given $e$ in $\Dd^k_B$, we have that
  $(d\down)\upinf(e) = (d\down(e\down))\upinf =
  ((d(e))\down)\upinf$. Moreover, $\overline{d}(e) =
  \overline{d(e)}$. Therefore, $(((d\down)\upinf)\vee \overline{d})(e)
  = (((d(e))\down)\upinf)\vee \overline{d(e)}$. But, by induction, we
  have $d(e) =(((d(e))\down)\upinf)\vee \overline{d(e)} = (((d\down)\upinf)\vee
  \overline{d})(e)$. As $e$
  is arbitrary, we get the identity.
\end{proof}

\begin{lem}
  \label{lem:decomposition_injectivity}
  Given $e_1$, $e_2$ in $\Dd^k_A$ and $d$ in
  $\Dd^{k-1}_{A}$:
  \begin{itemize}
  \item if $\overline{e_1}\neq \overline{e_2}$ then
    $d\upinf\vee \overline{e_1} \neq d\upinf \vee \overline{e_2}$,
  \item if $\widetilde{e_1}\neq \widetilde{e_2}$ then
    $d\upsup\wedge \overline{e_1} \neq d\upsup \wedge \overline{e_2}$,
  \end{itemize}

\end{lem}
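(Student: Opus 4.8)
The plan is to prove each item by induction on the structure of the type $A$, in lockstep with the inductive definitions of $\overline{(\cdot)}$ (Definition~\ref{df:bar}) and of $\widetilde{(\cdot)}$ coming from Lemma~\ref{lem:Nn_k_decomposition}. Before starting the induction I would record two facts that make the arrow case go through. First, by Lemma~\ref{lem:galloi_connections}(0) the set $\Dd^k_A$ is closed under componentwise join and meet, so $\vee$ and $\wedge$ in $\Dd^k_A$ are computed pointwise; hence two elements of $\Dd^k_{B\to C}$ are distinct exactly when they differ on some argument. Second, both operations commute with application: $\overline{e}(f)=\overline{e(f)}$ holds by definition, while $\widetilde{e}(f)=\widetilde{e(f)}$ needs a short computation using Corollary~\ref{cor:up-down} and the fact that $\top^{k-1}_{B\to C}$ sends every argument to $\top^{k-1}_C$, so that $(\top^{k-1}_{B\to C})\upinf(f)=(\top^{k-1}_C)\upinf$. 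By the same corollary I also have $d\upinf(f)=(d(f\down))\upinf$ and $d\upsup(f)=(d(f\down))\upsup$; these are precisely the identities that let the operations slide past an application.

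For the base case $A=o$ I would reduce everything to a set computation in $\Pp(\Qlk)$, using $d\upinf=d$, $d\upsup=d\cup Q_k$ and $\overline{e}=e\cap Q_k$, with $d\incl\Ql{k-1}$ and $\overline{e}\incl Q_k$ disjoint. For the first item, $d\upinf\vee\overline{e}=d\cup\overline{e}$, and intersecting with $Q_k$ recovers $\overline{e}$, so distinct $\overline{e}$ give distinct values. For the second item I would compute $d\upsup\wedge\overline{e}=\overline{e}$; since moreover $\widetilde{e}=\Ql{k-1}\cup\overline{e}$ and $\widetilde{e}\cap Q_k=\overline{e}$, the passages $\overline{e}\mapsto\widetilde{e}$ and $\widetilde{e}\mapsto\overline{e}$ are mutually inverse, so $\widetilde{e_1}\neq\widetilde{e_2}$ is equivalent to $\overline{e_1}\neq\overline{e_2}$ and the conclusion is immediate.

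For the inductive step $A=B\to C$ I would argue as follows. Assuming the hypothesis of the item at hand, the pointwise principle yields a witnessing $f\in\Dd^k_B$: either $\overline{e_1(f)}\neq\overline{e_2(f)}$ (first item) or $\widetilde{e_1(f)}\neq\widetilde{e_2(f)}$ (second item). Evaluating the two candidate-equal functions at this $f$ and commuting $\upinf$ (resp.\ $\upsup$) past the application with Corollary~\ref{cor:up-down} gives $(d\upinf\vee\overline{e_i})(f)=(d(f\down))\upinf\vee\overline{e_i(f)}$ and $(d\upsup\wedge\overline{e_i})(f)=(d(f\down))\upsup\wedge\overline{e_i(f)}$. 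Applying the induction hypothesis at the strictly smaller type $C$ with the element $d(f\down)\in\Dd^{k-1}_C$ then shows these two values differ, so the functions differ at $f$ and are distinct.

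The main obstacle will be the asymmetry of the second item, whose hypothesis is stated with $\widetilde{(\cdot)}$ but whose conclusion is stated with $\overline{(\cdot)}$: I must make sure the induction is self-consistent, i.e.\ that the witness extracted from $\widetilde{e_1}\neq\widetilde{e_2}$ feeds exactly the $\widetilde{(\cdot)}$-hypothesis of the induction hypothesis at type $C$. This hinges entirely on the identity $\widetilde{e}(f)=\widetilde{e(f)}$, which is why I would treat establishing that identity (rather than only the easy one for $\overline{(\cdot)}$) as the real content of the preliminary step. Once it is in place, both the base case and the inductive step run uniformly for the two items.
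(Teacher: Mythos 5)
Your proof is correct and follows essentially the same route as the paper's: induction on the type $A$, with the base case a disjoint-set computation in $\Pp(\Qlk)$ using $d\upinf=d$ and $\overline{e_i}\incl Q_k$, and the arrow case extracting a witnessing argument and sliding $(\cdot)\upinf$ (resp.\ $(\cdot)\upsup$) past application via Corollary~\ref{cor:up-down} before invoking the induction hypothesis at $d(f\down)$. The only difference is that the paper proves just the first item and declares the second ``essentially dual,'' whereas you carry it out explicitly, correctly isolating the identity $\widetilde{e}(f)=\widetilde{e(f)}$ (hence that $\top^{k-1}_{B\to C}$ is the constant-$\top^{k-1}_C$ function) as the ingredient that dualization actually needs.
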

\begin{proof}
  We only prove the first identity, the second being essentially dual.

  We proceed by induction on $A$. When $A=o$, then as $d\upinf = d$
  (see the proof of Lemma~\ref{lem:galloi_connections}),
  $d\subseteq Q_{\leq k-1}$ and $\overline{e_1}$, $\overline{e_2}$ are
  included in $Q_k$, the conclusion is immediate.

  When $A = B\to C$, since $\overline{e_1}\neq \overline{e_2}$ there
  is $a$ so that $\overline{e_1}(a)\neq \overline{e_2}(a)$.  Now we
  have
  $(d\upinf\vee \overline{e_i})(a) = d\upinf(a)\vee \overline{e_i}(a)
  = (d(a\down))\upinf\vee \overline{e_i(a)}$ for $i$ in
  $\set{1,2}$. The induction hypothesis implies that
  $(d(a\down))\upinf\vee \overline{e_1(a)} \neq (d(a\down))\upinf\vee
  \overline{e_2(a)}$. Therefore
  $(d\upinf\vee \overline{e_1})(a)\neq (d\upinf\vee
  \overline{e_2})(a)$ and
  $d\upinf\vee \overline{e_1} \neq d\upinf\vee \overline{e_2}$ as
  desired.
\end{proof}

Lemma~\ref{lem:Nn_k_decomposition} shows that every element $d_2$ in
$\Ll^k_A(d_1)$ is of the form $d_1\upinf\vee e_2$ with $e_2$ in
$\Ll^k_A(\bot^{k-1}_A)$ and of the form $d_1\upsup \wedge e_2$ with
$e_2$ in $\Ll^k_A(\top^{k-1}_A)$. Thus not only $\Ll^k_A$ puts every
element $d_1$ in relation with a lattice $\Ll^k(d_1)$, but, from
Lemma~\ref{lem:decomposition_injectivity}, this relation is
injective. As a consequence the lattice $\Ll^k_A(d_1)$ is isomorphic
to $\Ll^k_A(\bot^{k-1}_A)$ and to $\Ll^k_A(\top^{k-1}_A)$, showing as
we said earlier, that for any $e$ and $d$ in $\Dd^{k-1}_A$,
$\Ll^k_A(e)$ is isomorphic to $\Ll^k_A(d)$.  Another consequence is
that $\Dd^k_A$ is isomorphic to the lattice
$\Dd^{k-1}_A\times\Ll_{k,A}(\bot^{k-1}_A)$ or to the lattice
$\Dd^{k-1}_A\times \Ll_{k,A}(\top^{k-1}_A)$. This isomorphism is
important as it shows that the types we have described
Section~\ref{sec:type-system} are a faithful representation of the
elements of the model we have just constructed.

\subsection{Fixpoints in a stratified model}
The properties from the previous subsection allow us to define
fixpoint operators in every applicative structure $\Dd^k$. Then we can
show that a stratified structure is a model of $\lambda Y$-calculus.

\begin{defi}\label{def:fixpoint}
  For $f\in \Dd^0_{A\to A}$ we define
    $\fix^0_A(f)=\Land\set{f^n(\top^0) : n\geq 0}$.
  For $0<2k\leq m$ and $f\in \Dd^{2k}_{A\to A}$ we define
  \begin{equation*}
    \fix^{2k}_A(f)=\Land\set{f^n(e) : n\geq 0}\text{ where }
    e=(\fix^{2k-1}_A(f\down))\upsup
  \end{equation*}
  For $0<2k+1\leq m$ and $f\in \Dd^{2k+1}_{A\to A}$ we define
  \begin{equation*}
    \fix^{2k+1}_A(f)=\Lor\set{f^n(d) : n\geq 0}\text{ where }
    d=(\fix^{2k}_A(f\down))\upinf
  \end{equation*}
\end{defi}
Observe that, for even $k$, $e$ is obtained with $(\cdot)\upsup$;
while for odd $k$, $(\cdot)\upinf$ is used.

The intuitive idea behind the definition of the fixpoint is presented
in Figure~\ref{fig:fixpoint}.
On stratum $0$ it is just the greatest fixpoint. Then this greatest
fixpoint, call it $d$, is lifted to stratum $1$, and the least
fixpoint computation is started in the complete sub-lattice of
the refinements of $d$.
The result  is then lifted to stratum $2$, and once again the
greatest fixpoint computation is started, and so on. The Galois
connections between strata guarantee that this process makes sense.
\begin{figure}[tbh]
  \centering
  \includegraphics[scale=.3]{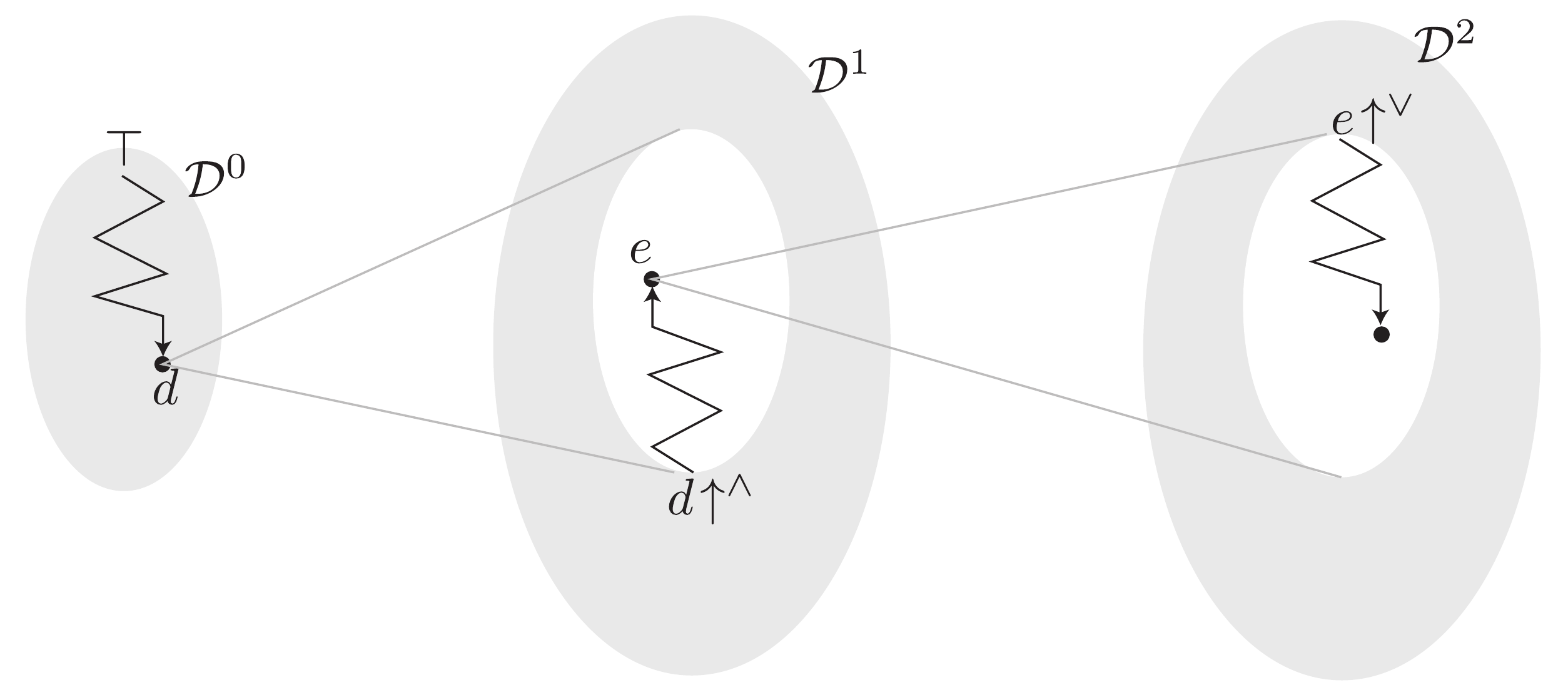}
  \caption{A computation of a fixpoint: it starts in $\Dd^0$, and then
  the least and the greatest fixpoints alternate.}
  \label{fig:fixpoint}
\end{figure}

It remains to show that equipped with the interpretation of fixpoints
given by Definition~\ref{def:fixpoint} the applicative structure
$\Dd^k$ is a model of the $\lambda Y$-calculus.  First, we check that
$\fix^k_A$ is indeed an element of the model and that it is a
fixpoint.

\begin{lem}
  \label{lemma:fix-exists}
  For every $0\leq k\leq m$ and every type $A$ we have that $\fix^k_A$
  is monotone, and if $k>0$ then
  $(\fix^{k-1}_A,\fix^k_A)\in \Ll^k_{(A\to A)\to A}$. Moreover for
  every $f$ in $\Dd^k_{A\to A}$, $f(\fix^k_A(f)) = \fix^k_A(f)$.
\end{lem}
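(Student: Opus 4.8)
The plan is to prove the three assertions —monotonicity, the logical-relation membership, and the fixpoint property— by induction on $k$, at each level exploiting the Galois connections from Corollary~\ref{coro:galois_connections} and the definition of $\fix^k_A$ in Definition~\ref{def:fixpoint}. The base case $k=0$ is a standard Tarski-style argument: $\fix^0_A(f)=\Land\set{f^n(\top^0):n\geq 0}$ is the greatest fixpoint of a monotone function on the finite (hence complete) lattice $\Dd^0_A$, so monotonicity of $\fix^0_A$ in $f$ follows from monotonicity of iteration, and $f(\fix^0_A(f))=\fix^0_A(f)$ follows because the descending chain $f^n(\top^0)$ stabilises in the finite lattice.

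For the inductive step I would treat the even and odd cases symmetrically, presenting say the even case $k=2k'$ in detail. Here $\fix^{k}_A(f)=\Land\set{f^n(e):n\geq 0}$ with $e=(\fix^{k-1}_A(f\down))\upsup$. First I would check that $e$ is a genuine starting point in the right sublattice: by the induction hypothesis $(\fix^{k-1}_A,\fix^{k}_A)$ behaves coherently, and $(\cdot)\upsup$ maps $\Dd^{k-1}_A$ into $\Dd^{k}_A$ landing in the fibre $\Ll^k_A(f\down(\,\cdot\,))$ over the lower-stratum fixpoint. The key point is that, because $f$ refines $f\down$ (that is $(f\down,f)\in\Ll^k_{A\to A}$), the iteration $f^n(e)$ stays inside the sublattice $\Ll^k_A(\fix^{k-1}_A(f\down))$, so the infimum computed on stratum $k$ is in fact the greatest fixpoint of $f$ restricted to that sublattice. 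The fixpoint equation $f(\fix^k_A(f))=\fix^k_A(f)$ then follows from finiteness and the fact that $e$ is the top of the relevant fibre, so the chain $f^n(e)$ is descending and stabilises. Monotonicity of $\fix^k_A$ in $f$ uses monotonicity of $(\cdot)\upsup$ (a consequence of the Galois connection) together with the induction hypothesis that $\fix^{k-1}_A$ is monotone.

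The most delicate assertion is $(\fix^{k-1}_A,\fix^k_A)\in\Ll^k_{(A\to A)\to A}$, which unfolds to: for every $(f\down,f)\in\Ll^k_{A\to A}$, the pair $(\fix^{k-1}_A(f\down),\fix^k_A(f))$ lies in $\Ll^k_A$. Equivalently, by Corollary~\ref{cor:up-down} and the characterisation of $(\cdot)\down$, I must show $(\fix^k_A(f))\down=\fix^{k-1}_A(f\down)$, i.e. that the stratum-$k$ fixpoint projects down exactly to the stratum-$(k-1)$ fixpoint. This is where the whole zig-zag construction must be verified to cohere: the starting point $e=(\fix^{k-1}_A(f\down))\upsup$ already satisfies $e\down=\fix^{k-1}_A(f\down)$ by the Galois connection, and since $(\cdot)\down$ commutes with application ($({d_2(e_2)})\down=d_2\down(e_2\down)$ from Corollary~\ref{cor:up-down}) one gets $(f^n(e))\down=(f\down)^n(e\down)=(f\down)^n(\fix^{k-1}_A(f\down))=\fix^{k-1}_A(f\down)$ for all $n$, using that $\fix^{k-1}_A(f\down)$ is already a fixpoint of $f\down$. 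Because $(\cdot)\down$ is the lower adjoint of a Galois connection it preserves the meets defining $\fix^k_A(f)$, whence $(\fix^k_A(f))\down=\fix^{k-1}_A(f\down)$, as required.

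I expect the main obstacle to be the bookkeeping in this last step: one must carefully justify that $(\cdot)\down$ preserves the infinitary meet $\Land\set{f^n(e):n\geq 0}$ and that iteration on stratum $k$ really does track iteration on stratum $k-1$ under projection. In the finite setting the meet is effectively finite, so preservation follows once $(\cdot)\down$ is shown to be a complete-lattice homomorphism on the relevant chain, but the argument must be phrased so that the even and odd cases dualise cleanly, swapping $\Land/\Lor$, $\top/\bot$, and $(\cdot)\upsup/(\cdot)\upinf$ while keeping $(\cdot)\down$ as the common projection.
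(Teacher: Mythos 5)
Your proof is correct and follows essentially the same route as the paper's: induction on $k$, with the iteration $f^n(e)$ from $e=(\fix^{k-1}_A(f\down))\upsup$ shown to remain in the fibre over $\fix^{k-1}_A(f\down)$ and to stabilise as a descending chain in the finite lattice, which yields the fixpoint equation and the logical-relation membership simultaneously; your reformulation of the latter as $(\fix^k_A(f))\down=\fix^{k-1}_A(f\down)$ is equivalent to the paper's direct tracking of pairs through $\Ll^k$, by uniqueness of the down-projection. One small slip: $(\cdot)\down$ preserves meets because it is the \emph{upper} adjoint of the connection $(\cdot)\upinf\dashv(\cdot)\down$ (lower adjoints preserve joins, not meets), though as you yourself note the chain stabilises in the finite setting, so no infinitary preservation argument is actually needed.
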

\begin{proof} We proceed by induction on $k$.
  For $0$ the statement is obvious.  We will only consider the case
  where $k$ is even, the other being dual.

  Consider the case $2k>0$.   First we show monotonicity.
  Suppose $g\leq h$ are two elements of
  $\Dd^{2k}_{A\to A}$. By Lemma~\ref{lem:galloi_connections} we get
  $g\down\leq h\down$.
  Consider $g^1=\fix^{2k-1}_{A}(g\down)$ and
  $h^1=\fix^{2k-1}_{A}(h\down)$. By induction hypothesis
  $\fix^{2k-1}_A$ is monotone, so $g^1\leq h^1$. Then, once
  again using the Lemma~\ref{lem:galloi_connections}, we have
  $(g^1)\upsup\leq   (h^1)\upsup$. This implies
  $\fix^{2k}_A(g)\leq\fix^{2k}_A(h)$.

  Now we show  $(\fix^{2k-1}_A,\fix^{2k}_A)\in\Ll^{2k}_{(A\to A)\to A}$. We take
  an arbitrary pair $(f_1,f_2)\in \Ll^{2k}_{A\to A}$ and we need to show that
  $(\fix^{2k-1}_A(f_1),\fix^{2k}_A(f_2))\in\Ll^{2k}_A$. This follows from the
  following calculation
  \begin{align*}
    (\fix^{2k-1}(f_1),(\fix^{2k-1}(f_1))\upsup)\in&\Ll^{2k}_A\quad
    \text{ by Lemma~\ref{lem:galloi_connections}}\\
    (f_1(\fix^{2k-1}(f_1)),f_2((\fix^{2k-1}(f_1)\upsup))\in&\Ll^{2k}_A\quad
    \text{by logical relation}\\
    (\fix^{2k-1}(f_1),f_2((\fix^{2k-1}(f_1)\upsup))\in&\Ll^{2k}_A\quad\text{since $\fix^{2k-1}(f_1)$ }\\
     &\text{\quad is a fixpoint of $f_1$}\\
    (\fix^{2k-1}(f_1),f^i_2((\fix^{2k-1}(f_1)\upsup))\in&\Ll^{2k}_A\quad\text{for
      every $i\geq 0$}
  \end{align*}
  Moreover, from Lemma~\ref{lem:galloi_connections}, we have that
  $f_2((\fix^{2k-1}(f_1)\upsup)\leq (\fix^{2k-1}(f_1))\upsup$. This
  implies $$f_2^{i+1}((\fix^{2k-1}(f_1)\upsup)\leq
  f_2^{i}((\fix^{2k-1}(f_1))\upsup)$$ for every $i\in \Nat$.
  Therefore, $(f_2^{i}((\fix^{2k-1}(f_1))\upsup)_{i\in \Nat}$ is a
  decreasing sequence of $\Dd^{2k}_A$.  Since the model is finite this
  sequence reaches the fixpoint, namely
  $\fix^{2k}_A(f_2)=f^i_2((\fix^{2k-1}_A(f_1)\upsup))$ for some
  $i$. Thus, at the same time, this shows that
  $(\fix^{2k-1}_A(f_1),\fix^{2k}_A(f_2))\in\Ll^{2k}_A$ and that
  $\fix^{2k}(f_2)$ is a fixpoint of $f_2$.\end{proof}

This lemma has the following interesting corollary that will prove
useful in the study of type systems.

\begin{cor}\label{cor:fixpoints}
  For $k> 0$, $A$ a type, and $f\in\Dd^k_{A\to A}$ we have
  \begin{align*}
    \fix^{2k}_A(f)=&\bigvee\set{ d \mid f(d)\geq d\text{ and }
      d\down=\fix^{2k-1}_A(f)}\\
    \fix^{2k+1}_A(f)=&\bigwedge\set{ d \mid f(d)\leq d\text{ and }
      d\down=\fix^{2k}_A(f)}\\
  \end{align*}
\end{cor}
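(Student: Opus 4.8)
The plan is to prove Corollary~\ref{cor:fixpoints} by showing that the explicitly defined fixpoint from Definition~\ref{def:fixpoint} coincides with the extremal element of the indicated set. I will treat the even case $\fix^{2k}_A(f)=\bigvee\set{d \mid f(d)\geq d \text{ and } d\down=\fix^{2k-1}_A(f)}$; the odd case is dual, replacing the greatest fixpoint on the lower stratum by the least, $\bigvee$ by $\bigwedge$, $\upsup$ by $\upinf$, and reversing the relevant inequalities. Throughout I would lean on Lemma~\ref{lemma:fix-exists}, which already tells us that $\fix^{2k}_A(f)$ is a genuine fixpoint of $f$ and that $(\fix^{2k-1}_A(f\down),\fix^{2k}_A(f))\in\Ll^{2k}_A$, and on the Galois-connection machinery of Corollary~\ref{coro:galois_connections} and Corollary~\ref{cor:up-down}.

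First I would establish that $\fix^{2k}_A(f)$ belongs to the set on the right, i.e.\ that it is a member against which the join is taken. By Lemma~\ref{lemma:fix-exists} it satisfies $f(\fix^{2k}_A(f))=\fix^{2k}_A(f)$, so in particular $f(\fix^{2k}_A(f))\geq \fix^{2k}_A(f)$. For the side condition $(\fix^{2k}_A(f))\down=\fix^{2k-1}_A(f)$, I would use that $(\fix^{2k-1}_A(f\down),\fix^{2k}_A(f))\in\Ll^{2k}_A$ together with Corollary~\ref{coro:galois_connections}, part~(2): the $(\cdot)\down$ operation recovers the unique lower-stratum element in relation with $\fix^{2k}_A(f)$, which is exactly $\fix^{2k-1}_A(f\down)$. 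Here one must be slightly careful about the notation: in the corollary statement $f\in\Dd^k_{A\to A}$ and $\fix^{2k-1}_A(f)$ should be read as applied to the appropriate restriction $f\down$, so I would reconcile this with Definition~\ref{def:fixpoint} at the outset. This shows $\fix^{2k}_A(f)$ is an element of the set, hence it is $\leq$ the join.

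Second, and this is where the real work lies, I would show $\fix^{2k}_A(f)$ is an upper bound of the set, so that it dominates the join and the two are equal. Take any $d$ with $f(d)\geq d$ and $d\down=\fix^{2k-1}_A(f\down)$. The key observation is that such a $d$ lies in the sub-lattice $\Ll^{2k}_A(\fix^{2k-1}_A(f\down))$, whose greatest element is precisely $e:=(\fix^{2k-1}_A(f\down))\upsup$, the starting point of the decreasing iteration in Definition~\ref{def:fixpoint}. Thus $d\leq e$. Since $f$ is monotone and $f(d)\geq d$, iterating gives $d\leq f^n(d)\leq f^n(e)$ for all $n$ (using monotonicity at each step together with $d\leq e$), and because the model is finite the decreasing sequence $(f^n(e))_n$ stabilizes at $\fix^{2k}_A(f)=\bigwedge_n f^n(e)$. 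Hence $d\leq \fix^{2k}_A(f)$, as required.

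The main obstacle I anticipate is the monotonicity bookkeeping in the second step, specifically verifying that the post-fixpoint inequality $f(d)\geq d$ propagates correctly up the iteration while simultaneously keeping $d$ below $e=(\fix^{2k-1}_A(f\down))\upsup$; one needs both $d\leq e$ (from $d$ living in the refinement sub-lattice) and the post-fixpoint property to conclude $d\leq f^n(e)$ for every $n$, and it is worth checking that the down-projection of each $f^n(e)$ stays at the lower-stratum fixpoint so the whole iteration indeed remains inside $\Ll^{2k}_A(\fix^{2k-1}_A(f\down))$ and the infimum is computed there. Once the sub-lattice is correctly identified as the domain of the iteration, the finiteness of the model closes the argument cleanly, and the dual odd case follows by the symmetric computation.
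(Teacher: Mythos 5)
Your proof is correct and follows essentially the route the paper intends (the corollary is left without explicit proof there, as an unfolding of Definition~\ref{def:fixpoint} and Lemma~\ref{lemma:fix-exists}): membership of $\fix^{2k}_A(f)$ in the set comes from Lemma~\ref{lemma:fix-exists} together with the uniqueness of the $(\cdot)\down$ projection, and the upper-bound half comes from identifying $e=(\fix^{2k-1}_A(f\down))\upsup$ as the top of the sub-lattice $\Ll^{2k}_A(\fix^{2k-1}_A(f\down))$ and propagating the post-fixpoint inequality through the decreasing iteration $f^n(e)$, which stabilizes by finiteness. You also correctly flag and resolve the paper's notational slip ($\fix^{2k-1}_A(f)$ should be read as $\fix^{2k-1}_A(f\down)$, with $f$ living in $\Dd^{2k}_{A\to A}$), so nothing is missing.
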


Moreover, the fundamental lemma on logical relations gives us  the following
consequence.

\begin{cor}\label{cor:L-on-terms}
  For every $k>0$, every term $M$ and valuation $\nu$ into $\Dd^{k-1}$
  we have $(\sem{M}^{k-1}_{\nu},\sem{M}^k_{\nu\upinf})\in \Ll^k$ and
  $(\sem{M}^{k-1}_{\nu},\sem{M}^k_{\nu\upsup})\in \Ll^k$; where
  $\nu\upinf$ and $\nu\upsup$ are as expected, i.e. $\nu^{\upinf} (x) =
  (\nu(x))^{\upinf}$ and $\nu^{\upsup} (x) =
  (\nu(x))^{\upsup}$.
\end{cor}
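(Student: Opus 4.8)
The plan is to obtain this as a direct instance of the fundamental (basic) lemma of logical relations. By construction $\Ll^k$ is a logical relation between the applicative structures $\Dd^{k-1}$ and $\Dd^k$: it is pinned down at the base type $o$ and extended at arrow types by the standard clause, so that $(f_1,f_2)\in\Ll^k_{A\to B}$ holds exactly when $f_1$ and $f_2$ send $\Ll^k_A$-related arguments to $\Ll^k_B$-related values. The fundamental lemma states that related valuations are carried to related interpretations, and its proof is the usual induction on the structure of $M$; the application case is immediate from the defining clause of $\Ll^k$ at arrow types, and the abstraction case follows because currying a term that maps related inputs to related outputs produces a pair of functions that are again $\Ll^k$-related. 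Thus the whole argument reduces to verifying the two hypotheses of the fundamental lemma: that the valuations $\nu$ and $\nu\upinf$ (respectively $\nu$ and $\nu\upsup$) are pointwise related, and that the interpretations of all constants are related.

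First I would discharge the valuation hypothesis, which also handles the variable base case of the induction. For a variable $x^A$ I need $(\nu(x),\nu\upinf(x))\in\Ll^k_A$ and $(\nu(x),\nu\upsup(x))\in\Ll^k_A$. Both are immediate from Lemma~\ref{lem:galloi_connections}(3): by definition $\nu\upinf(x)=(\nu(x))\upinf$ and $\nu\upsup(x)=(\nu(x))\upsup$ are respectively the least and the greatest element of $\Ll^k_A(\nu(x))=\set{e:(\nu(x),e)\in\Ll^k_A}$, so in particular each is $\Ll^k_A$-related to $\nu(x)$. The $\upinf$-version and the $\upsup$-version of the statement are then established by one and the same induction, since the only difference between them sits in this base case.

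Next I would treat the constants, which is the only place where anything beyond the bare logical-relations machinery enters. The combinator $Y^A$ is interpreted by $\fix^{k-1}_A$ and $\fix^k_A$, and the required $(\fix^{k-1}_A,\fix^k_A)\in\Ll^k_{(A\to A)\to A}$ is precisely the content of Lemma~\ref{lemma:fix-exists}. The constant $\W^A$ is interpreted by the top element of each stratum, and $(\top^{k-1}_A,\top^k_A)\in\Ll^k_A$ because $(\cdot)\upsup$ is the upper adjoint of the Galois connection of Corollary~\ref{coro:galois_connections} and hence preserves the top element, giving $(\top^{k-1}_A)\upsup=\top^k_A\in\Ll^k_A(\top^{k-1}_A)$. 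Finally, the signature constants are interpreted coherently across strata, i.e.\ so that $(\sem{c}^{k-1},\sem{c}^k)\in\Ll^k$ --- a property the concrete interpretation introduced in the next subsection is designed to satisfy. With every constant and both valuations related, the fundamental lemma delivers the claim.

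I do not expect any genuine obstacle to remain once Lemma~\ref{lemma:fix-exists} is available: the real work, namely matching the two fixpoint operators across the Galois connection, has already been carried out there, and what is left here is the routine logical-relations induction together with the elementary remark that the top element maps to the top element. The single point deserving a line of care is the simultaneous treatment of the $\upinf$ and $\upsup$ versions, which is harmless because both reduce, in the variable case, to membership in $\Ll^k_A(\nu(x))$, and the constant and inductive cases are insensitive to which of the two we are proving.
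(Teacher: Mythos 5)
Your proposal is correct and takes essentially the same route as the paper, which obtains the corollary exactly as you do: as an instance of the fundamental lemma for logical relations, with Lemma~\ref{lemma:fix-exists} supplying the only non-routine constant case ($Y$) and Lemma~\ref{lem:galloi_connections} giving that $\nu\upinf(x)$ and $\nu\upsup(x)$ are $\Ll^k$-related to $\nu(x)$. One small caveat: the paper never pins down the interpretation of $\W$ in $\Dd^k$, and for Theorem~\ref{thm:model-correct} to hold at odd strata it cannot be the top element (it should be the ``divergence'' value whose base-type component is the set of even-rank states), but any stratum-coherent choice makes your constant case go through unchanged.
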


Now we turn to showing that for every $k$, $\Dd^k$ is
indeed a model of $\lambda Y$-calculus.  Since $\Dd^k_{A\to B}$ does not
contain all the functions from $\Dd^k_A$ to $\Dd^k_B$ we must show that
there are enough of them to form a model of $\lambda Y$, the main problem
being to show that $\sem{\lambda x. M}_\val^{\Dd^k}$ defines an element of
$\Dd^k$. For this it will be more appropriate to consider the semantics
of a term as a function of values of its free variables. Given a
finite sequence of variables $\vec x=x_1,\dots,x_n$ of types
$A_1,\dots,A_n$ respectively and a term $M$ of type $B$ with free
variables in $\vec x$, the meaning of $M$ in the model $\Dd^k$ with
respect to $\vec x$ will be a function $\sem{M}^{k}_{\vec x}$ in
$\Dd^k_{A_1}\to \dots \to\Dd^k_{A_n}\to \Dd^k_{B}$ that represents the
function $\lambda \vec p. \sem{M}_{[p_1/x_1,\dots,p_n/x_n]}^{k}$.
Formally it is defined as follows:
\begin{enumerate}
\item $\sem{Y^{B}}_{\vec x}^k = \lambda \vec p.\ \fix_B$
\item $\sem{a}_{\vec x}^k = \lambda \vec p.\ \rho(a)$
\item $\sem{x_i^{A_i}}_{\vec x}^k = \lambda \vec p. p_i$
\item $\sem{MN}_{\vec x}^k=\lambda \vec p.\ (\sem{M}_{\vec x}^k\,\vec
  p)(\sem{N}_{\vec x}^k\, \vec p)$
\item $\sem{\lambda y.M}_{\vec x}^k = \lambda \vec p.\lambda p_y.\
  \sem{M}_{\vec x y}^k\vec p p_y$
\end{enumerate}
Note that the $\lambda$ symbol on the right hand side of the equality is
the semantic symbol used to denote a relevant function, and
not a part of the syntax while the sequence $\vec p$ denote a sequence
of parameters $p_1$, \dots, $p_n$ ranging respectively in $\Dd^k_{A_1}$, \dots,
$\Dd^k_{A_n}$.

Lemma~\ref{lemma:fix-exists} ensures the existence of the meaning of
$Y$ in $\Dd^k$.  With this at hand, the next lemma provides all the
other facts necessary to show that the meaning of a term with respect
to $\vec x$ is always an element of the model.

\begin{lem}\label{lem:combinatorial_completeness_of_Kk}
  For every sequence of types $\vec A=A_1\ldots A_n$ and every types
  $B$, $C$ we have the following:
\begin{itemize}
  \item For every constant $p\in \Dd^k_B$ the constant function
    $f_p:A_1\to\dots\to A_n\to B$ belongs to $\Dd^k_{A_1\to\dots\to A_n\to B}$.
  \item For $i=1,\dots,n$, the projection
    $\pi_i:A_1\to\dots\to A_n\to A_i$ belongs to $\Dd^k_{A_1\to\dots\to A_n\to B}$.
  \item If $f:\vec A\to(B\to C)$ and $g:\vec A\to B$ are in $\Dd^k$
    then $\lambda \vec p. f \vec p(g \vec p) :\vec A\to C$ is in $\Dd^k$.
  \end{itemize}
\end{lem}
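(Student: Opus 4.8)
The plan is to prove all three items simultaneously by induction on the stratum $k$, exploiting the fact that $\Dd^k$ is carved out of the full monotone function space precisely by the logical relation $\Ll^k$. For the base case $k=0$, the model $\Dd^0$ is the full hierarchy of monotone functions over $\Pp(Q_0)$, so each of the three functions---the constant function $f_p$, the projection $\pi_i$, and the combinator $\lambda\vec p.\,f\vec p(g\vec p)$---is manifestly monotone and therefore lies in $\Dd^0$; this is just the standard combinatorial completeness of a full monotone type structure. The whole content of the lemma is thus in the inductive step $k>0$, where $\Dd^k_{A\to B}$ contains only those monotone functions that are $\Ll^k$-related to some element of $\Dd^{k-1}_{A\to B}$. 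The unifying idea is: to witness that a function built at level $k$ lies in $\Dd^k$, I take $(\cdot)\down$ of the data it is built from, invoke the induction hypothesis to obtain the corresponding function at level $k-1$, and then check that the two are $\Ll^k$-related using that $\Ll^k$ is a logical relation.

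Concretely, for the constant function I set $p\down\in\Dd^{k-1}_B$ (given $p\in\Dd^k_B$) and use the induction hypothesis to get $f_{p\down}\in\Dd^{k-1}_{\vec A\to B}$; since both constant functions ignore their arguments and $(p\down,p)\in\Ll^k_B$ by definition of $(\cdot)\down$, the pair $(f_{p\down},f_p)$ satisfies the defining clause of $\Ll^k_{\vec A\to B}$, so $f_p\in\Dd^k$. For the projection $\pi_i\colon\vec A\to A_i$, the level-$(k-1)$ projection is in $\Dd^{k-1}$ by induction, and applied to any tuple of $\Ll^k$-related arguments the two projections return the $i$-th related pair, which lies in $\Ll^k_{A_i}$; hence they are $\Ll^k$-related and $\pi_i\in\Dd^k$. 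For the application combinator I take $f\down$ and $g\down$ (unique by Lemma~\ref{lem:galloi_connections}(2)) and use the induction hypothesis on the third item to place $h'=\lambda\vec p.\,f\down\vec p(g\down\vec p)$ in $\Dd^{k-1}$; feeding a tuple of related arguments, written $\vec{g_1},\vec{g_2}$ with $(g_1^{(j)},g_2^{(j)})\in\Ll^k_{A_j}$, through the logical relations for $f$ and $g$ yields $(f\down\vec{g_1},f\vec{g_2})\in\Ll^k_{B\to C}$ and $(g\down\vec{g_1},g\vec{g_2})\in\Ll^k_B$, and applying the first relation to the second pair gives exactly $(h'\vec{g_1},h\vec{g_2})\in\Ll^k_C$, so $h\in\Dd^k$.

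The step I expect to require the most care is bookkeeping rather than conceptual depth: one must unfold the logical relation through the entire prefix $A_1\to\cdots\to A_n$ of arguments, checking that iterated application of $\Ll^k$-related functions to $\Ll^k$-related tuples stays in the relation, and---since $\Dd^k_{A\to B}$ sits inside the monotone functions---separately confirm that each constructed function is monotone (immediate for constants and projections, and for the combinator a consequence of pointwise monotonicity of application in $\Dd^k$). In essence this lemma is the fundamental lemma of logical relations specialised to the three basic combinators, and together with the already-established closure of $\Dd^k$ under application it gives combinatorial completeness, hence that $\sem{\lambda x.M}^{\Dd^k}_\val$ is always an element of the model.
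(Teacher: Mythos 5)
Your proof is correct and follows essentially the same route as the paper's: induction on the stratum $k$ with the base case $k=0$ immediate from fullness of the monotone model, and for $k>0$ producing a level-$(k-1)$ witness via $(\cdot)\down$ and checking the pair lies in $\Ll^k$ using the logical-relation clause. The paper carries this out explicitly only for the constant function and remarks that the other two items follow by the same reasoning, so your treatment of the projection and application combinator simply fills in details the paper leaves to the reader.
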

\begin{proof}
  For the first item we take $p\in \Dd^k_B$ and show that the constant
  function $f_p:A\to B$ belongs to $\Dd^k_{A\to B}$. For $k=0$ this is
  clear. For $k>0$ we take $p\down$ and consider the constant function
  $f_{p\down}\in\Dd^{k-1}_p$. We have $(f_{p \down},f_p)\in\Ll^k_{A\to
    B}$ since $(p\down,p)\in \Ll^k_B$ by
  Lemma~\ref{lem:galloi_connections}. So $f_p\in \Dd^k_{A\to B}$.

  The (easy) proofs for the second and the third items follow the same kind
  of reasoning. 
\end{proof}

These observations allow us to conclude that $\Dd^k$ is indeed a
model of the $\lambda Y$-calculus, that is:
\begin{enumerate}
  \item for every term $M$ of type $A$ and every valuation $\nu$
    ranging of the free variables of $M$, $\sem{M}_{\nu}^k$ is in
    $\Dd^k_A$,
  \item given two terms $M$ and $N$ of type $A$, if $M=_{\beta\delta} N$,
    then for every valuation $\nu$, $\sem{M}^k_\nu = \sem{N}^k_\nu$.
  \end{enumerate}

\begin{thm}
  For every finite set $Q$, and function $\rho:Q\to \Nat$. For every
  $k\leq 0$ the applicative structure $\Dd^k$ is a model of the
  $\lambda Y$-calculus.
\end{thm}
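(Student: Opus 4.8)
The plan is to establish the two properties listed just above the statement, since together they say precisely that $\Dd^k$ is a model of the $\lambda Y$-calculus: \emph{well-definedness}, that for every term $M$ of type $A$ and every valuation $\nu$ the value $\sem{M}^k_\nu$ is an element of $\Dd^k_A$; and \emph{soundness}, that $M=_{\beta\delta}N$ implies $\sem{M}^k_\nu=\sem{N}^k_\nu$. The substantive work has already been carried out in Lemma~\ref{lem:combinatorial_completeness_of_Kk} and Lemma~\ref{lemma:fix-exists}; what remains is the routine assembly that turns an applicative structure with enough functions into an environment model.

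For well-definedness I would argue by structural induction on $M$, using the presentation $\sem{M}^k_{\vec x}$ as a function of the free variables. The variable case is a projection and the constant case a constant function; both lie in $\Dd^k$ by the first two items of Lemma~\ref{lem:combinatorial_completeness_of_Kk}. The combinator $Y^B$ is interpreted by the constant function $\lambda\vec p.\fix_B$, which is legitimate because Lemma~\ref{lemma:fix-exists} guarantees $\fix^k_B\in\Dd^k_{(B\to B)\to B}$. The application case is closed under the third item of Lemma~\ref{lem:combinatorial_completeness_of_Kk}. The one delicate case is abstraction, precisely because $\Dd^k_{A\to B}$ is a \emph{proper} subset of the monotone function space, so one cannot simply assert that $\sem{\lambda y.M}^k_{\vec x}$ lands in the model. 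Here the point is that, by right-associativity of the arrow, $\sem{\lambda y.M}^k_{\vec x}$ is literally the same element as $\sem{M}^k_{\vec x\,y}$, which is in $\Dd^k$ by the induction hypothesis; no new function needs to be manufactured. Thus every term denotes an element of $\Dd^k$.

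For soundness it suffices to treat the two generating reductions, since $\sem{\cdot}^k$ is compositional and $=_{\beta\delta}$ is the congruence generated by $\to_\beta$ and $\to_\delta$. For $\delta$, the identity $\sem{Y M}^k_\nu=\sem{M(Y M)}^k_\nu$ is exactly the fixpoint equation $\fix^k(\sem{M}^k_\nu)=\sem{M}^k_\nu(\fix^k(\sem{M}^k_\nu))$ supplied by Lemma~\ref{lemma:fix-exists}. For $\beta$ I would first prove the standard substitution lemma $\sem{M[x:=N]}^k_\nu=\sem{M}^k_{\nu[x\mapsto\sem{N}^k_\nu]}$ by induction on $M$ (this is again where combinatorial completeness is used), from which $\sem{(\lambda x.M)N}^k_\nu=\sem{M[x:=N]}^k_\nu$ follows immediately from the defining clauses for abstraction and application.

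I expect the main obstacle to be conceptual rather than computational, and it has already been localized: keeping $\sem{\lambda y.M}$ inside the restricted function spaces $\Dd^k_{A\to B}$. Once Lemma~\ref{lem:combinatorial_completeness_of_Kk} is in hand this obstacle dissolves into the currying observation above, and the remainder is the familiar verification that an applicative structure closed under constants, projections and application, and equipped with genuine fixpoints, forms a $\lambda Y$-model.
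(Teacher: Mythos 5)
Your proposal is correct and follows essentially the same route as the paper, which likewise derives the theorem from Lemma~\ref{lemma:fix-exists} and Lemma~\ref{lem:combinatorial_completeness_of_Kk} via the presentation of $\sem{M}^k_{\vec x}$ as a function of the free variables, with the abstraction case dissolving into the currying/right-associativity observation you make. The paper leaves the final assembly (including the substitution lemma and the $\beta\delta$-soundness check) implicit, and your write-up fills in exactly those routine steps.
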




 \subsection{Correctness and completeness of the model}
 \label{sec:corr-compl-model}




We show that the models introduced above  are
expressive enough to recognize all properties definable by weak
alternating automata. For a given
automaton we will take a model as defined above, and show that with the
right interpretation of constants the model can recognize the set of terms
whose B\"ohm trees are accepted by the automaton (Theorem~\ref{thm:model-correct}).

For the whole section we fix a weak alternating automaton
\begin{equation*}
\Aa=\struct{Q,\S,q^0,\delta_o,\delta_{o^2 \to o},\rho}\ ,
\end{equation*}
 where $Q$ is a set of
states, $\S$ is the alphabet, $\delta_o\incl Q\times \S$ and $\delta_{o^2\to
  o}:Q\times\S\to\Pp(\Pp(Q)\times\Pp(Q))$ are transition functions,
and $\rho:Q\to\Nat$ is a ranking function. For sake of the simplicity of the
notation in this section we assume that the only constants in the
signature are either of type $o$ or $o^2\to o$.

Recall that weak means that the
states in a transition for a state $q$ have ranks at most $\rho(q)$, in
other words, for every $(S_0,S_1)\in\delta(q,a)$, $S_0,S_1\incl
\Ql{\rho(q)}$. As noted before, without a loss of generality, we assume
that $\delta$ is monotone, i.e. if $(S_0,S_1)\in\delta(q,a)$ and $S_0\incl
S'_0\incl \Ql{\rho(q)}$ and $S_1\incl S'_1\incl \Ql{\rho(q)}$ then
$(S'_0,S'_q)\in\delta(q,a)$.  For a closed term $M$ of type $o$, let
\begin{equation*}
  \Aa(M)=\set{q\in Q: \text{$\Aa$ accepts $BT(M)$ from $q$}}
\end{equation*}
be the set of states from which $\Aa$ accepts the tree $BT(M)$.

We want to show that our model $\Dd^m$ as defined in the previous
section can calculate $\Aa(M)$; here $m$ is the maximal value of the
rank function of $\Aa$. The following theorem states a slightly more
general fact. Before proceeding we need to fix the meaning of
constants:
\begin{align*}
\sem{c}^k=&\set{q\in \Qlk : (c,q)\in \delta_o}\\
  \sem{a}^k(S_0,S_1)=&\set{q\in \Qlk : (S_0,S_1)\in\delta_{o^2 \to o}(q,a)}
\end{align*}
Notice that, by our assumption about monotonicity of $\delta$, these
functions are monotone.

\begin{thm}\label{thm:model-correct}
  For every closed term $M$ of type $o$, and for every $0\leq k \leq m$ we have:
  $\sem{M}^k=\Aa(M)\cap\Qlk$.
\end{thm}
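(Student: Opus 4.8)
The plan is to induct on $k$ and use the logical relation $\Ll^k$ to reduce the statement, at each step, to the states of rank exactly $k$. Since $M$ is closed of type $o$, Corollary~\ref{cor:L-on-terms} applied to the empty valuation gives $(\sem{M}^{k-1},\sem{M}^k)\in\Ll^k_o$, and by the definition of $\Ll^k_o$ this is exactly $\sem{M}^{k-1}=\sem{M}^k\cap\Ql{k-1}$. Combining this with the induction hypothesis $\sem{M}^{k-1}=\Aa(M)\cap\Ql{k-1}$ settles all states of rank at most $k-1$, so since $\Qlk=\Ql{k-1}\cup Q_k$ the whole theorem reduces to the single equality $\sem{M}^k\cap Q_k=\Aa(M)\cap Q_k$. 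The base case $k=0$ is the degenerate instance: $\Dd^0$ is the plain greatest-fixpoint model, every infinite play stays at the even rank $0$, and the equality is the standard safety/greatest-fixpoint correspondence underlying Theorem~\ref{thm:GFP-model}.

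To analyse rank $k$ I would introduce a truncated acceptance game $G_k$ on $BT(M)$: a play that ever enters a state of rank $<k$ is stopped at once, its winner read off from $\sem{\,\cdot\,}^{k-1}$ of the subtree reached (legitimate by the induction hypothesis). Weakness guarantees that ranks never increase, so in $G_k$ every infinite play stays at rank $k$; hence $G_k$ is a pure safety game when $k$ is even and a pure reachability game when $k$ is odd, and one checks that, for $q\in Q_k$, Eve wins $G_k$ from the root and $q$ iff $q\in\Aa(M)$. This is the step that isolates the weak (eventually stabilising) winning condition into a single parity bit.

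The core of the argument is to match $G_k$ to the fixpoint computation of Definition~\ref{def:fixpoint}. I would use the finite approximants $M_n$ obtained by unfolding the outermost recursion $n$ times and placing $\W$ below it, together with the $\beta\delta$-invariance of the semantics. The crucial consistency is that $\sem{\W}^k=\set{q\in\Qlk:\rho(q)\text{ even}}$ is exactly the value the game assigns to a $\W$-leaf, and it coincides with the seed of the stratum-$k$ fixpoint: the greatest refinement $(\cdot)\upsup$ used for even $k$ adds all of $Q_k$ (even states, won by Eve at a $\W$-leaf), while the least refinement $(\cdot)\upinf$ used for odd $k$ adds none of them. With this in place, $\sem{M}^k$ is the stationary value of the monotone sequence $\big(\sem{M_n}^k\big)_n$ — decreasing for even $k$, increasing for odd $k$, as in Corollary~\ref{cor:fixpoints} — and each $\sem{M_n}^k$ equals the acceptance value on the finite tree $BT(M_n)$ by induction on that tree, the one-step case being handled by the head-normal-form analysis of a closed term of type $o$ through the interpretation $\sem{a}^k$ and the monotonicity of $\delta$. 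Passing to the limit $n\to\infty$ then matches $\sem{M}^k\cap Q_k$ with $\Aa(M)\cap Q_k$ via $G_k$.

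The main obstacle I expect is the odd-rank (least-fixpoint) half of this matching. There one must show that membership in the least fixpoint is equivalent to Eve being able to force the play out of rank $k$ in a well-founded way, i.e.\ that the ascending approximants $\sem{M_n}^k$ capture finiteness of the rank-$k$ segment rather than mere branch-wise reachability; this is precisely where the alternation encoded by the Galois connections $(\cdot)\upinf$ and $(\cdot)\upsup$ of Lemma~\ref{lem:galloi_connections} carries the weight. Keeping the approximants, the logical relation and the truncation mutually consistent across strata — so that the seed of each stratum's fixpoint is genuinely the previous stratum's value lifted by the correct adjoint — is the delicate bookkeeping the proof must get right.
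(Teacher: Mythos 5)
Your reduction to rank exactly $k$ via $\Ll^k_o$ and Corollary~\ref{cor:L-on-terms} is sound, and your truncated game $G_k$ essentially reconstructs how the paper proves the easy inclusion $\sem{M}^k\incl\Aa(M)$: there Eve plays $(\sem{M_0}^k,\sem{M_1}^k)$ at a head normal form $aM_0M_1$, and every play either stays at even rank $k$ forever or drops to a lower rank where induction applies. The genuine gap is in your hard direction, where everything is made to rest on syntactic approximants $M_n$ obtained by unfolding with $\W$ below. This device is unsound in the stratified model precisely because its fixpoints are non-extremal (the paper's conclusion stresses, citing~\cite{SalWalTLCA}, that this is unavoidable). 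Consistency of the theorem itself applied to $\W$ forces $\sem{\W^o}^k$ to be the set of even-rank states of $\Qlk$, which for $k\geq 1$ is neither $\top^k$ nor $\bot^k$; consequently the sequence $\sem{M_n}^k=f^n(\sem{\W}^k)$, with $f=\sem{\lambda x.M}^k$, is not monotone --- ``decreasing for even $k$, increasing for odd $k$'' is false --- and Corollary~\ref{cor:fixpoints} says nothing about it. The iteration defining $\fix^k(f)$ in Definition~\ref{def:fixpoint} is seeded at $(\fix^{k-1}_A(f\down))\upsup$ (resp.\ $(\cdot)\upinf$), a semantic value depending on the previous stratum's fixpoint that no syntactic unfolding of the term denotes; in a finite lattice the iterates $f^n(\sem{\W}^k)$ may cycle, and even if they stabilize nothing forces the stationary value to lie in the fiber $\Ll^k(\fix^{k-1}(f\down))$, let alone to equal $\fix^k(f)$. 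You flag this as ``delicate bookkeeping,'' but it is the point where the plan breaks, not a detail. A secondary flaw: your $M_n$ unfolds only the outermost recursion, so $BT(M_n)$ is generally infinite (inner $Y$'s remain) and the induction ``on that finite tree'' is unavailable; unfolding all fixpoints makes the trees finite but compounds the limit-exchange problem across nested fixpoints.

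The paper avoids approximants altogether for the inclusion $\Aa(M)\cap\Qlk\incl\sem{M}^k$: it defines a logical relation $R^k$ between $\Dd^k$ and closed terms, with $R^k_0=\set{(P,M) : \Aa(M)\cap\Qlk\incl P}$ at type $o$ and the standard logical lift at higher types, and proves $(\sem{M}^k_v,M.\s)\in R^k_A$ by induction on the term (Lemma~\ref{lemma:accepted_is_in_semantics}). The only hard case is $Y$, handled by Lemma~\ref{lemma:between-two-R}, which transfers between $R^{k-1}$ and $R^k$ along $(\cdot)\down$ and $(\cdot)\upsup$: this relates $YP$ to the correct seed $(\fix^{k-1}_A(f\down))\upsup$, and then $\beta\delta$-invariance of $R^k$ together with finiteness of the lattice shows that the \emph{decreasing} iterates $f^i((\fix^{k-1}_A(f\down))\upsup)$ reach $\fix^k_A(f)$ while remaining related to $YP$. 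Monotone convergence is available there exactly because the iteration starts at the right seed --- the element your syntactic approximants cannot reach. If you want to keep your game-theoretic viewpoint, the repair is to retain $G_k$ for the strategy-construction direction and to replace the approximant machinery by a relation of this kind between semantic values and closed terms for the converse.
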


The rest of this section is devoted to the proof of the theorem.
For $k=0$ the model $\Dd^0$ is just the GFP model over $Q_0$. Moreover
$\Aa$ restricted to the states in $Q_0$ is an automaton with trivial
acceptance conditions. The theorem follows from
Theorem~\ref{thm:GFP-model}.

For the induction step consider an even $k>0$. The case where  $k$ is
odd is similar
and we will not present it here. The two directions of
Theorem~\ref{thm:model-correct} are proved using different
techniques. The next lemma shows the left to right inclusion and is based
on a rather simple unrolling. The other inclusion is proved using a
logical relation between the syntactic model of the $\lambda Y$-calculus
and the stratified model
(Lemma~\ref{lemma:accepted_is_in_semantics}). This relation allows us
to formally relate the abstractions built into the model to their
syntactic meanings that are expressed by the acceptance of B\"ohm
trees of closed $\lambda Y$-terms of atomic type
by the weak parity automaton.

\begin{lem}
  $\sem{M}^k\incl \Aa(M)$.
\end{lem}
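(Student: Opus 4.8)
The plan is to fix an even $k>0$ and to assume Theorem~\ref{thm:model-correct} at level $k-1$ as induction hypothesis, and then to build an explicit winning strategy for Eve in the acceptance game on $BT(M)$ started from an arbitrary $q\in\sem{M}^k$. I would carry along the play the invariant that at every position, consisting of a node $v$ and a state $q'$, the residual subterm $M_v$ (a closed term of type $o$ whose B\"ohm tree is the subtree of $BT(M)$ rooted at $v$) satisfies $q'\in\sem{M_v}^k$. This holds initially, and to expose a move I would $\beta\delta$-reduce $M_v$ to head normal form: since $M_v$ is closed of type $o$, either it is unsolvable, so $BT(M_v)=\W$ and $v$ is a leaf, or it reduces to $a\,N_1\cdots N_i$ with $a$ a constant (so $i\in\set{0,2}$), in which case, because $\Dd^k$ is a model and the semantics is invariant under $\beta\delta$-reduction, $q'\in\sem{M_v}^k=\sem{a}^k(\sem{N_1}^k,\dots,\sem{N_i}^k)$.

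By the definition of $\sem{a}^k$ this membership means $(\sem{N_1}^k,\dots,\sem{N_i}^k)\in\delta(q',a)$, so Eve can legally choose this tuple; whatever $q''\in\sem{N_j}^k$ Adam then selects preserves the invariant, since $BT(N_j)$ sits below $v$ and $N_j$ is again closed of type $o$. I would organise the strategy in two phases driven by the non-increasing ranks. In the first phase, while the current state has rank exactly $k$, Eve plays these model-guided moves. As soon as the play reaches a position $(w,q_w)$ with $\rho(q_w)\le k-1$, the invariant gives $q_w\in\sem{M_w}^k$, and since $q_w\in\Ql{k-1}$ we obtain $q_w\in\sem{M_w}^k\cap\Ql{k-1}=(\sem{M_w}^k)\down=\sem{M_w}^{k-1}$, the last equality coming from Corollary~\ref{cor:L-on-terms} together with the description of $\Ll^k_o$ as intersection with $\Ql{k-1}$. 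The induction hypothesis then yields $q_w\in\sem{M_w}^{k-1}=\Aa(M_w)\cap\Ql{k-1}$, so $\Aa$ accepts $BT(M_w)$ from $q_w$; at this point Eve abandons the model-guided moves and switches to a winning strategy furnished by this acceptance, which she follows for the rest of the play.

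It then remains to check that this strategy wins every play, and here the case analysis is short. A finite play ends either at a constant leaf $c$, where $q'\in\sem{c}^k=\set{q:(c,q)\in\delta_o}$ guarantees that $\delta_0(q',c)$ holds and Eve wins, or at a $\W$ leaf reached in the first phase, where the current rank is $k$, hence even. An infinite play that never leaves the first phase stabilises at rank $k$, again even, so Eve wins; and any play entering the second phase is eventually governed by the acceptance strategy of the induction hypothesis and is therefore won there. Since ranks never increase, these cases are exhaustive.

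The step I expect to be the main obstacle is the hand-off between the two regimes: one must be certain that the invariant is genuinely maintained through the head $\beta\delta$-reductions used to uncover each tree node, that every residual stays closed of atomic type so that the model characterisation applies to it, and above all that the passage from $\sem{}^k$ to $\sem{}^{k-1}$ via $(\cdot)\down$ is exactly intersection with $\Ql{k-1}$ at the base type, so that the induction hypothesis can be invoked at the precise moment the rank drops below $k$. This last point is where the Galois-connection structure of Lemma~\ref{lem:galloi_connections} and Corollary~\ref{cor:L-on-terms} does the real work, and it is the reason the greatest-fixpoint (even) stratum can delegate the behaviour of all lower ranks, in particular the dangerous odd ones, to the already-established level $k-1$.
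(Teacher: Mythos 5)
Your proof is correct and takes essentially the same route as the paper's: Eve plays the model-guided move $(\sem{N_1}^k,\sem{N_2}^k)\in\delta(q',a)$ read off the head normal form for as long as the rank stays at the even value $k$, and switches to the strategy supplied by the induction hypothesis (Theorem~\ref{thm:model-correct} at level $k-1$) the moment the rank drops, winning infinite phase-one plays because they stabilise at the even rank $k$. You merely spell out what the paper leaves implicit, namely the invariant $q'\in\sem{M_v}^k$ along the play and the identification $\sem{M_w}^k\cap\Ql{k-1}=\sem{M_w}^{k-1}$ via Corollary~\ref{cor:L-on-terms}.
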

\begin{proof}
  We take $q\in \sem{M}^k$ and describe a winning strategy for $Eve$
  in the acceptance game of $\Aa$
  on $BT(M)$ from $q$ (cf.\ page~\pageref{df:acceptance-game}).
 If the rank of $q<k$ then such a strategy exists
  by the induction assumption. So we suppose that $\rho(q)=k.$

  If $M$ does not have a head normal form then $BT(M)$ consists just
  of the root $\e$ labeled with $\W$. Then Eve wins by the definition
  of the game since $k$ is even.

  If the head normal form of $M$ is a constant $c:o$ then since $q\in
  \sem{c}^k$ we have $(q,c)\in \delta_o$. Eve wins by the definition of
  the game.

  Suppose then that $M$ has a head normal form $aM_0M_1$. As
  $\sem{M}^k=\sem{aM_0M_1}^k$ we have $q\in\sem{aM_0M_1}^k$.  By the
  semantics of $a$ we know that $(\sem{M_0}^k,\sem{M_1}^k)\in
  \delta(q,a)$. The strategy of Eve is to choose
  $(\sem{M_0}^k,\sem{M_1}^k)$. Suppose Adam then selects
  $i\in\set{0,1}$ and $q_i\in
  \sem{M_i}^k$. If $\rho(q_i)< k$ then Eve has a
  winning strategy by induction hypothesis. Otherwise, if $\rho(q_i)=k$
  we repeat the reasoning.

  This strategy is winning for Eve since a play either stays in states
  of even rank $k$ or switches to a play following a winning strategy for
  smaller ranks.
\end{proof}

It remains to show that $\Aa(M)\cap \Qlk\incl \sem{M}^k$. For this we will
define one logical relation between $\Dd^k$ and the syntactic model of
$\lambda Y$ and show a couple of lemmas.

\begin{defi}
  We define a logical relation between the model $\Dd^k$ and closed
  terms
  \begin{align*}
    R^k_0=&\set{(P,M) : \Aa(M)\cap\Qlk\incl P }\\
    R^k_{A\to B}=&\set{(f,M): \forall_{(g,N)\in R_A}.\ (f(g),MN)\in
      R_B}\ .
  \end{align*}
\end{defi}

Since $R^k$ is a logical relation we have:
\begin{lem}\label{lemma:R-reduction}
  If $M=_{\beta\delta}N$ and $(f,M)\in R^k_A$ then $(f,N)\in R^k_A$.
\end{lem}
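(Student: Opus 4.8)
The plan is to prove the statement by induction on the structure of the type $A$, exploiting the fact that a logical relation is automatically closed (in its term component) under any congruence that respects application, once it is closed at the base type.

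For the base case $A = o$, recall that $R^k_o = \set{(P,M) : \Aa(M)\cap\Qlk\incl P}$. Suppose $(P,M)\in R^k_o$ and $M =_{\beta\delta} N$. The key observation is that $\beta\delta$-convertible terms have the same B\"ohm tree: by the Church--Rosser property (as noted on the definition of $L(\Aa)$ above), $M =_{\beta\delta} N$ implies $BT(M) = BT(N)$, whence $\Aa(M) = \Aa(N)$ and in particular $\Aa(N)\cap\Qlk = \Aa(M)\cap\Qlk \incl P$. Thus $(P,N)\in R^k_o$. This is the only place where anything specific to the relation $R^k$ is used, and it is the step I would single out as the conceptual crux: everything rests on the invariance of $\Aa(\cdot)$ under conversion, which is itself external input from Church--Rosser rather than from the logical-relation machinery.

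For the inductive case $A = B\to C$, suppose $(f,M)\in R^k_{B\to C}$ and $M =_{\beta\delta} N$; I would show $(f,N)\in R^k_{B\to C}$ directly from the definition. Fix an arbitrary pair $(g,P)\in R^k_B$. Since $(f,M)\in R^k_{B\to C}$, we have $(f(g), MP)\in R^k_C$. Because $\beta\delta$-reduction is a congruence compatible with application, $M =_{\beta\delta} N$ gives $MP =_{\beta\delta} NP$. Applying the induction hypothesis at type $C$ to the pair $(f(g), MP)\in R^k_C$ and the conversion $MP =_{\beta\delta} NP$ yields $(f(g), NP)\in R^k_C$. As $(g,P)$ was arbitrary, this establishes $(f,N)\in R^k_{B\to C}$, completing the induction.

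The argument is entirely routine once the base case is in hand; no calculation with the stratified-model structure of $\Dd^k$ is needed, since $R^k$ is a relation \emph{between} the model and closed terms and only the term side is being varied. The remark ``Since $R^k$ is a logical relation we have'' preceding the statement signals exactly this: the result is an instance of the generic principle that a logical relation preserving a congruence at base type preserves it at every type, and the only nontrivial ingredient is the compatibility of $=_{\beta\delta}$ with $BT(\cdot)$ furnished by Church--Rosser.
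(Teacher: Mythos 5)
Your proof is correct and matches the paper's approach: the paper offers no explicit proof beyond the remark ``Since $R^k$ is a logical relation we have,'' and your induction on types---with invariance of $\Aa(\cdot)$ under $=_{\beta\delta}$ (via Church--Rosser and equality of B\"ohm trees) at the base type, and the congruence of application driving the arrow case---is exactly the standard unfolding of that remark. You have correctly identified the base case as the only step with real content; nothing about the stratified structure of $\Dd^k$ is needed.
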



The next lemma shows a relation between $R^k$ and $R^{k-1}$.
\begin{lem}\label{lemma:between-two-R}
  For every type $A$, $f\in \Dd^k_A$, $g\in \Dd^{k-1}_A$:
  \begin{itemize}
  \item if $(f,M)\in R^k_A$ then $(f\down,M)\in R^{k-1}_A$;
  \item if $(g,M)\in R^{k-1}_A$ then $(g\upsup,M)\in R^k$;
  \end{itemize}
\end{lem}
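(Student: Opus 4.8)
The plan is to prove both implications simultaneously by induction on the structure of the type $A$. A simultaneous induction is forced here because the two clauses feed into one another across the arrow: establishing the first clause at $B\to C$ requires lifting an argument \emph{up} via the second clause at $B$ (and then pushing the result back \emph{down} via the first clause at $C$), while the second clause symmetrically consumes the first clause at the argument type. The base case $A=o$ is a direct set-theoretic computation, and the inductive step is diagram-chasing through the commutation identities of Corollary~\ref{cor:up-down}.

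For $A=o$, recall that $P\down = P\cap\Ql{k-1}$ and that $g\upsup = g\cup Q_k$ (both computed in the proof of Lemma~\ref{lem:galloi_connections}). For the first clause, $(P,M)\in R^k_o$ says $\Aa(M)\cap\Qlk\incl P$; intersecting with $\Ql{k-1}$ gives $\Aa(M)\cap\Ql{k-1}\incl P\cap\Ql{k-1}=P\down$, which is exactly $(P\down,M)\in R^{k-1}_o$. For the second clause, $(g,M)\in R^{k-1}_o$ says $\Aa(M)\cap\Ql{k-1}\incl g$; splitting $\Aa(M)\cap\Qlk$ into its states of rank $<k$ and its states of rank $k$, the former is contained in $g$ and the latter in $Q_k$, so $\Aa(M)\cap\Qlk\incl g\cup Q_k=g\upsup$, i.e. $(g\upsup,M)\in R^k_o$.

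For $A=B\to C$, take the first clause and assume $(f,M)\in R^k_{B\to C}$. Given any $(g',N)\in R^{k-1}_B$, the induction hypothesis for the second clause at $B$ yields $(g'\upsup,N)\in R^k_B$, hence $(f(g'\upsup),MN)\in R^k_C$, and the induction hypothesis for the first clause at $C$ gives $((f(g'\upsup))\down,MN)\in R^{k-1}_C$. By Corollary~\ref{cor:up-down}, $(f(g'\upsup))\down = f\down((g'\upsup)\down)=f\down(g')$, where the last step uses $(g'\upsup)\down=g'$ (since $(g',g'\upsup)\in\Ll^k_B$ and $(\cdot)\down$ is the unique such left component). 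As $(g',N)$ was arbitrary, $(f\down,M)\in R^{k-1}_{B\to C}$. The second clause is dual: from $(g,M)\in R^{k-1}_{B\to C}$ and any $(h,N)\in R^k_B$, the first clause at $B$ gives $(h\down,N)\in R^{k-1}_B$, then $(g(h\down),MN)\in R^{k-1}_C$, then the second clause at $C$ gives $((g(h\down))\upsup,MN)\in R^k_C$, and Corollary~\ref{cor:up-down} identifies $(g(h\down))\upsup=g\upsup(h)$, so $(g\upsup,M)\in R^k_{B\to C}$.

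The only real subtlety, and the reason the two clauses cannot be proved independently, is exactly this crossing of the arrow: proving one direction at a function type consumes the other direction at the argument type, so the induction must carry both statements at once. Beyond correctly setting up that simultaneous induction, everything reduces to the algebraic identities already packaged in Corollary~\ref{cor:up-down} together with the collapse $(g\upsup)\down=g$, so I anticipate no genuine obstacle.
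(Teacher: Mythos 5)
Your proof is correct and takes essentially the same route as the paper's: the same simultaneous structural induction on $A$, the same base-case set computations, and the same crossing of the two clauses at arrow types resolved by the identities $(f(h\upsup))\down = f\down(h)$ and $g\upsup(h) = (g(h\down))\upsup$ from Corollary~\ref{cor:up-down}. Your explicit justification of $(g'\upsup)\down = g'$ via uniqueness of the down element is exactly what the paper's citation of Corollary~\ref{coro:Nn_is_lattice} supplies, so there is nothing to fix.
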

\begin{proof}
  The proof is an induction on the size of the type. The base case is
  when  $A=o$.

  For the first item suppose  $(f,M)\in R^k_A$. By definition, this
  means $\Aa(M)\cap \Qlk \incl f$. Then $\Aa(M)\cap Q_{\leq k-1}\incl
  f\cap Q_{\leq k-1}=f\down$. So $(f\down,M)\in R^{k-1}_A$.

  For the second item suppose $(g,M)\in R^{k-1}_A$. So
  $\Aa(M)\cap Q_{\leq k-1}\incl g$. We have $\Aa(M)\cap \Qlk\incl
  (\Aa(M)\cap Q_{\leq k-1})\cup Q_k\incl g\cup Q_k=g\upsup$.

  For the induction step let $A$ be $B\to C$. Let us consider the
  first item. Suppose $(f,M)\in R^k_{B\to C}$. Take $(h,N)\in
  R^{k-1}_B$, we need to show that $(f\down(h),MN)\in R^{k-1}_C$. By
  the second item of the induction hypothesis we get $(h\upsup,N)\in
  R^k_B$. Then $(f(h\upsup),MN)\in R^k_C$, by the definition of
  $R^k_{B\to C}$. Using the first item of the induction hypothesis we
  get $((f(h\upsup))\down,MN)\in R^{k-1}_C$. Then using
  Corollaries~\ref{coro:Nn_is_lattice} and~\ref{cor:up-down} we obtain
  $(f(h\upsup))\down=f\down((h\upsup)\down)=f\down(h)$.

  For the proof of the second item consider $(g,M)\in R^{k-1}_{B\to
    C}$ and $(h,N)\in R^k_{B}$. We need to show that
  $(g\upsup(h),MN)\in R^k_C$.  From the first item of the induction
  hypothesis we obtain $(h\down,N)\in R^{k-1}_B$, so
  $(g(h\down),MN)\in R^{k-1}_C$. The second item of the induction
  hypothesis gives us $((g(h\down))\upsup,MN)\in R^k_C$. We are done
  since $g\upsup(h)= (g(h\down))\upsup$ by
  Corollary~\ref{cor:up-down}.
\end{proof}

The right to left inclusion of Theorem~\ref{thm:model-correct} is
implied by the following more general statement.

\begin{lem}\label{lemma:accepted_is_in_semantics}
  Let $v$ be a valuation, and let $\s$ be a substitution of closed terms 
  satisfying
  $(v(x^A),\s(x^A))\in R^k_{A}$ for every variable $x^A$ in the
  domain of $\s$.  For every term $M$ of a type $A$ we have
  $(\sem{M}^k_v,M.\s)\in R^k_A$.
\end{lem}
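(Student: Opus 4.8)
The plan is to prove Lemma~\ref{lemma:accepted_is_in_semantics} by induction on the structure of the term $M$, exactly as one proves the fundamental lemma of logical relations. Since $R^k$ is a logical relation, the cases for application and abstraction are automatic from the definition of $R^k_{A\to B}$: for application $M = M_1 M_2$, the induction hypotheses give $(\sem{M_1}^k_v, M_1.\s) \in R^k_{B\to A}$ and $(\sem{M_2}^k_v, M_2.\s) \in R^k_B$, and the defining clause of $R^k_{B\to A}$ yields $(\sem{M}^k_v, M.\s) \in R^k_A$. The abstraction case $M = \lambda y.N$ follows the usual pattern: to show $(\sem{M}^k_v, (\lambda y.N).\s) \in R^k_{B\to C}$ one takes an arbitrary $(g,P) \in R^k_B$, observes that the extended valuation $v[y\mapsto g]$ and the extended substitution satisfy the hypothesis, applies the induction hypothesis to $N$, and then uses Lemma~\ref{lemma:R-reduction} to transport along the $\beta$-reduction $(\lambda y. N.\s)P \to_\beta N.\s[y:=P]$.

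The genuinely interesting cases are the base cases — variables, constants, and the fixpoint combinator $Y$. The variable case is immediate from the hypothesis on $v$ and $\s$. For a constant $c$ of type $o$ or $a$ of type $o^2\to o$, I would verify directly from the definitions $\sem{c}^k = \set{q\in\Qlk : (c,q)\in\delta_o}$ and $\sem{a}^k(S_0,S_1) = \set{q\in\Qlk : (S_0,S_1)\in\delta(q,a)}$ that membership in the semantic value corresponds precisely to acceptance by $\Aa$; this amounts to unwinding one step of the acceptance game at the root labeled $c$ or $a$, matching the automaton transition against the semantic clause.

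The main obstacle is the fixpoint case $M = Y^A$, where $\sem{Y^A}^k = \fix_A^k$. Here I would argue by induction on the stratum $k$ and use the structure established in Section~\ref{sec:model-wmso}. For even $k$, $\fix^k_A(f)$ is computed as a descending iteration $\bigwedge\set{f^n(e):n\geq 0}$ starting from $e = (\fix^{k-1}_A(f\down))\upsup$, as in Definition~\ref{def:fixpoint} and Corollary~\ref{cor:fixpoints}. To show $(\fix^k_A(f), Y P)\in R^k_{A}$ whenever $(f,P)\in R^k_{A\to A}$, I would exploit the $\delta$-reduction $YP \to_\delta P(YP)$ together with Lemma~\ref{lemma:R-reduction} to relate the syntactic unfolding to the semantic iteration, and use Lemma~\ref{lemma:between-two-R} to descend to stratum $k-1$ and lift back up via $(\cdot)\upsup$. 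The delicate point is that the greatest-fixpoint (even) stratum starts its iteration not from $\top$ but from the lift of the stratum-$(k-1)$ fixpoint, so I must check that this starting point $e$ already satisfies the relation with $YP$ — which is exactly where the induction hypothesis on $k-1$, combined with the acceptance semantics restricted to $Q_{\leq k-1}$, feeds in. I expect this interaction between the zig-zag fixpoint definition and the acceptance game (where ranks can only decrease, so the even stratum is stable under the odd refinements below it) to be the crux of the whole argument.
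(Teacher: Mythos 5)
Your proposal follows essentially the same route as the paper's own proof: a structural induction in the fundamental-lemma style, with the application and abstraction cases handled via the logical-relation clauses and Lemma~\ref{lemma:R-reduction}, the constant case by unwinding Eve's first move in the acceptance game (where weakness and the assumed monotonicity of $\delta$ let you pass from her chosen $(T_0,T_1)$ to the given $(S_0,S_1)$), and the $Y$ case exactly as you predict---an outer induction on the stratum $k$ giving $(\fix^{k-1}_A,Y)\in R^{k-1}$, lifted by Lemma~\ref{lemma:between-two-R} via $(\cdot)\upsup$, then iterated using $YP=_{\beta\delta}P(YP)$ and Lemma~\ref{lemma:R-reduction}, with the decreasing sequence $f^i((\fix^{k-1}_A(f\down))\upsup)$ reaching $\fix^k_A(f)$ after finitely many steps by finiteness of the lattice. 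You correctly identified the crux (the iteration starting from the lift of the stratum-$(k-1)$ fixpoint rather than $\top$), and the only detail worth making explicit when writing it up is the finite stabilization of that chain, which is what lets the relation pass to the fixpoint itself.
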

\begin{proof}
  The proof is by induction on the structure of $M$.

  If $M$ is a variable then the proof is immediate.

  If $M$ is a constant $a$ then we show that $(\sem{a},a)\in
  R^k_{o\to o\to o}$. For this we take arbitrary $(S_0,N_0),(S_1,N_1)\in
  R^k_0$, and we show that $(\sem{a}^k(S_0,S_1), aN_0N_1)\in
  R^k_0$. Take $q\in \Aa(aN_0N_1)\cap\Qlk$. Let us look at Eve's
  winning strategy in the acceptance game from $q$ on
  $BT(aN_0N_1)$. In the first round of this game she chooses some
  $(T_0,T_1)\in\delta(a,q)$. So $q\in\sem{a}(T_0,T_1)$. Since her strategy
  is winning we have $T_0\incl \Aa(N_0)$ and $T_1\incl \Aa(N_1)$, and
  by weakness of the automaton $T_0,T_1\incl \Qlk$.  From the
  definition of $R^k_0$ we get $\Aa(N_0)\cap \Qlk\incl S_0$ and
  $\Aa(N_1)\cap \Qlk\incl S_1$. By monotonicity we get the desired
  $q\in\sem{a}(S_0,S_1)$.

  If $M$ is an application $NP$ then the conclusion is immediate from
  the definition of $R^k_A$ and the induction hypothesis.

  If $M$ is an abstraction $\lambda x.\ N:B\to C$, then we take
  $(g,P)\in R^k_B$. By induction hypothesis
  $(\sem{N}^k_{v[g/x]},N.\s[P/x])\in R^k_C$. So $(\sem{M}^k_v(g),MP)\in
  R^k_C$ by Lemma~\ref{lemma:R-reduction}.

  If $M=Y^{(A\to A)\to A}$. Take $(f,P)\in R^k_{A\to A}$.  By
  Lemma~\ref{lemma:between-two-R} we have $(f\down,P)\in R^{k-1}_{A\to
    A}$. As by the outermost induction hypothesis
  $$(\fix_A^{k-1},Y^{(A\to A)\to A})\in R^{k-1}_{(A\to A)\to A}\ ,$$ and we obtain
  $(\fix^{k-1}_A(f\down),YP)\in R^{k-1}_A$. Once again using
  Lemma~\ref{lemma:between-two-R} we can deduce
  $((\fix^{k-1}_A(f\down))\upsup,YP)\in R^k_A$. By the choice of
  $(f,P)$ we obtain
  $$(f((\fix^{k-1}_A(f\down))\upsup),P(YP))\in R^k_A\ .$$
  Since $YP=_{\beta\delta}P(YP)$,  we
  have $(f^i((\fix^{k-1}_A(f\down))\upsup),YP)\in R^k_A$ for all
  $i\geq 0$. Since the sequence of $f^i((\fix^{k-1}_A(f\down))\upsup)$
  is decreasing, it reaches the fixpoint $\fix^k_A(f)$ in a finite
  number of steps and $(\fix^k_A(f),YP)$ is in $R_A$. As $(f,P)$ is an
  arbitrary element of $R^k_{A\to A}$, this shows that $(\fix^k_A,Y)$
  is in $R^k_{(A\to A)\to A}$.
\end{proof}



\section{From models to  type systems}
\label{sec:from-model-type}

We are now in a position to show that our type system from
Figure~\ref{fig:type_system} can reason about the values of $\lambda
Y$-terms in a stratified model, cf.\
Theorem~\ref{thm:types_correctness_completeness} below. Thanks to
Theorem~\ref{thm:model-correct} this means that the type system can
talk about the acceptance of the B\"ohm tree of a term by the automaton.
This implies the soundness and completeness of our type system,
Theorem~\ref{thm:main}.

Throughout this section we work with a fixed signature
$\S$ and a fixed weak alternating automaton
$\Aa=\struct{Q,\S,q^0,\delta_o,\delta_{o^2\to o},\rho}$.  As in the
previous section, for the sake of the simplicity of notations we will
assume that the
constants in the signature are of type $o$ or $o\to o\to o$.
We will also prefer the notation $Y x.M$ to $Y(\lambda x. M)$.


The arrow constructor in types will be interpreted as a step function
in the model.  Step functions are  particular monotone functions from a
lattice $\Ll_1$ to a lattice $\Ll_2$. For later use we also define
co-step functions.
For $d$ in $\Ll_1$ and $e$ in $\Ll_2$, the \emph{step function}
$d\step e$ and the \emph{co-step function} $d\costep g$ are defined
by:
\begin{align*}
  (d\step e)(h) =& \left\lbrace
    \begin{array}{l}
      e \text{ when } d\leq h\\
      \bot \text{ otherwise}
    \end{array}
    \right.
&
    (d\costep e)(h) =& \left\lbrace
      \begin{array}{l}
        e \text{ when } h\leq d\\
        \top \text{ otherwise\ .}
      \end{array}
      \right.
\end{align*}
To emphasize that we work in $\Dd^k$ we will sometimes write $d\step^k e$ and
$d\costep^k e$.

Types introduced on page~\pageref{def:types} can be meaningfully
interpreted at every level of the model. So $\tint{t}^k$ will denote
the interpretation of $t$ in $\Dd^k$ defined as follows.
\begin{align*}
  \tint{q}^{k} &= \set{q} \text{ if }\rho(q)\leq k,\,\es \text{ otherwise}\\
  \tint{S}^k & = \Lor\set{\tint{t}^k : t\in S}\qquad\text{for $S\incl \Types_A$}\\
  \tint{T\to s}^k&= \tint{T}^k\step\tint{s}^k\qquad\text{for $(T\to s)\in \Types_A$}
\end{align*}

Directly from the definition we have $\tint{S_1\cup S_2}^k =
\tint{S_1}^k\vee \tint{S_2}^k$, and  $\tint{S\to T}^{k} =
\tint{S}^{k}\step^k \tint{T}^{k}$.

The next lemma summarizes basic facts about the interpretation of
types. Recall that the application operation  $S(T)$ on types
(cf. page~\pageref{df:type-applicaiton}) means $\set{t : (U\to t)\in S\land
  U\tleq T}$. The proof of the lemma uses
Corollaries~\ref{cor:up-down} and~\ref{coro:galois_connections}.
\begin{lem}
  \label{lem:types_and_level_prop}
  For every type $A$, if $S\incl\Types_A$ and $k\leq m$ we have:
  $\tint{S}^k=\tint{S\cap \Types^k_A}^k$, $\tint{S\cap
    \Types^k_A}^{k+1} = (\sem{S}^k_A)\upinf$ and
  $\tint{S}^k=(\tint{S}^{k+1})\down$. 
\end{lem}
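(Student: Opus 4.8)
The plan is to prove all three identities by a single induction on the type $A$, reducing the arrow case to two commutation formulas that describe how the transfer maps $(\cdot)\upinf$ and $(\cdot)\down$ interact with the step functions interpreting arrow types. Before starting I would record the base-type behaviour extracted from the proof of Lemma~\ref{lem:galloi_connections}: for $P\in\Dd^{k+1}_o$ and $d\in\Dd^{k}_o$ one has $P\down=P\cap\Ql{k}$, $d\upinf=d$ and $d\upsup=d\cup Q_{k+1}$; and recall $\tint{q}^k=\set{q}$ if $\rho(q)\le k$ and $\tint{q}^k=\es$ otherwise. I would also use repeatedly that, being the lower adjoints of the Galois connections of Corollary~\ref{coro:galois_connections}, both $(\cdot)\upinf$ and $(\cdot)\down$ preserve arbitrary joins, and in particular send $\bot$ to $\bot$. (The last two identities are understood for $k<m$, so that $\Dd^{k+1}$ is defined.)

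For the first identity I would first establish the auxiliary claim that a single type $t\in\types^l_A$ with $l>k$ has $\tint{t}^k=\bot^k_A$, by induction on $A$: the base case is immediate, and for $A=B\to C$ with $t=T\to s$ the inductive hypothesis gives $\tint{s}^k=\bot^k_C$, so $\tint{t}^k=\tint{T}^k\step^k\bot^k_C$ is the constant function $\bot^k_{B\to C}$ because $d\step^k\bot=\bot$. Since $\tint{S}^k=\Lor\set{\tint{t}^k:t\in S}$ and the summands coming from types of rank $>k$ are $\bot$, dropping them leaves the join unchanged, which is exactly $\tint{S}^k=\tint{S\cap\Types^k_A}^k$.

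The technical heart is the pair of commutation formulas
\begin{align*}
  (d\step^{k} e)\upinf &= (d\upinf)\step^{k+1}(e\upinf), &
  (d\step^{k+1} e)\down &= (d\down)\step^{k}(e\down),
\end{align*}
each proved by evaluating both sides pointwise. For the first, Corollary~\ref{cor:up-down} gives $(d\step^{k} e)\upinf(h)=\big((d\step^{k} e)(h\down)\big)\upinf$, which equals $e\upinf$ when $d\le h\down$ and $\bot^{k+1}$ otherwise; the adjunction $d\upinf\le h\iff d\le h\down$ of Corollary~\ref{coro:galois_connections} converts the guard, producing exactly $\big((d\upinf)\step^{k+1}(e\upinf)\big)(h)$. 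The second is dual: evaluating at $h'\in\Dd^{k}_B$ and using Corollary~\ref{cor:up-down} with the lift $h'\upsup$ (so $(h'\upsup)\down=h'$) gives $(d\step^{k+1}e)\down(h')=\big((d\step^{k+1}e)(h'\upsup)\big)\down$, and the adjunction $d\le h'\upsup\iff d\down\le h'$ turns it into $\big((d\down)\step^{k}(e\down)\big)(h')$. Lining up the step-function guards through the adjunctions, together with the $\bot$-cases, is the only delicate point; the rest is bookkeeping.

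With these in hand the last two identities are parallel inductions on $A$. For the second I would start from the first identity, push $(\cdot)\upinf$ through the join, and reduce to showing $(\tint{t}^k)\upinf=\tint{t}^{k+1}$ for each $t\in\Types^k_A$: the base case is $d\upinf=d$, and for $t=T\to s$ the first commutation formula together with the inductive hypotheses $(\tint{T}^k)\upinf=\tint{T}^{k+1}$ and $(\tint{s}^k)\upinf=\tint{s}^{k+1}$ (legitimate since $T\subseteq\Types^k_B$ and $s\in\Types^k_C$) yield $(\tint{T\to s}^k)\upinf=\tint{T}^{k+1}\step^{k+1}\tint{s}^{k+1}=\tint{T\to s}^{k+1}$. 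The third identity is handled the same way via the second commutation formula and join-preservation of $(\cdot)\down$; here no restriction to $\Types^k_A$ is needed, since $(\tint{t}^{k+1})\down=\tint{t}^k$ holds for every $t$ (both sides being $\bot$ once the rank of $t$ exceeds $k$), whence $(\tint{S}^{k+1})\down=\Lor\set{(\tint{t}^{k+1})\down:t\in S}=\Lor\set{\tint{t}^k:t\in S}=\tint{S}^k$.
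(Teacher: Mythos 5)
Your proposal is correct and follows essentially the same route as the paper: the first identity via the auxiliary claim that types of rank above $k$ denote $\bot^k_A$, and the other two via a type-structure induction whose heart is exactly the paper's commutation identity $(f\step^k g)\upinf = f\upinf \step^{k+1} g\upinf$ proved pointwise from Corollary~\ref{cor:up-down} and the Galois adjunctions of Corollary~\ref{coro:galois_connections}. The only difference is that you spell out the dual commutation formula for $(\cdot)\down$ (using the lift $h'\upsup$ with $(h'\upsup)\down = h'$) and the join-preservation bookkeeping, which the paper compresses into ``the proof of the third is analogous.''
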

\begin{proof}
  Consider the first statement.
  Since $\tint{S}^k = \Lor\set{\tint{t}^k : t\in S}$ it is sufficient
  to show that $\tint{t}^k=\bot^k_A$ for $t\not\in\Types^k_A$. We do it by
  induction on the type $A$.

  Suppose that $t\in\types^l_A$ for $l>k$.
  For the type $o$ it follows directly from the definition that
  $\tint{t}^k=\es$. For $A$ of the form $B\to C$ we know that $t$ is
  of the form $T\to s$ with $T\incl \Types^l_B$ and $s\in\types^l_C$.
  By induction assumption $\tint{s}^k=\bot^k_C$. We get $\tint{T\to
    s}^k=\tint{T}^k\step^k\tint{s}^k=\tint{T}^k\step^k\bot_C=\bot^k_{B\to
    C}$.

  We give the proof of the second statement, the proof of the third is
  analogous.
  We prove the result only for elements of $\types^k_A$ as the more
  general one is a direct consequence of that particular case.
  The proof is by induction on $A$. The base case is obvious. For $A$
  of the form $B\to C$ we have $\tint{T\to s}^{k+1}$ is by definition
  $\tint{T}^{k+1}\step^{k+1}\tint{s}^{k+1}$ which by induction
  hypothesis is $(\tint{T}^k)\upinf\step^{k+1}(\tint{s}^k)\upinf$.
  We will be done if we show that for every $f\in \Dd^k_B$ and
  $g\in\Dd^k_C$:
  \begin{equation*}
    f\upinf\step^{k+1}g\upinf=(f\step^k g)\upinf\ .
  \end{equation*}
  Given $e$ in $\Dd^{k+1}_{B}$,
  by Corollary~\ref{cor:up-down}, we have $(f\step^{k}
  g)\upinf(e)$ is equal to $((f\step^k g)(e\down))\upinf$, and
  therefore:
  \begin{equation*}
  (f\step^{k}  g)\upinf(e) =
  \begin{cases}
      g\upinf & \text{ if } f\leq e\down, \\
      \bot^{k+1}_C&  \text{ otherwise}.
  \end{cases}
  \end{equation*}
  Since $f\leq e\down$ iff $ f\upinf\leq e$, by
  Corollary~\ref{coro:galois_connections}, this  proves the desired
  equality.
\end{proof}

Actually every element of $\Dd^k$ is the image of some type via $\tint{\cdot}^k$: types
are syntactic representations of the model.
For this we use $\bar{(\cdot)}$ operation (cf.\ Definition~\ref{df:bar}).

\begin{lem}
  \label{lem:sem_representations_with_types}
  For every $k\leq m$, every type $A$ and  every $d$ in $\Dd^k_{A}$
  there is $S\incl\Types^k_{A}$ so that $\tint{S}^{k} = d$, and there
  is $S'\incl \types^k_A$ so that $\tint{S'}^{k} = \bar d$.
\end{lem}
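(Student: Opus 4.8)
The plan is to prove both assertions at once by a double induction: an outer induction on $k$ and, for each fixed $k$, an inner induction on the type $A$. For a given $A$ I will first establish the statement about $\bar d$ (existence of $S'\incl\types^k_A$ with $\tint{S'}^k=\bar d$) and only then derive the statement about $d$ (existence of $S\incl\Types^k_A$ with $\tint{S}^k=d$). The engine of the whole argument is the decomposition $d=((d\down)\upinf)\vee\overline{d}$ furnished by Lemma~\ref{lem:Nn_k_decomposition}, which splits an arbitrary $d\in\Dd^k_A$ into a ``lower'' part $(d\down)\upinf$ determined by the element $d\down\in\Dd^{k-1}_A$ and a ``rank-$k$'' part $\bar d$, which by the discussion following Lemma~\ref{lem:Nn_k_decomposition} lies in $\Ll^k_A(\bot^{k-1}_A)$ and is in particular monotone.

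First I would reduce the statement about $d$ to the one about $\bar d$. Assuming the $d$-statement at level $k-1$ for $A$, I get $S_0\incl\Types^{k-1}_A$ with $\tint{S_0}^{k-1}=d\down$; since $S_0\cap\Types^{k-1}_A=S_0$, Lemma~\ref{lem:types_and_level_prop} gives $\tint{S_0}^k=(\tint{S_0}^{k-1})\upinf=(d\down)\upinf$. Taking $S'$ from the $\bar d$-statement at level $k$ and setting $S=S_0\cup S'\incl\Types^k_A$ (using $\Types^{k-1}_A\incl\Types^k_A$), additivity of $\tint{\cdot}^k$ over unions yields $\tint{S}^k=(d\down)\upinf\vee\bar d=d$ by Lemma~\ref{lem:Nn_k_decomposition}. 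The base case $k=0$ needs no level below it: there $\bar d=d$ and $\types^0_A=\Types^0_A$, so the $d$-statement coincides with the $\bar d$-statement, which is proved by the inner induction alone.

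For the $\bar d$-statement, the base type $A=o$ is direct: $\bar d=d\cap Q_k\incl Q_k=\types^k_o$, so $S'=\set{q:q\in\bar d}$ works. For $A=B\to C$ I use the step-function presentation of monotone maps and define the finite set $S'=\set{T\to s: T\incl\Types^k_B,\ s\in\types^k_C,\ \tint{s}^k\leq\bar d(\tint{T}^k)}\incl\types^k_{B\to C}$. The inclusion $\tint{S'}^k\leq\bar d$ is routine: each $\tint{T}^k\step\tint{s}^k$ evaluated at $h$ equals $\tint{s}^k$ when $\tint{T}^k\leq h$ (and $\bot$ otherwise), and then $\tint{s}^k\leq\bar d(\tint{T}^k)\leq\bar d(h)$ by monotonicity. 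The reverse inclusion $\bar d\leq\tint{S'}^k$ is the crux: for $h\in\Dd^k_B$ the inner hypothesis (the $d$-statement for the smaller type $B$) gives $T_h\incl\Types^k_B$ with $h=\tint{T_h}^k$, and the inner hypothesis (the $\bar d$-statement for the smaller type $C$) applied to $d(h)$ gives $S'_h\incl\types^k_C$ with $\bar d(h)=\overline{d(h)}=\tint{S'_h}^k$; each $s\in S'_h$ then satisfies $T_h\to s\in S'$, and evaluating this step function at $h=\tint{T_h}^k$ returns exactly $\tint{s}^k$, so joining over $s\in S'_h$ gives $\bar d(h)\leq\tint{S'}^k(h)$.

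The main obstacle is making the induction skeleton close, since the dependencies cross both parameters: the $\bar d$-statement at $B\to C$ calls on the \emph{$d$-statement} at the smaller type $B$ and the $\bar d$-statement at the smaller type $C$, while the $d$-statement at $A$ calls on the $\bar d$-statement at the same $A$ together with the $d$-statement one level down. Ordering the work as ``for each level $k$, by increasing type complexity, prove $\bar d$ before $d$'' makes every appeal legitimate. The secondary care is bookkeeping with the Galois maps $(\cdot)\upinf,(\cdot)\down$ and with Lemma~\ref{lem:types_and_level_prop}: one must check that the level shift $\tint{S_0}^{k-1}\mapsto\tint{S_0}^k=(d\down)\upinf$ is exactly the one those lemmas provide, and that $\bar d\in\Ll^k_A(\bot^{k-1}_A)\incl\Dd^k_A$ so that $\bar d$ is a legitimate (monotone) model element on which the step-function decomposition may be run.
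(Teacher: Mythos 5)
Your proof is correct and takes essentially the same route as the paper's: both rest on the decomposition $d=((d\down)\upinf)\vee\overline{d}$ of Lemma~\ref{lem:Nn_k_decomposition}, the level shift $\tint{S_0}^{k}=(\tint{S_0}^{k-1})\upinf$ from Lemma~\ref{lem:types_and_level_prop}, and a step-function description of $\overline{d}$ at arrow types, the only cosmetic difference being that the paper builds $S_2=\bigcup\set{S_e\to S_{\overline{d(e)}} \mid e\in\Dd^k_B}$ from chosen representatives where you take the saturated set of all $T\to s$ with $\tint{s}^k\leq\overline{d}(\tint{T}^k)$. Your bookkeeping is in fact slightly more careful than the paper's: you make explicit the double induction (outer on $k$, inner on the type, with the $\types^k$-statement proved before the $\Types^k$-statement) that the paper leaves implicit when it invokes the ``induction hypothesis'' at the same level $k$ for smaller types, and you prove the base case $k=0$ directly by observing that the two statements coincide there, whereas the paper cites prior work.
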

\begin{proof}
    We proceed by induction on $k$.

  The case where $k=0$ has been proved in~\cite{SalWalTLCA}.

  For the case $k >0$, as we have seen with
  Lemma~\ref{lem:Nn_k_decomposition}, that $f = (f\down)\upinf \vee
  \overline{f}$. From the induction hypothesis there is $S_1\incl
  \Types^{k-1}_A$ such that $\tint{S_1}^{k-1}=f\down$. By
  Lemma~\ref{lem:types_and_level_prop} we get $\tint{S_1}^k=(f\down)\upinf$.

  It remains to describe $\overline{f}$ with types from $\types^k_{B\to
    C}$. Take $d\in\Dd^k_B$ and recall that $\bar f
  (d)=\bar{f(d)}$. By induction hypothesis we have $S_d\incl
  \Types^k_B$ and $S_{\bar{f(d)}}\incl \types^k_{C}$ such that
  $\tint{S_d}^k=d$ and $\tint{S_{\bar{f(d)}}}^k=f(d)$. So the set of
  types $S_d\to S_{\overline{f(d)}}$ is included in $\types^k_{B\to C}$ and
  $\tint{S_d\to S_{\overline{f(d)}}}=d\step^k\bar f(d)$. It remains to take
  $S_2=\bigcup\set{S_d \to S_{\overline{f(d)}}\mid d\in \Dd^k_B}$. We can
  conclude that $S_2\incl
  \types^k_A$ and $\tint{S_2}^k=\bar f$. Therefore $\tint{S_1\cup
    S_2}^k=f$.
\end{proof}

Not only can type represent every element of the model, but also the
subsumption rule exactly represents the partial order of the model.

\begin{lem}\label{lem:ordering_correct_complete}
  For each $k$, $A$, $S$ and $T$ in $\Types^k_A$, we have that
  $\tint{S}^{k}\leq\tint{T}^k$ iff $S\tleq_A T$.
\end{lem}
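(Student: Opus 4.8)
The plan is to prove the statement by induction on the structure of the type $A$, establishing simultaneously the companion fact for single types: for $s,t\in\Types^k_A$ we have $\tint{s}^k\le\tint{t}^k$ iff $s\tleq_A t$. The set-level statement of the lemma then follows from this single-type version together with one lattice-theoretic property of the interpretations of single types. Concretely, for the forward implication of the set case I will need that each $\tint{s}^k$, for a single type $s$, is \emph{join-prime} in $\Dd^k_A$, meaning $\tint{s}^k\le\Lor X$ implies $\tint{s}^k\le x$ for some $x\in X$. This is the key auxiliary claim, and I would carry it along inside the same induction (recall that joins in $\Dd^k$ are computed pointwise, which follows from item~$0$ of Lemma~\ref{lem:galloi_connections}).

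At the base type $o$ the interpretations are explicit: for $S\incl\Qlk$ one has $\tint{S}^k=\Lor\set{\set{q}:q\in S}=S$ as a subset of $\Qlk$, and the order of $\Dd^k_o=\Pp(\Qlk)$ is inclusion. Hence $\tint{S}^k\le\tint{T}^k$ iff $S\incl T$, which is exactly the defining rule $S\incl T$ for $\tleq_o$; the single-type version reduces to $\set{s}\incl\set{t}$ iff $s=t$, matching the rule $s\tleq_o t$. Join-primality at $o$ is immediate, since singletons are the atoms of $\Pp(\Qlk)$.

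For the inductive step $A=B\to C$ and single types $s=S_1\to s_1$, $t=T_1\to t_1$, I would use that the interpretations are step functions, $\tint{s}^k=\tint{S_1}^k\step\tint{s_1}^k$, together with the standard characterization of the pointwise order on step functions: provided $\tint{s_1}^k\neq\bot$, one has $\tint{S_1}^k\step\tint{s_1}^k\le\tint{T_1}^k\step\tint{t_1}^k$ iff $\tint{T_1}^k\le\tint{S_1}^k$ and $\tint{s_1}^k\le\tint{t_1}^k$ (the forward direction by evaluating at the threshold $\tint{S_1}^k$, the backward by monotonicity). The induction hypothesis turns these two inequalities into $T_1\tleq_B S_1$ and $s_1\tleq_C t_1$, which is precisely the arrow rule for $\tleq_{A\to B}$. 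The side condition $\tint{s_1}^k\neq\bot$ is supplied by join-primality of $s_1$, since a join-prime element is never $\bot$ (it is not below the empty join). For primality at $B\to C$: if $\tint{S_1}^k\step p\le\Lor_i f_i$ with $p$ join-prime, evaluate at $\tint{S_1}^k$ to get $p\le\Lor_i f_i(\tint{S_1}^k)$, pick by primality of $p$ an index with $p\le f_i(\tint{S_1}^k)$, and conclude $\tint{S_1}^k\step p\le f_i$ by monotonicity of $f_i$.

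Finally, for arbitrary intersection types $S,T\incl\Types^k_A$, the backward direction is routine: if $S\tleq_A T$ then each $s\in S$ has some $t\in T$ with $\tint{s}^k\le\tint{t}^k\le\tint{T}^k$, so $\tint{S}^k=\Lor_{s\in S}\tint{s}^k\le\tint{T}^k$. The forward direction is where the real work sits and where join-primality is indispensable: from $\tint{s}^k\le\tint{S}^k\le\tint{T}^k=\Lor_{t\in T}\tint{t}^k$ and primality of $\tint{s}^k$ I extract a single $t\in T$ with $\tint{s}^k\le\tint{t}^k$, hence $s\tleq_A t$ by the single-type induction hypothesis, yielding $S\tleq_A T$. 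I expect the main obstacle to be exactly this forward direction of the set case — isolating one dominating $t$ out of a join — which is precisely why the join-primality of step-function interpretations must be threaded through the induction rather than postponed.
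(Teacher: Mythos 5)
Your argument is essentially correct, and it takes a genuinely different route from the paper's. The paper proves this lemma in two lines by outsourcing: it embeds $\Dd^k$ into the full monotone model generated by $\Pp(\Qlk)$ and cites~\cite{salvati12:logical_relation} for the fact that the intersection-type ordering simulates the semantic order in monotone models. You instead give a self-contained structural induction, threading through it the join-primality of the denotations of single types and the threshold-evaluation characterization of the pointwise order on step functions. Your route buys independence from the external citation and makes explicit the property (join-primality of $\tint{s}^k$, which is what forces a single dominating $t$ out of the join $\tint{T}^k$) that implicitly powers the cited result; the paper's route buys brevity at the cost of leaving the embedding and the simulation unexamined. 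Your handling of the base case, of the fact that only the arrow rule can derive a subsumption between single arrow types, of the non-circular ordering of claims within the induction step, and of pointwise joins in $\Dd^k$ (via item~0 of Lemma~\ref{lem:galloi_connections}) are all sound.

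There is, however, one step you use twice without justification, and in this particular paper it is a real proof obligation: that $\tint{S_1}^k$ is an \emph{element of} $\Dd^k_B$, so that it is a legitimate evaluation point for the pointwise order on $\Dd^k_{B\to C}$ (you evaluate at this threshold both in the forward direction of the single-type arrow case and in the primality argument). Since $\Dd^k_{B\to C}$ is not the full monotone function space but only the set of monotone functions admitting a witness under $\Ll^k$, an arbitrary step function $d\step e$ need \emph{not} belong to it: if $e\down\neq\bot^{k-1}_C$, then $((d\step e)(g_2))\down$ can differ on arguments $g_2,g_2'$ with $g_2\down=g_2'\down$, so no witness exists (take $k=1$, $d=\set{q_1}$ with $\rho(q_1)=1$, $g_2=\set{q_1}$, $g_2'=\es$). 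What saves your proof is stratification: for $s_1\in\types^l_C$ one has $\tint{s_1}^{l-1}=\bot^{l-1}_C$ and hence $(\tint{s_1}^{l})\down=\bot^{l-1}_C$ by Lemma~\ref{lem:types_and_level_prop}, so the constant function with value $\bot^{l-1}_C$ (in $\Dd^{l-1}$ by Lemma~\ref{lem:combinatorial_completeness_of_Kk}) witnesses $\tint{S_1}^{l}\step\tint{s_1}^{l}\in\Dd^{l}_{B\to C}$; interpretations at levels above the rank are then handled by $\tint{t}^{k+1}=(\tint{t}^{k})\upinf$ (Lemma~\ref{lem:types_and_level_prop} again), and closure under the joins forming $\tint{S}^k$ is item~0 of Lemma~\ref{lem:galloi_connections}. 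Adding this membership lemma (for all $S\incl\Types^k_A$, $\tint{S}^k\in\Dd^k_A$) to your induction closes the gap and makes the proof complete.
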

\begin{proof}
  This is a consequence of the fact that for each $k$, $\Dd_k$
  can be embedded in the monotone model generated by $\Pp(\Qlk)$ and
  that according to~\cite{salvati12:logical_relation}, the ordering on
  intersection types simulate the one in monotone models.
\end{proof}

An immediate consequence is that the application that we defined at
the level of type simulates the application in the model.

\begin{lem}
  \label{lem:type_application_is_sem_application}
  For $S\incl\Types^k_{A\to B}$ and $T\incl\Types^k_A$,  we have:
  $$\tint{S(T)}^k = \tint{S}^{k}(\tint{T}^k)\ .$$
\end{lem}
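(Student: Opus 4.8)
The plan is to prove $\tint{S(T)}^k = \tint{S}^k(\tint{T}^k)$ by unfolding the definition of type application and relating it to step functions in the model. Recall $S(T) = \set{t : (U\to t)\in S \text{ and } U\tleq T}$, and that $\tint{S}^k = \Lor\set{\tint{U\to t}^k : (U\to t)\in S}$ with each $\tint{U\to t}^k = \tint{U}^k \step^k \tint{t}^k$. First I would observe that since the semantics distributes joins over application in a monotone model, $\tint{S}^k(\tint{T}^k) = \Lor\set{(\tint{U}^k\step^k\tint{t}^k)(\tint{T}^k) : (U\to t)\in S}$. By the definition of the step function, each summand equals $\tint{t}^k$ when $\tint{U}^k \leq \tint{T}^k$ and equals $\bot$ otherwise.

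The next step is to identify exactly which summands survive. By Lemma~\ref{lem:ordering_correct_complete}, the condition $\tint{U}^k \leq \tint{T}^k$ is equivalent to the syntactic subsumption $U \tleq_A T$. Hence the nonzero contributions come precisely from those $(U\to t)\in S$ with $U\tleq T$, and these are exactly the $t$ that belong to $S(T)$ by its definition. Therefore
\begin{equation*}
  \tint{S}^k(\tint{T}^k) = \Lor\set{\tint{t}^k : (U\to t)\in S \text{ and } U\tleq T} = \Lor\set{\tint{t}^k : t\in S(T)} = \tint{S(T)}^k,
\end{equation*}
which is the desired identity.

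The main obstacle, and the point requiring care, is justifying that application in the model commutes with the join, i.e. that $\big(\Lor_i f_i\big)(\tint{T}^k) = \Lor_i\big(f_i(\tint{T}^k)\big)$ where the $f_i = \tint{U_i\to t_i}^k$ are the interpretations of the individual arrow types. This holds because each $\Dd^k_{A\to B}$ is a lattice whose order is coordinatewise (Corollary~\ref{coro:Nn_is_lattice}), so the join of monotone functions is computed pointwise and evaluation at a fixed argument is a lattice homomorphism on the relevant finite join. Since $S$ is finite, only a finite join is involved and no continuity subtleties arise. I would also need to confirm that when no pair $(U\to t)\in S$ satisfies $U\tleq T$, both sides reduce to $\bot^k_B$, which matches the convention that an empty join is the least element and that $S(T)=\es$ yields $\tint{\es}^k = \bot^k_B$. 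With the join-commutation established and Lemma~\ref{lem:ordering_correct_complete} supplying the order-theoretic characterization of $\tleq$, the remaining manipulations are purely a matter of matching index sets.
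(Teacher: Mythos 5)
Your proof is correct and follows essentially the same route as the paper's: unfold $\tint{S}^k$ as a join of step functions $\tint{U}^k\step^k\tint{t}^k$, evaluate at $\tint{T}^k$, and invoke Lemma~\ref{lem:ordering_correct_complete} to replace the semantic condition $\tint{U}^k\leq\tint{T}^k$ by the syntactic one $U\tleq T$. Your extra care about application commuting with finite joins (justified since $\Ll^k$, hence $\Dd^k_{A\to B}$, is closed under pointwise joins by item~0 of Lemma~\ref{lem:galloi_connections}) and about the empty-join case is sound and merely makes explicit what the paper leaves implicit.
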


\begin{proof}
  By definition
  $S(T) = \set{t \mid (U\to t)\in S\text{ and } U\tleq T}$,
  but
  $\tint{S}^k = \bigvee \set{\tint{U\to t}^k\mid U\to t\in S} =
  \bigvee \set{\tint{U}^k \step^k \tint{t}^k\mid U\to t\in S}$ and
  thus
  $\tint{S}^{k}(\tint{T}^k) = \bigvee \set{\tint{t}^k\mid U\to t\in S
    \text{ and } \tint{U}^k\leq \tint{T}^k}$. From
  Lemma~\ref{lem:ordering_correct_complete}, $U\tleq T$
  iff $\tint{U}^k \leq \tint{T}^k$
  and we then obtain the expected idendity:
  $\tint{S}^{k}(\tint{T}^k) = \set{t \mid (U\to t)\in S\text{ and }
    U\tleq T} = S(T)$.
\end{proof}

The next theorem is the
main technical result of the paper. It says
that the type system can derive all lower-approximations of the meanings of
terms in the model. For an environment $\G$, we write
$\sem{\G}^k$ for the valuation such that $\sem{\G}^k(x) =
\tint{\G(x)}^k$.

\begin{thm}
  \label{thm:types_correctness_completeness}
  For $k=0,\dots,m$ and $S\subseteq \Types^k$:
  \begin{equation*}
    \text{$\sem{M}^k_{\sem{\G}^k}\geq \tint{S}^k$\quad
  iff\quad $\G\vdash M\geq S$ is derivable.}
  \end{equation*}

\end{thm}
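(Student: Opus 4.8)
The plan is to prove both directions of the equivalence by a single induction on the structure of the term $M$, proving the statement simultaneously for all environments $\G$ and all $k$. Because the definition of types is stratified by rank, and the semantics $\tint{\cdot}^k$ interacts with the Galois connections through Lemma~\ref{lem:types_and_level_prop}, the natural induction has to carry enough information across strata; the fixpoint cases in particular will force me to relate level $k$ to level $k-1$. Let me sketch the easy structural cases first and then isolate the genuine difficulties.

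\textbf{Structural cases.} For a variable $x$, both directions are immediate: $\sem{x}^k_{\sem{\G}^k} = \tint{\G(x)}^k$, and the axiom together with the subsumption rule (justified by Lemma~\ref{lem:ordering_correct_complete}) gives the equivalence between $\tint{\G(x)}^k \geq \tint{S}^k$ and $\G\vdash x\geq S$. For a constant $a$ of type $o\to o\to o$, I would unfold $\sem{a}^k(S_0,S_1)=\set{q\in\Qlk : (S_0,S_1)\in\delta(q,a)}$ and check that the single-transition axiom of Figure~\ref{fig:type_system} derives exactly the step functions $\tint{S_1\to S_2\to q}^k$, using monotonicity of $\delta$ to match the subsumption order; the base-type constant case is analogous. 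For application $MN$, Lemma~\ref{lem:type_application_is_sem_application} identifies $\tint{S(T)}^k$ with $\tint{S}^k(\tint{T}^k)$, so the application rule corresponds precisely to semantic application and both directions follow from the induction hypotheses on $M$ and $N$ (for completeness one decomposes the semantic value of $MN$ into finitely many step functions via Lemma~\ref{lem:sem_representations_with_types}). For abstraction $\lambda x.M$, I would use that $\tint{S\to T}^k = \tint{S}^k\step^k\tint{T}^k$ and that step functions are the join-generators of the monotone function lattice: $\sem{\lambda x.M}^k \geq \tint{S\to T}^k$ holds iff $\sem{M}^k_{\sem{\G,x\geq S}^k}\geq \tint{T}^k$, which is exactly the premise of the abstraction rule after applying the induction hypothesis. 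The well-formedness side conditions on ranks match the stratification constraints $S\incl\Types^k$, $T\incl\types^k$.

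\textbf{The fixpoint cases are the main obstacle.} Here I must show that the two fixpoint rules of Figure~\ref{fig:type_system} derive exactly the lower-approximations of $\fix^k_A$ as computed in Definition~\ref{def:fixpoint}. The crucial tool is Corollary~\ref{cor:fixpoints}, which characterizes $\fix^{2k}_A(f)$ as the greatest post-fixpoint $d$ with $f(d)\geq d$ and $d\down=\fix^{2k-1}_A(f)$, and dually for odd strata. For the even (greatest-fixpoint) rule, I would argue that a set $S\cup T$ with $S\incl\types^{2k}$ and $T\incl\Types^{2k-1}$ satisfies $\sem{Y x.M}^k \geq \tint{S\cup T}^k$ precisely when $\tint{T}^k$ lower-approximates the lower-stratum fixpoint (captured by the premise $\G\vdash Yx.M\geq T$ via the outer induction on $k$, using $\tint{T}^k=(\tint{T}^{k-1})\upinf$ from Lemma~\ref{lem:types_and_level_prop}) and $\tint{S}^k$ contributes a post-fixpoint increment, which is exactly what the premise $\G\vdash\lambda x.M\geq (S\cup T)\to S$ certifies after applying the induction hypothesis on $M$. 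The soundness direction (rule derivable $\Rightarrow$ inequality holds) uses that $f$ applied to a post-fixpoint stays below the greatest fixpoint; the completeness direction (inequality holds $\Rightarrow$ rule derivable) uses Lemma~\ref{lem:sem_representations_with_types} to name the actual fixpoint value by a type and the decomposition $d=((d\down)\upinf)\vee\overline{d}$ of Lemma~\ref{lem:Nn_k_decomposition} to split it into the lower-stratum part $T$ and the fresh part $S$. The odd case is dual, using the least-fixpoint rule and the iterate-from-the-bottom characterization. The delicate point throughout is keeping the induction hypotheses on $k-1$ and on the subterm $M$ correctly interleaved, and verifying that the finitary fixpoint iteration in Definition~\ref{def:fixpoint} is faithfully mirrored by finitely many applications of the rule premises; I expect that a careful simultaneous induction, with the Galois-connection identities of Corollary~\ref{coro:galois_connections} and Lemma~\ref{lem:types_and_level_prop} invoked at each stratum change, resolves this.
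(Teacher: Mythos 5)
Your identification of the key semantic ingredients is accurate---Corollary~\ref{cor:fixpoints}, the identity $\tint{T}^k=(\tint{T}^{k-1})\upinf$ from Lemma~\ref{lem:types_and_level_prop}, and, for completeness, naming the exact fixpoint value via Lemma~\ref{lem:sem_representations_with_types} and splitting it with Lemma~\ref{lem:Nn_k_decomposition}---but your global induction scheme has a genuine gap in the soundness half (derivability implies the semantic inequality). A single induction on the structure of $M$, even strengthened to the component-wise order on pairs $(M,k)$ that your aside about an ``outer induction on $k$'' gestures at, cannot process a derivation whose last rule is $Y\,\mathit{odd}$, the intersection rule, or subsumption: each of these has a premise about the \emph{same} term, and in general at the \emph{same} stratum. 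In particular, the premise $\G\vdash Y x.M\geq T$ of $Y\,\mathit{odd}$ carries no rank restriction---repeated applications of that rule compute the least fixpoint by iteration entirely within stratum $k$---so the induction hypothesis you would need there is exactly the statement being proved for the pair $(Y x.M,k)$; the appeal to a lower stratum is only available for $Y\,\mathit{even}$, where $T\incl\Types^{2k-1}$ forces $k-1$. This is precisely why the paper does not prove the equivalence by one induction: it splits the theorem into a soundness lemma proved by induction on the structure of the \emph{derivation} (with $k$ universally quantified, so the smaller derivation of $\G\vdash Y x.M\geq T$ can be used at level $k-1$ and lifted through the Galois connection via Corollary~\ref{cor:L-on-terms}), and a completeness lemma (Lemma~\ref{thm:completeness}) proved by induction on pairs $(M,k)$ ordered component-wise---the latter order being what licenses deriving $\G\vdash Yx.M\geq T$ for $T\incl\Types^{k-1}$ from the same term at a smaller stratum, a step your announced ``single induction on $M$, for all $k$ simultaneously'' does not justify.

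A secondary inaccuracy: in the even case you assert that $\sem{Y x.M}^k \geq \tint{S\cup T}^k$ holds ``precisely when'' the two premises of $Y\,\mathit{even}$ hold for that very pair $(S,T)$. The biconditional is false: an arbitrary lower bound of the fixpoint need not decompose so that $\tint{S}^k$ is a post-fixpoint increment, and the corresponding judgment is obtained only by subsumption from a stronger one. Your completeness sketch in fact already contains the correct remedy, which is the paper's argument: instantiate $S\cup T$ to types naming the exact value $\sem{Y x.M}^k$ (Lemmas~\ref{lem:sem_representations_with_types} and~\ref{lem:Nn_k_decomposition}), derive $\G\vdash\lambda x.M\geq(S\cup T)\to(S\cup T)$, weaken to $(S\cup T)\to S$, obtain $\G\vdash Yx.M\geq T$ from the induction hypothesis at $(Yx.M,k-1)$, apply the rule, and close under subsumption using Lemma~\ref{lem:ordering_correct_complete} (dually, iterate $Y\,\mathit{odd}$ from $T$ for odd $k$). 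So the local mechanics of your fixpoint cases essentially coincide with the paper's; what would fail is the announced induction scheme, which must be repaired by proving the soundness direction separately, by induction on derivations.
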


The above theorem implies Theorem~\ref{thm:main} stating soundness and
completeness of the type system. Indeed, let us take a closed term $M$
of type $o$, and a state $q$ of our fixed automaton
$\Aa$. Theorem~\ref{thm:model-correct} tells us that $\sem{M}=\Aa(M)$;
where $\Aa(M)$ is the set of states from which $\Aa$ accepts $BT(M)$. So
$\vdash M\geq q$ is derivable iff $\sem{M}\supseteq\set{q}$ iff $q\in
\Aa(M)$.

The theorem is proved by the following two lemmas.

\begin{lem}
  If $\Gamma\vdash M\geq S$ is derivable, then for every $k\leq m$:
  $\sem{M}^k_{\sem{\G}^k}\geq \tint{S}^k$.
\end{lem}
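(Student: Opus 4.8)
The plan is to proceed by induction on the derivation of $\G \vdash M \geq S$, showing that each typing rule is sound with respect to the semantic ordering $\sem{M}^k_{\sem{\G}^k} \geq \tint{S}^k$ at every level $k \leq m$. Throughout I will use Lemma~\ref{lem:types_and_level_prop} to move interpretations between strata, Lemma~\ref{lem:type_application_is_sem_application} to relate type-application $S(T)$ to semantic application, Lemma~\ref{lem:ordering_correct_complete} to translate subsumption $S \tleq T$ into the model ordering, and Corollary~\ref{cor:fixpoints} together with Lemma~\ref{lemma:fix-exists} to handle the two fixpoint rules. First I would dispatch the easy structural rules. The axiom $\G, x\geq S \vdash x \geq S$ holds because $\sem{x}^k_{\sem{\G}^k} = \tint{S}^k$ by definition of $\sem{\G}^k$. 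The intersection rule follows since $\tint{S \cup T}^k = \tint{S}^k \vee \tint{T}^k$, so two lower bounds combine to a lower bound of their join. The subsumption rule is immediate from Lemma~\ref{lem:ordering_correct_complete}. The two constant rules are checked directly against the definitions $\sem{c}^k$ and $\sem{a}^k$ fixed before Theorem~\ref{thm:model-correct}, using monotonicity of $\delta$ and of the step-function interpretation $\tint{S_1 \to \dots \to S_i \to q}^k = \tint{S_1}^k \step \dots \step \tint{q}^k$.

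For the application rule I would use the induction hypotheses $\sem{M}^k \geq \tint{S}^k$ and $\sem{N}^k \geq \tint{T}^k$, then monotonicity of application in the model together with Lemma~\ref{lem:type_application_is_sem_application} to conclude $\sem{MN}^k \geq \tint{S}^k(\tint{T}^k) = \tint{S(T)}^k$. For the abstraction rule, I would unfold $\sem{\lambda x. M}^k_{\sem{\G}^k}$ as the function $p \mapsto \sem{M}^k_{\sem{\G}^k[p/x]}$ and verify that it dominates the step function $\tint{S}^k \step \tint{T}^k$; the induction hypothesis applied with the extended environment $\G, x \geq S$ gives exactly that whenever $p \geq \tint{S}^k$ the value dominates $\tint{T}^k$, while the stratification constraints $S \subseteq \Types^k$, $T \subseteq \types^k$ ensure the types are well-formed at level $k$.

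The main obstacle will be the two fixpoint rules, where the stratified definition of $\fix^k$ in Definition~\ref{def:fixpoint} interacts delicately with the step-function interpretation of types. For the $Y\,\mathit{odd}$ rule, which iterates from the bottom of the current refinement lattice, soundness should follow by induction on the number of applications of the rule, mirroring the least-fixpoint iteration $f^n(\bot)$ that underlies $\fix$ at odd strata; each use of the rule corresponds to one more iteration, and Lemma~\ref{lem:type_application_is_sem_application} converts the type-level $S(T)$ into the semantic step. The genuinely delicate case is $Y\,\mathit{even}$: here I must show that if the semantic value satisfies the closure condition $f(d) \geq d$ with $d\down$ already equal to the odd-stratum fixpoint, then $d$ lies below $\fix^{2k}(f) = \sem{Y x.M}^{2k}$. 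This is precisely the content of the first identity in Corollary~\ref{cor:fixpoints}, which characterizes $\fix^{2k}(f)$ as the join of all such post-fixpoints $d$ with the correct lower-stratum projection. The hypotheses $\G \vdash \lambda x.M \geq (S \cup T) \to S$ and $\G \vdash Y x.M \geq T$ supply, via the induction hypothesis and Lemma~\ref{lem:types_and_level_prop}, exactly the post-fixpoint inequality and the projection condition $d\down = \fix^{2k-1}$, so that $\tint{S \cup T}^{2k}$ is a candidate in the set over which Corollary~\ref{cor:fixpoints} takes the join; care is needed to verify that the decomposition $\tint{S\cup T}^{2k} = \tint{T}^{2k} \vee \tint{S}^{2k}$ aligns $T \subseteq \Types^{2k-1}$ with the lower stratum and $S \subseteq \types^{2k}$ with the refinement direction, which is where Lemmas~\ref{lem:Nn_k_decomposition} and~\ref{lem:types_and_level_prop} do the real work.
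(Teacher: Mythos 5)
Your plan is correct and follows essentially the same route as the paper's own proof: induction on the typing derivation, with Lemmas~\ref{lem:types_and_level_prop}, \ref{lem:ordering_correct_complete} and~\ref{lem:type_application_is_sem_application} handling the structural, application and abstraction rules, and Corollary~\ref{cor:fixpoints} doing the work in the $Y\,\mathit{even}$ case. The only deviations are cosmetic: the paper dispatches $Y\,\mathit{odd}$ in one step via $\sem{Y x.M}^k = \sem{(\lambda x.M)(Y x.M)}^k \geq \tint{S}^k(\tint{T}^k) = \tint{S(T)}^k$ rather than by an explicit iteration count, and your claim that the premises supply the projection condition $d\down = \fix^{2k-1}$ \emph{exactly} (they yield only $\tint{T}^{2k-1} \leq \fix^{2k-1}$, so the candidate post-fixpoint must be adjusted, e.g.\ joined with $(\fix^{2k-1})\upinf$) elides the same routine detail that the paper's proof elides with ``this gives us immediately the desired''.
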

\begin{proof}
  This proof is done by a simple induction on the structure of the
  derivation of $\G\vdash M\geq S$.  For most of the rules, the
  conclusion follows immediately from the induction hypothesis (using
  the
  Lemmas~\ref{lem:types_and_level_prop},~\ref{lem:ordering_correct_complete}
  and~\ref{lem:type_application_is_sem_application}). We shall only
  treat here the case of the rules $Y\ \mathit{odd}$ and $Y\
  \mathit{even}$.

  In the case of $Y\ \mathit{odd}$, when we derive $\G\vdash Y x.M\geq
  S(T)$ from $\G\vdash \lambda x.M\geq S$ and $\G\vdash Y x.M\geq T$,
  the induction hypothesis gives that for
  every $k$, $\sem{\lambda x. M}^k_{\sem{\G}^k}\geq \tint{S}^k$
  and $\sem{Y x. M}^k_{\sem{\G}^k}\geq \tint{T}^k$.
  Therefore $\sem{Yx.M}^k_{\sem{\G}^k} = \sem{(\lambda x.M)(Y
    x.M)}^k_{\sem{\G}^k} \geq \tint{S}^k(\tint{T}^k) = \tint{S(T)}^k$, using
  Lemma~\ref{lem:type_application_is_sem_application}.

  In the case of $Y\ \mathit{even}$ we consider the case $k=2l$.
  Let $\nu_{k-1}$ stand for $\sem{\G}^{k-1}$ and $\nu_{k}$ for
  $\sem{\G}^{k}$.

  By induction hypothesis we have $\sem{Y x.M}^{k-1}_{\nu_{k-1}}\geq
  \tint{T}^{k-1}$. Since Lemma~\ref{cor:L-on-terms} implies $(\sem{Y
    x.M}^{k-1}_{\nu_{k-1}},\sem{Y x.M}^{k}_{\nu_{k}})\in \Ll^k$, we have
  $\sem{Y x.M}^{k}_{\nu_{k}}\geq (\tint{T}^{k-1})\upinf$ by
  Lemma~\ref{lem:galloi_connections}. By Lemma~\ref{lem:types_and_level_prop} we
  know $(\tint{T}^{k-1})\upinf=\tint{T}^k$. In consequence we have
  \begin{equation*}
      \sem{\lambda x.M}^k_{\nu_k}(\tint{T}^k)\geq \tint{T}^k.
  \end{equation*}
  Also by induction hypothesis we have $\sem{\lambda x. M}^k_{\nu_k}\geq
  \tint{(S\cup T)\to S}^k$. This means that $\sem{\lambda
    x.M}^k_{\nu_k}(\tint{S\cup T}^k)\geq \tint{S}^k$. Put together
  with what we have concluded about $\tint{T}^k$ we get
  \begin{equation*}
      \sem{\lambda x.M}^k_{\nu_k}(\tint{S\cup T}^k)\geq \tint{S\cup T}^k.
  \end{equation*}
  Now we use Lemma~\ref{cor:fixpoints} which tells us that
  \begin{equation*}
    \sem{Y x.M}^k_{\nu_k} = \bigvee
  \set{d \mid \sem{\lambda x.M}^k(d) \geq d\text{ and }  d\down = \sem{Y
      x.M}^{k-1}_{\nu_{k-1}}}\ .
  \end{equation*}
  This gives us immediately the desired $\sem{Y
    x.M}^k_{\nu_k}\geq\tint{S\cup T}^k$. 
\end{proof}

\begin{lem}
  \label{thm:completeness}
  Given a type $S\subseteq \Types^k$ , if $\sem{M}^k_{\sem{\G}^k}\geq \tint{S}^k$
  then $\G\vdash M\geq S$.
\end{lem}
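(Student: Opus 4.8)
The plan is to prove the statement by a primary induction on the stratum $k$ and a secondary induction on the structure of $M$, using the companion lemmas to translate between types and elements of $\Dd^k$. In every case I would fix, via Lemma~\ref{lem:sem_representations_with_types}, type sets that denote the relevant model values exactly, use Lemma~\ref{lem:ordering_correct_complete} to turn an inequality $\tint{T}^k\leq\sem{M}^k_{\sem{\G}^k}$ into a subsumption (so that the subsumption rule can always be applied last), and use Lemma~\ref{lem:type_application_is_sem_application} to match type application with semantic application. The leaf cases are then immediate: for a variable the axiom gives $\G\vdash x\geq\G(x)$, and $\sem{x}^k_{\sem{\G}^k}=\tint{\G(x)}^k\geq\tint{S}^k$ yields $S\tleq\G(x)$, whence subsumption concludes; the two constant cases follow directly from the definitions of $\sem{c}^k$ and $\sem{a}^k$ together with the matching constant axioms and the monotonicity of $\delta$.

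For an application $M=NP$ I would write $\sem{N}^k_{\sem{\G}^k}=\tint{S_N}^k$ and $\sem{P}^k_{\sem{\G}^k}=\tint{S_P}^k$, obtain $\G\vdash N\geq S_N$ and $\G\vdash P\geq S_P$ from the secondary induction hypothesis, apply the application rule to get $\G\vdash NP\geq S_N(S_P)$, and conclude with $\tint{S_N(S_P)}^k=\sem{N}^k_{\sem{\G}^k}(\sem{P}^k_{\sem{\G}^k})=\sem{NP}^k_{\sem{\G}^k}\geq\tint{S}^k$ plus subsumption. For an abstraction $M=\lambda x.N$ it suffices, by the intersection rule, to derive each element $U\to s$ of $S$ separately; since $\tint{U\to s}^k=\tint{U}^k\step\tint{s}^k\leq\sem{\lambda x.N}^k_{\sem{\G}^k}$ is equivalent to $\sem{N}^k_{\sem{\G,x\geq U}^k}\geq\tint{s}^k$, the induction hypothesis on the smaller term $N$ gives $\G,x\geq U\vdash N\geq\{s\}$, the abstraction rule yields $\G\vdash\lambda x.N\geq U\to\{s\}$, and the intersection rule combines these into $\G\vdash\lambda x.N\geq S$.

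The substantive case is $M=Y x.N$, for which I set $f=\sem{\lambda x.N}^k_{\sem{\G}^k}$, so $\sem{Y x.N}^k_{\sem{\G}^k}=\fix^k(f)$, and split on the parity of $k$. When $k$ is even, let $d=\fix^k(f)$; by Lemma~\ref{lemma:fix-exists} it is a fixpoint of $f$, and by the greatest-fixpoint characterization (Corollary~\ref{cor:fixpoints}) its projection satisfies $d\down=\fix^{k-1}(f\down)$. The decomposition $d=((d\down)\upinf)\vee\overline{d}$ of Lemma~\ref{lem:Nn_k_decomposition} lets me choose, by Lemma~\ref{lem:sem_representations_with_types}, a set $T\subseteq\Types^{k-1}$ with $\tint{T}^{k-1}=d\down$ and a set $S'\subseteq\types^k$ with $\tint{S'}^k=\overline{d}$, so that $\tint{S'\cup T}^k=\overline{d}\vee(d\down)\upinf=d$ by Lemma~\ref{lem:types_and_level_prop}. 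Since $\sem{Y x.N}^{k-1}_{\sem{\G}^{k-1}}=\fix^{k-1}(f\down)$ (Corollary~\ref{cor:L-on-terms}), the primary induction hypothesis at level $k-1$ gives $\G\vdash Y x.N\geq T$; and because $f(d)=d\geq\overline{d}$ we have $\sem{\lambda x.N}^k_{\sem{\G}^k}\geq\tint{(S'\cup T)\to S'}^k$, so the secondary hypothesis on $\lambda x.N$ gives $\G\vdash\lambda x.N\geq(S'\cup T)\to S'$. The $Y\,\mathit{even}$ rule then derives $\G\vdash Y x.N\geq S'\cup T$, and subsumption concludes; the base $k=0$ is the degenerate case $T=\es$.

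When $k$ is odd, $\fix^k(f)=\bigvee_n f^n(d_0)$ with $d_0=(\fix^{k-1}(f\down))\upinf$, and finiteness of $\Dd^k$ makes this join stabilize at some finite stage $n_0$. Here I would run a tertiary induction on the number of unfoldings: for the base, a set $T_0\subseteq\Types^{k-1}$ with $\tint{T_0}^{k-1}=\fix^{k-1}(f\down)$ is typed by $\G\vdash Y x.N\geq T_0$ through the primary hypothesis at level $k-1$, with $\tint{T_0}^k=d_0$ by Lemma~\ref{lem:types_and_level_prop}; the passage from $f^n(d_0)$ to $f^{n+1}(d_0)=f(f^n(d_0))$ is realised by the $Y\,\mathit{odd}$ rule, combining a representation of $\sem{\lambda x.N}^k_{\sem{\G}^k}$ (secondary hypothesis) with the type already obtained for $f^n(d_0)$ through Lemma~\ref{lem:type_application_is_sem_application}. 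After $n_0$ steps this produces a type denoting $\fix^k(f)\geq\tint{S}^k$, and subsumption finishes the case. I expect these fixpoint cases to be the main obstacle: the delicate points are threading the three interacting inductions consistently, using Corollary~\ref{cor:fixpoints} and Lemma~\ref{lem:Nn_k_decomposition} to split the even fixpoint into its inherited part $T$ and its fresh part $S'$ exactly as the $Y\,\mathit{even}$ rule prescribes, and exploiting finiteness of $\Dd^k$ to bound the least-fixpoint iteration mirrored by the $Y\,\mathit{odd}$ rule.
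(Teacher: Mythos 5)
Your proposal is correct and follows essentially the same route as the paper's proof: induction on $(M,k)$ with the fixpoint cases as the only substantive ones, the abstraction observation reducing $\lambda x.N$ to the body, Lemma~\ref{lem:sem_representations_with_types} to split $\sem{Yx.N}^k$ into a $\types^k$ part and a $\Types^{k-1}$ part fed to the $Y\,\mathit{even}$ rule together with the level-$(k-1)$ induction hypothesis, and finite iteration of the $Y\,\mathit{odd}$ rule from a representation of the level-$(k-1)$ fixpoint, closing with subsumption via Lemma~\ref{lem:ordering_correct_complete}. You merely make explicit what the paper leaves implicit (the $\overline{d}$-decomposition via Lemma~\ref{lem:Nn_k_decomposition}, the tertiary induction on unfoldings, and the ``straightforward'' cases), so there is nothing to flag.
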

\begin{proof}
  This theorem is proved by induction on the pairs $(M,k)$ ordered
  component-wise.  Suppose that the statement is true for $M$, we are
  going to show that it is true for $Y x.M$, the other cases are
  straightforward.

  The first observation is that, if $T\to S$ is such that $\sem{\lambda x
    .M}^k_{\sem{\G}^k}\geq \tint{T\to S}^k$, then $\G\vdash \lambda x.M\geq
  T\to S$ is derivable. Indeed, since, letting $\nu = \sem{\G,x\geq
      T}^k$, if $\sem{M}^k_\nu\geq \tint{S}^k$ holds then, $\G,x\geq
    T\vdash M\geq S$ is derivable by induction hypothesis. So
    $\G\vdash \lambda x. M\geq T\to S$ is derivable.

    There are now two cases depending on the parity of $k$.  First let
    us assume that $k$ is even. Suppose $\sem{Y x.M}^k_{\sem{\G}^k} =
    \tint{S\cup T}^k$ where $S\incl\types^k$ and $T\incl\Types^{k-1}$.
    Lemma~\ref{lem:sem_representations_with_types} guaranties the
    existence of such $S$ and $T$ as every element of $\Dd^k$ is
    expressible by a set of types.  We have $\sem{\lambda
      x.M}^k_{\sem{\G}^k}\geq\tint{S\cup T}^k\step^k\tint{S\cup
      T}^k$. By the above we get that $\G\vdash \lambda x. M\geq (S\cup
    T)\to(S\cup T)$ is derivable and thus $\G\vdash \lambda x. M\geq (S\cup
    T)\to S$ is derivable as well. We also have $\sem{Y x.M}^{k-1}\geq
    \tint{T}^{k-1}$ which gives that $\G\vdash Y x.M\geq T$ is
    derivable. This allows us to derive $\G\vdash Y x.M\geq S\cup
    T$. Using the subsumption rule and the fact that the subsumption
    reflects the order in the model
    (Lemma~\ref{lem:ordering_correct_complete}),
    every other judgment $\G\vdash
    Y x.M\geq U$ where $\tint{U}^k\leq \sem{Y x.M}^k_{\sem{\G}^k}$ is
    derivable.

  Now consider the case where $k$ is odd. Suppose  $\sem{Y x.M}^k_{\sem{\G}^k} =
  \tint{S\cup T}^k$ with $T\incl \Types^{k-1}$ and $S\incl\types^k$.
  By induction hypothesis on $k$, we have that
  $\G\vdash M\geq T$ is derivable. Take $d=\sem{\lambda
    x. M}_{\sem{\G}_k}^k$. Lemma~\ref{lem:sem_representations_with_types}
  guarantees us a set of types $U\incl \Types^k$ such that
  $\tint{U}^k=d$. By the observation we have made above,
  there is a derivation of $\G\vdash \lambda x.M\geq
  U$. Then iteratively using the rule $Y\ odd$ we compute the
  least fixpoint by letting $U^0(T) = T$ and $U^{n+1}(T) = U(U^n(T))$.
  As in the case above, we can now conclude that for every
  $V\subseteq \Types^k$, if
  $\tint{V}^k\leq \sem{Y x.M}^k_{\sem{\G}^k}$, then $\G\vdash
    Y x.M\geq U$ is derivable.
\end{proof}

As we have seen, the applicative structure $\Dd^k_A$ is a lattice,
therefore each construction can be dualized: in Abramsky's
methodology, this consists in considering $\wedge$-prime elements of
the models, meets and co-step functions instead of $\vee$-primes,
joins and step functions. It is worth noticing that dualizing at the
level of the model amounts to dualizing the automaton.
So, in particular, we can
define a system so that $BT(M)$ is not accepted by $\Aa$ from state
$q$ iff $\G\vdash M \ngeq q$ is derivable.  While the first typing system
establishes positive facts about the semantics, the second one
 refutes them.  For this, we use the same syntax
to denote types, but we give types a different semantics that is dual
to the first semantics we have used. The notation, $\geq$ and $\ngeq$,
we have used for 
the two type systems is motivated by this duality.
\begin{align*}
  \dint{q}^{k} =&\Qlk - \set{q}\ ,\\
  \dint{S \to f}^{k}=& \left(\bigwedge\set{\dint{g}^{k}: g\in
      S}\right)\costep^k \dint{f}^{k}\ .\\
\end{align*}




The dual type system is presented in
Figure~\ref{fig:dual_type_system}. The notation is as before but we
use $\ngeq$ instead of $\geq$.  Similarly to the definition of
$\tint{\cdot}^k$, we write $\dint{S}^k$ for $\bigwedge\set{\dint{s}^k:
  s\in S}$ and we have that $\dint{T\to S}^k =
\dint{T}^k\costep^k\dint{S}^k$. We also need to redefine $S(T)$ to be
$\set{s : U\to s\in S\land U\tgeq T}$. 
By duality, from
Theorem~\ref{thm:types_correctness_completeness} we obtain:
\begin{thm}
  \label{thm:dual_type_system_correct_and_complete}
  For $S\incl \Types^k$:\quad $\G\vdash M\ngeq S$ is
  derivable iff
  $\sem{M}^k_{\sem{\G}^k} \leq \dint{S}^k$.
\end{thm}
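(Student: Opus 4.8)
The statement is flagged as following \emph{by duality} from Theorem~\ref{thm:types_correctness_completeness}, and that is exactly the route I would take: rerun the proof of that theorem with every lattice-theoretic ingredient replaced by its order-dual. This is legitimate because each $\Dd^k_A$ is a finite lattice and all the preparatory facts are already stated in two-sided form---Lemma~\ref{lem:galloi_connections} treats $(\cdot)\upsup$ and $(\cdot)\upinf$ symmetrically, Corollary~\ref{coro:galois_connections} records both Galois connections, Lemma~\ref{lem:Nn_k_decomposition} gives both decompositions, and Corollary~\ref{cor:fixpoints} provides both a join-characterisation of $\fix^{2k}$ and a meet-characterisation of $\fix^{2k+1}$. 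To keep the bookkeeping honest I would make the duality concrete through an explicit order-reversing involution of the model.

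For each $k$ and each type $A$ define $c^k_A\colon\Dd^k_A\to\Dd^k_A$ by $c^k_o(P)=\Qlk\setminus P$ and $c^k_{A\to B}(f)=c^k_B\circ f\circ c^k_A$. A routine induction on $A$ shows that $c^k_A$ is an order-reversing involution that stays inside $\Dd^k_A$, that it converts application into application in the sense $c^k_{A\to B}(f)(c^k_A(x))=c^k_B(f(x))$, and that---using the base identity $(\Qlk\setminus d)\cap\Ql{k-1}=\Ql{k-1}\setminus(d\down)$ together with Lemma~\ref{lem:galloi_connections} and Corollary~\ref{cor:up-down}---it commutes with $(\cdot)\down$ and exchanges $(\cdot)\upsup$ with $(\cdot)\upinf$ and $\bot$ with $\top$. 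The single computation $c^k_{A\to B}(d\step e)=c^k_A(d)\costep c^k_B(e)$, immediate from the definitions of step and co-step, then yields by induction on types that $\dint{S}^k=c^k_A(\tint{S}^k)$ for all $S\incl\Types^k_A$. Since $c^k$ reverses the order, Lemma~\ref{lem:ordering_correct_complete} transports to ``$\dint{S}^k\geq\dint{T}^k$ iff $S\tgeq_A T$'', matching the subsumption rules of Figure~\ref{fig:dual_type_system}; likewise the dual constant axioms and the dual application $S(T)$ are precisely the $c^k$-images of their primal counterparts, while the application, abstraction and variable rules are fixed by the involution.

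The delicate step, and the one I expect to be the main obstacle, is the fixpoint. Being order-reversing, $c^k$ interchanges least and greatest fixpoints, whereas Definition~\ref{def:fixpoint} chooses the greatest fixpoint on even strata and the least on odd strata. Consequently $c^k$ does \emph{not} commute with $\sem{\cdot}^k$ inside a single model; rather, combining $c^k(d\upsup)=(c^{k-1}(d))\upinf$ with $c^k(f^n(e))=(c^k f)^n(c^k e)$ shows that $c^k$ carries the entire zig-zag of Definition~\ref{def:fixpoint} for $\Aa$ onto the same zig-zag for the automaton $\bar{\Aa}$ obtained by raising every rank by one and dualising the transition relation---exactly the content of the earlier remark that dualising the model amounts to dualising the automaton. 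In practice I would not introduce $\bar{\Aa}$ explicitly but instead verify soundness and completeness of the two dual fixpoint rules directly: the dual $Y\,\mathit{even}$ rule is the unfolding/join argument already used for the primal $Y\,\mathit{odd}$ rule, while the dual $Y\,\mathit{odd}$ rule is handled by the meet-characterisation of $\fix^{2k+1}$ in Corollary~\ref{cor:fixpoints}, mirroring word for word the treatment of $Y\,\mathit{even}$ in the proof of Theorem~\ref{thm:types_correctness_completeness} (with $\bigvee,\step,\upsup$ replaced by $\bigwedge,\costep,\upinf$).

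Putting the dualized soundness and completeness lemmas together yields the theorem: $\G\vdash M\ngeq S$ is derivable iff $\sem{M}^k_{\sem{\G}^k}\leq\dint{S}^k$, where the identity $\dint{S}^k=c^k_A(\tint{S}^k)$, the relation $c^k(\sem{\G}^k)=\dint{\G}^k$, and the order-reversal of $c^k$ certify that each dualized inference is correct. Specialising to a closed $M$ of type $o$ and a singleton $S=\set{q}$, and invoking Theorem~\ref{thm:model-correct}, then recovers the companion fact that $\Aa$ fails to accept $BT(M)$ from $q$ exactly when $\G\vdash M\ngeq q$ is derivable.
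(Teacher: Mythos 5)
Your proposal is correct and takes essentially the same route as the paper, which proves this theorem purely by dualization of Theorem~\ref{thm:types_correctness_completeness} (meets, co-step functions and $\wedge$-primes in place of joins, step functions and $\vee$-primes), leaving the details implicit. Your explicit order-reversing involution $c^k_A$ and, in particular, your observation that it does not commute with the fixpoint inside a single stratified model but instead corresponds to dualizing the automaton---so that the dual fixpoint rules are verified by mirroring the primal arguments via Corollary~\ref{cor:fixpoints}---is precisely the content of the paper's remark that dualizing at the level of the model amounts to dualizing the automaton.
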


Together
Theorems~\ref{thm:types_correctness_completeness}
and~\ref{thm:dual_type_system_correct_and_complete} give a
characterization by typing of $\sem{M}=L(\Aa)$, that is the set of states from
which our fixed automaton $\Aa$ accepts $BT(M)$.

\begin{cor}
  \label{coro:value_characterization}
  For a closed term $M$ of type $o$:
  \begin{equation*}
    \text{$\sem{M} = \sem{S}$\quad iff\quad both\quad
  $\vdash
  M\geq S$ and\quad  $\vdash M\ngeq (Q-S)$.}
  \end{equation*}
\end{cor}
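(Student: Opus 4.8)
The plan is to read the corollary off the two completeness theorems by specialising them to the top level $k=m$ and the empty environment, and then to close the loop with antisymmetry of the order on $\Dd^m_o$. First I would record what the relevant objects become at the base type. Since $m$ is the maximal rank, $\Types^m_o=\Ql{m}=Q$ and $\Dd^m_o=\Pp(Q)$, ordered by inclusion, so that meet is intersection, join is union, $\bot^m_o=\es$ and $\top^m_o=Q$. Moreover the positive interpretation collapses to the identity on subsets of $Q$: as $\tint{q}^m=\set{q}$ for every state $q$, we get $\tint{S}^m=\bigcup_{q\in S}\set{q}=S$ for every $S\incl Q$.

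The one computation that needs care is the dual interpretation of the complement. From $\dint{q}^m=\Ql{m}-\set{q}=Q-\set{q}$ and $\dint{T}^m=\bigwedge_{s\in T}\dint{s}^m$, with the meet taken in $\Pp(Q)$, I obtain $\dint{T}^m=\bigcap_{s\in T}(Q-\set{s})=Q-T$. Applying this to $T=Q-S$ gives $\dint{Q-S}^m=Q-(Q-S)=S=\tint{S}^m$.

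With these identities the equivalence is immediate. By Theorem~\ref{thm:types_correctness_completeness}, taken at $k=m$ with $\G$ empty, $\vdash M\geq S$ is derivable iff $\sem{M}^m\geq\tint{S}^m=S$. By Theorem~\ref{thm:dual_type_system_correct_and_complete}, again at $k=m$, $\vdash M\ngeq(Q-S)$ is derivable iff $\sem{M}^m\leq\dint{Q-S}^m=S$. Since $\Dd^m_o$ is a poset, antisymmetry of $\incl$ turns the conjunction of $\sem{M}^m\geq S$ and $\sem{M}^m\leq S$ into the single equality $\sem{M}^m=S=\sem{S}$. Writing $\sem{M}$ for $\sem{M}^m$ yields exactly the stated biconditional.

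I do not expect a real obstacle: the corollary is essentially a repackaging of the positive and dual characterisations, and all its force already resides in Theorems~\ref{thm:types_correctness_completeness} and~\ref{thm:dual_type_system_correct_and_complete}. The only place where a slip is possible is the direction of the order when evaluating $\dint{\cdot}^m$ on $Q-S$; once $\dint{Q-S}^m=S$ is checked, the rest is antisymmetry of inclusion on $\Pp(Q)$.
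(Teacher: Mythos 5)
Your proof is correct and takes the same route the paper intends: the corollary is presented as an immediate consequence of Theorems~\ref{thm:types_correctness_completeness} and~\ref{thm:dual_type_system_correct_and_complete}, which is exactly the specialisation to $k=m$ and the empty environment that you carry out. The computations $\tint{S}^m=S$ and $\dint{Q-S}^m=Q-(Q-S)=S$, followed by antisymmetry of the order on $\Dd^m_o=\Pp(Q)$, are precisely the details the paper leaves implicit.
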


\section{Conclusions}
\label{sec:conclusion}

We have shown how to construct a model for a given weak alternating
tree automaton so that the value of a term in the model determines if
the B\"ohm tree of the term is accepted by the automaton. Our
construction builds on ideas from~\cite{SalWalTLCA} but requires to
bring out the modular structure of the model. This structure is very
rich, as testified by Galois connections.  This structure
allows us to derive type systems for wMSOL properties following the
``domains in logical form'' approach.

The type systems are relatively streamlined: the novelty is the
stratification of types used to restrict applicability of the greatest
fixpoint rule. 
Kobayashi and Ong~\cite{kobayashi09:_type_system_equiv_to_modal} were
the first to approach higher-order verification of MSOL properies
through typing. 
In their type system derivations are graphs, or infinite trees, and their validity is
defined via some regular acceptance condition on infinite paths. 
Their type system handles only
closed terms of type $o$, and fixpoint are handled via the condition
on infinite paths.
Tsukada and Ong have recently proposed a higher-order analogue of this
system~\cite{TsuOng14}.
The typability is defined in a standard way as the existence of a finite
derivation.
The semantics of the fixpoint combinator is defined via some special
games.
The soundness and completeness proofs use a syntactic approach.
In our case, thanks to the restriction to wMSO, we can use standard
fixpoint rules to handle the fixpoint combinator, we also obtain a
model allowing us to prove soundness and completeness using quite
standard techniques.

Typing in our system is decidable, actually the height of the
derivation is bounded by the size of the term. Yet the width can be
large, that is unavoidable given that the typability is $n$-\EXPTIME\
hard for terms of order
$n$~\cite{kobayashi09:_compl_model_check_recur_schem}.
Due to
the correspondence of the typing with semantics, every term has a
``best'' type.

While the paper focuses on typing, our model construction can be also
used in other contexts. It  allows us to immediately deduce
reflection~\cite{Broadbent:2010:RSL:1906484.1906730} and transfer~\cite{SalWalTransfer}
theorems for wMSOL. Our techniques used to construct models and prove their
correctness rely on usual techniques of domain
theory~\cite{amadio98:_domain_lambd_calcul}, offering an alternative,
and arguably simpler,
point of view to techniques based on unrolling.

The idea behind the reflection construction is to transform a given
term so that at every moment of its evaluation every subterm ``knows'' its
meaning in the model. In~\cite{Broadbent:2010:RSL:1906484.1906730} this property is
formulated slightly differently and is proved using a detour to
higher-order pushdown automata. Recently Haddad~\cite{Haddad13} has given a direct
proof for all MSOL properties. The proof is based on some notion of
applicative structure that is less constrained than  a model of the $\lambda Y$-calculus.
One could apply his construction, or take the one
from~\cite{SalWalTLCA}.

The transfer theorem says that for a fixed finite vocabulary of terms, an
MSOL formula $\f$ can be effectively transformed into an MSOL formula
$\wh\f$ such that for every term $M$ of type $o$ over the fixed vocabulary: $M$
satisfies $\wh\f$ iff the B\"ohm tree of M satisfies $\f$. Since the MSOL
theory of a term, that is a finite graph, is decidable, the transfer
theorem implies decidability of MSOL theory of B\"ohm trees of $\lambda
Y$-terms. As shown in~\cite{SalWalTransfer} it gives also a
number of other results.

A transfer theorem for wMSOL can be deduced from our model
construction. For every wMSOL formula $\f$ we need to find a formula
$\wh\f$ as above. For this we transform $\f$ into a weak alternating automaton
$\Aa$, and construct a model $\Dd_\f$ based on $\Aa$. Thanks to the
restriction on the vocabulary, it is quite easy to write for every
element $d$ of the model $\Dd_\f$ a wMSOL formula $\a_d$ such that for
every term $M$ of type $o$ in the restricted vocabulary:
$M\sat\a_d$ iff $\sem{M}^{\Dd_\f}=d$. The formula $\wh\f$ is then just a
disjunction $\Lor_{d\in F}\a_d$, where $F$ is the set elements of
$\Dd_\f$ characterizing terms whose B\"ohm tree satisfies $\f$.


The fixpoints in our models are non-extremal: they are neither the
least nor the greatest fixpoints. From \cite{SalWalTLCA} we
know that this is unavoidable. We are aware of very few works
considering such cases.  Our models are an instance of cartesian
closed categories with internal fixpoint operation as studied by Bloom
and Esik~\cite{bloom96:_fixed}. Our model satisfies not only Conway
identities but also a generalization of the \emph{commutative axioms}
of iteration theories~\cite{bloom93:_iterat_theor}. Thus it is
possible to give semantics to the infinitary $\lambda$-calculus in our
models.  It is an essential step towards obtaining an algebraic
framework for weak regular
languages~\cite{blumensath13:_rabin_tree_theor}.

\bibliographystyle{plain}
\bibliography{biblio}

%

\end{document}
